\documentclass[a4paper,english,cleveref, autoref, thm-restate]{lipics-v2021}
\usepackage{pgfplots}
\usepackage{tikz}

\usetikzlibrary{decorations.pathmorphing}
\usetikzlibrary{decorations.markings}
\usetikzlibrary{calc}
\usetikzlibrary{arrows.meta}
\tikzset{city/.style={black, circle, draw, inner sep=1.2pt}}
\makeatletter
\newcommand\footnoteref[1]{\protected@xdef\@thefnmark{\ref{#1}}\@footnotemark}
\makeatother
\usepackage{subcaption}
\usepackage[most]{tcolorbox}
\newtcolorbox{problem}{
  colback=white,
  colframe=black,
  boxrule=0.8pt,
  arc=2pt,
  left=6pt,
  right=6pt,
  top=6pt,
  bottom=6pt
}
\usepackage{enumitem}
\usepackage{thmtools}
\usepackage{cleveref}
\usepackage{algorithm}
\usepackage{algpseudocode}
\usepackage{xspace}
\usepackage[color=green!10!white,textsize=small]{todonotes}

\newcommand{\I}{\ensuremath{\mathcal{I}}\xspace}

\newcommand{\C}{\ensuremath{{\cal C}}\xspace}

\newcommand{\BC}{\textsc{Branching Cost}\xspace}
\newcommand{\BT}{\textsc{Branching Trajectory}\xspace}
\newcommand{\MT}{\textsc{Multipoint Trajectory}\xspace}
\pgfplotsset{compat=1.18}

\definecolor{red-brown}{rgb}{0.65, 0.16, 0.16}
\definecolor{darkspringgreen}{rgb}{0.09, 0.45, 0.27}
\colorlet{shadecolor}{gray!50}

\newcommand\journal[1]{}

\newcommand{\dmax}{\delta_{\text{max}}^{s\to s'}}
\newcommand{\dmin}{\delta_{\text{min}}^{s\to s'}}

\newcommand{\drefmax}{\delta^{s\to s'}_{\text{{\bf r}max}}}
\newcommand{\drefmin}{\delta^{s\to s'}_{\text{{\bf r}min}}}
\newcommand{\tsesi}{t-s'+s}

\DeclareMathOperator*{\argmin}{arg\,min}

\ccsdesc[500]{Mathematics of computing~Discrete mathematics}
\ccsdesc[500]{Theory of computation~Computational geometry}
\keywords{Discrete geometry, Optimal trajectory, Motion planning, Acceleration}

\nolinenumbers
\hideLIPIcs

\title{Optimal Trajectories in Discrete Space with Acceleration Constraints}

\author{Arnaud Casteigts}{Department of Computer Science, University of Geneva, Switzerland}{arnaud.casteigts@unige.ch}{}{}
\author{Matteo De Francesco}{Department of Computer Science, University of Geneva, Switzerland}{matteo.defrancesco@unige.ch}{}{}
\author{Pierre Leone}{Department of Computer Science, University of Geneva, Switzerland}{pierre.leone@unige.ch}{}{}

\authorrunning{A. Casteigts and M. De Francesco and P. Leone} 

\begin{document}
\maketitle

\begin{abstract}
  In a recreative column of the Scientific American, Martin Gardner presented in 1973 a game consisting of computing an optimal trajectory for a vehicle on a race track, subject to acceleration constraints in discrete space~$\mathbb{Z}^2$. 
  In this acceleration model, each step consists of changing the position of the vehicle by a vector in $\mathbb{Z}^2$, with the constraints that two consecutive vectors differ by at most one unit in each dimension.
  We investigate three problems related to this model in arbitrary dimension in open space (no obstacles), where a \emph{configuration} of the vehicle consists of its current position and the last-used vector (concretely, a value in $\mathbb{Z}^d \times \mathbb{Z}^d$). The three problems are the following. In \BC, two configurations are given and the goal is to compute the minimum number of intermediate configurations (length of a trajectory) between the two configurations. \BT has the same input and asks for a description of the corresponding trajectory.
  Finally, \MT asks for an optimal trajectory that visits given points $p_1,\dots,p_n$ \emph{in a prescribed order}, starting and ending with zero-speed configurations.
  
  We obtain a variety of results. First, we revisit known approaches solving \BC in 2D, clarifying the analysis and showing that this problem can be solved in constant time in any fixed number of dimensions $d$ (more generally, in $O(d \log d)$ time). Then, we show that \BT can also be solved in constant time for any fixed $d$, despite the fact that the length of the trajectory is not constant. A key ingredient here is to show that there always exists \emph{one} optimal trajectory that can be compactly represented by $O(1)$ intermediate configurations. Finally, we turn our attention to \MT.
  Paradoxically, the absence of obstacles poses new challenges here, in comparison to previous settings where having walls on the race track dramatically reduces the search space for a trajectory. More precisely, we collect theoretical and experimental evidence that the speed cannot be trivially bounded; that local decisions may be impacted by points that are arbitrarily far in the visit order; and that an optimal trajectory may require significant excursions out of the convex hull of the points. In this context, we still establish conservative speed bounds that a natural dynamic programming (DP) algorithm can exploit to solve reasonably large instances efficiently.
\end{abstract}

\newpage
\section{Introduction}
\label{sec:intro}

The problem of computing optimal trajectories under constraints has a long history. One of the first occurrences is the brachistochrone problem posed in 1696~\cite{sussmann_brachystochrone}, which seeks the curve of fastest descent between two points under gravity. Since then, trajectory optimization has found applications across a wide range of fields, including robotics, ballistics, autonomous vehicles, and aerospace engineering. It has also been addressed using diverse approaches. Continuous methods from control theory are perhaps the most representative (see, e.g.~\cite{fawadarminroutesoptimizatio,valli2024continuoustimeoptimalcontroltrajectory,gong_oc_applications}). In particular, the subfield of kinodynamics focuses on motion planning under physical constraints on velocity, acceleration, and force/torque (see, e.g.~\cite{lee2024trajectorymanifoldoptimizationfast, 
  ortizharo2024idbrrtsamplingbasedkinodynamicmotion, wahba2024kinodynamicmotionplanningteam,malik_robot_trajectory} or \cite{kinodynamic_planning} for a survey).
Optimization methods~\cite{degroot2024topologydrivenparalleltrajectoryoptimization,discretecontinuousoptim}, and more recently, reinforcement learning~\cite{Trauth_2024, 9597689,McMahon_2022} (see~\cite{wangsurveymotion} for a survey),
are also commonly used in motion planning problems. The physical models in these works typically consider continuous spaces and specific applications. Less is known about discrete models and their general features. 

In a recreative column of the Scientific American~\cite{gardner1973sim}, Martin Gardner presented a paper-and-pencil game consisting of a race track drawn on a grid, where each player moves a car located at integer points. The goal is to finish the race with less moves than the adversary, each move corresponding to a vector in $\mathbb{Z}^2$ that differs by at most one unit from the previous move (in each dimension). While being very simple, this model already captures basic inertia constraints like the impossibility to brake or accelerate fast and to make tight turns at high speed. A nice feature of this model is that it is entirely discrete and offers a finite number of choices to the vehicle in each step. As such, it is well suited for algorithmic investigation.

For some time, Gardner's model remained limited to recreational mathematics. An early unpublished manuscript~\cite{schmid2005vector} investigates a BFS-like algorithm for finding the optimal trajectory that does not hit an obstacle. Another blog post~\cite{blogernie} observes that the structural complexity of the problem ranges from \textsf{P} to \textsf{PSPACE}-complete (and possibly beyond) depending on the way obstacles are specified.
Along the same lines, Holzer and McKenzie~\cite{holzer2010computational} study the impact of whether or not the vehicle is allowed to touch the edge of the track, showing that a version of the non-touching variant is \textsf{NL}-complete.

Although fundamental, the above results on the impact of obstacle representation on the structural complexity are not directly relevant to our work, which focuses on designing fast algorithmic techniques in open space. In this respect, the more recent work in~\cite{bekos2018algorithms} is quite relevant.
The authors of that paper consider race tracks made of straight road segments separated by right-angle turns, making it possible to split a trajectory into different sections (one for each road segment) that join at low speed at the turns. The problem then reduces mostly to guessing the low-speed configurations used at the turns, and computing the number of configurations required in between. In particular, a strategy for computing such cost in constant time is outlined in~\cite{bekos2018algorithms}, which is relevant to our work.

The elegance of Gardner's model makes it also relevant in open space. In~\cite{vectortsp}, a version of the \textsc{TSP} called \textsc{Vector TSP} is studied, where the vehicle must visit a given set of points in $\mathbb{Z}^2$ using as few moves as possible in Gardner's model. Unsurprisingly, the problem turns out to be \textsf{NP}-hard, because the order in which the point are visited is not pre-determined. However, one of the subproblems in~\cite{vectortsp} consists of finding an optimal trajectory for a \emph{given} visit order, which is polynomial time solvable and corresponds to our third problem.

\subsection{Contributions}
\label{sec:contributions}

In this paper, we consider a $d$-dimensional generalization of Gardner's model in open space. The state of the vehicule at time $t$ is given by a \emph{configuration} $c_t=(p_t,v_t)$, where $p_t \in \mathbb{Z}^d$ is the position of the vehicle and $v_t \in \mathbb{Z}^d$ is its velocity.
A \emph{trajectory} is a sequence of configurations $c_0,\dots, c_\ell$ such that $v_{i+1}=p_{i+1} - p_{i}$ and $v_{i+1} - v_i \in \{-1,0,1\}^d$ for all $0 < i \le \ell$. The length $\ell$ of a trajectory is the number of configurations in the sequence, minus one. A trajectory is called shortest (or optimal) if this length is minimum, regardless of the distance traveled.

We are interested in the following three problems:

\begin{problem}
\textbf{Problem:} \textsc{Branching Cost}
\smallskip

\emph{Input:} Two configurations $c$ and $c'$, both in $\mathbb{Z}^d \times \mathbb{Z}^d$\\
\emph{Output:} Length of a shortest trajectory from $c$ to $c'$?
\end{problem}

\begin{problem}
\textbf{Problem:} \textsc{Branching Trajectory}
\smallskip

\emph{Input:} Two configurations $c$ and $c'$, both in $\mathbb{Z}^d \times \mathbb{Z}^d$\\
\emph{Output:} A shortest trajectory from $c$ to $c'$.
\end{problem}

\begin{problem}
\textbf{Problem:} \textsc{Multipoint Trajectory}
\smallskip

\emph{Input:} A sequence $x_0,x_1,\dots,x_n$ of points (or \textit{cities}) in Euclidean space $\mathbb{Z}^d$\\
\emph{Output:} A shortest trajectory $\mathcal{T}=c_0, \dots, c_\ell$ with $c_i=(p_i,v_i)$ such that $v_0=v_\ell=\overset{\to}{0}$ and $\mathcal{T}$ visits the points $x_0,x_1,\dots,x_n$ in this order. (A configuration $c = (p, v)$ \emph{visits} a point $x$ if $x$ lies on a segment between $p - v$ and $p$. See Section~\ref{sec:details} for further details.)
\end{problem}

We obtain a variety of results for all problems. For \BC, we start by recalling the analysis from Bekos \textit{et al.}~\cite{bekos2018algorithms}, which we improve in several respects. First, we address the problem in arbitrary dimensions. We also present a set of reduction rules that isolates a core subproblem. This subproblem is then analyzed in depth, resulting in explicit formulas that one can use to derive the result immediately. A key difference with~\cite{bekos2018algorithms} is that our subproblem does not consist of computing only the minimum branching length for a dimension, but rather all the feasible lengths at once for that dimension, making the original $d$-dimensional problem reducible to a mere intersection of one-dimensional results. Overall, this implies an algorithm that runs in $O(d \log d)$-time, thus in $O(1)$ time for any fixed~$d$. 

For \BT, a folklore approach for solving this problem is to find a shortest path in the \emph{configuration graph} $G=(V,A)$, where $V$ is the set of possible configurations and the arcs $A$ represent possible transitions between configurations (mentioned, for example, in~\cite{schmid2005vector}). However, this approach is not efficient. We show that \BT can actually be solved also in constant time for any fixed $d$. Namely, knowing the optimal length (using \BC), we show that there must exist, for each dimension, a one-dimensional trajectory realizing this length that can be fully characterized by $O(1)$ intermediate configurations, with monotone control in between, these intermediate configurations being computable in constant time. As the dimensions can be processed independently, this implies that we can compute in $O(d\log d)$ time a $O(d)$-size representation of the shortest trajectory.

Next, we turn our attention to \MT.
We start by presenting a simple DP algorithm that solves the problem. Then, using this algorithm and some observations, we start exploring a few properties of the problem. In particular, we ask to which extent one could, \emph{apriori}, bound the maximum speed required for an optimal trajectory and/or bound the space where an optimal trajectory lives, these parameters having a tremendous impact on the performance. Both questions turn out to be more intruiguing than expected. We provide a set of preliminary results in this direction, both experimental and theoretical, showing that local decisions may be impacted by points that are arbitrarily far in the visit order; and that an optimal trajectory may require significant excursions out of the convex hull of the points. In this context, we prove conservative bounds on the speed that allows the algorithm to solve fairly large instances. We also conjecture better bounds, under which the algorithm is shown to scale better in the number of points than the existing algorithm from~\cite{vectortsp}. Note that our results on the \BT problem also imply that our algorithm can output a characterization of the trajectory of size $O(n d)$, whatever the actual length of the trajectory, i.e. the output is linear in the number $n$ of cities for any fixed $d$. 

\subsubsection{Further background and assumptions}
\label{sec:details}

Following standard assumptions, we assume that basic arithmetic operations such as $-,+,\times,/,\sqrt{\ }$ can be performed in $O(1)$ time on integers and real numbers.
The complete description of the \MT problem requires a few additional definitions. As mentioned, a configuration $c=(p,v)$ \emph{visits} a point $x$ if $x$ lies on a segment between the points $p-v$ and $p$ (i.e., in terms of vector moves, the point must be traversed by the vector, not necessarily be located at its endpoint). A sequence of configurations $c_0,c_1, \dots, c_\ell$ visits a sequence of points $x_0,x_1,\dots,x_n$ \emph{in this order} if there exists a \emph{non-decreasing} sequence of indices $j_0, j_1, \dots, j_n \in [\ell]$ such that $c_{j_i}$ visits $x_i$ (in particular, this does not forbid a point to be traversed by an earlier configuration). If a configuration $c=(p,v)$ visits several consecutive points of the input sequence at once, then we require that the distance of these points from the point $p-v$ be increasing.
Finally, let us illustrate Gardner's model in a simple 2D scenario in Figure~\ref{fig:trajectory}.

\begin{figure}[h]
  \vspace{-5pt}
  \hspace{4cm}
    \begin{tikzpicture}[scale=.3]
      \draw[step=1cm,lightgray!50,ultra thin] (-1.2,-.2) grid (13.2,10.2);
      \tikzstyle{every node}=[draw, circle, inner sep=.7pt]
      \path (0,1) coordinate (x0) {};
      \path (5,1) node (x1) {};
      \path (8,2) node (x2) {};
      \path (10,4) node (x3) {};
      \path (11,6) node (x4) {};
      \path (11,7) node (x5) {};
      \path (10,8) node (x6) {};
      \path (8,8) node (x7) {};
      \path (5,8) node (x8) {};
      \path (1,8) coordinate (x9) {};
      
      \path (9,1) node[color=red!60] (x2bis) {};
      \path (11,3) node[color=red!60] (x3bis) {};
      \path (12,6) node[color=red!60] (x4bis) {};
      \path (12,8) node[color=red!60] (x5bis) {};
      \path (11,8) node[color=red!60] (x6bis) {};
      \path (9,9) node[color=red!60] (x7bis) {};
      
      \draw[dashed] (x0) -- (x1);
      \draw (x1) -- (x2) -- (x3) -- (x4) -- (x5) -- (x6) -- (x7) -- (x8);
      
      \draw[dotted,->] (x1) -- (x2bis);
      \draw[dotted,->] (x2) -- (x3bis);
      \draw[dotted,->] (x3) -- (x4bis);
      \draw[dotted,->] (x4) -- (x5bis);
      \draw[dotted,->] (x5) -- (x6bis);
      \draw[dotted,->] (x6) -- (x7bis);
      \draw[dashed] (x8) -- (x9);

    \end{tikzpicture}
  \caption{\label{fig:trajectory} Example of a simple 2D trajectory in Gardner's acceleration model.}
\end{figure}
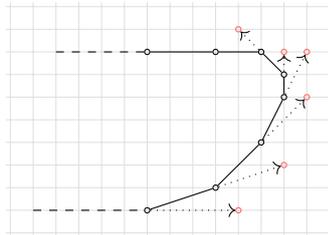

\subsubsection{Organization of the document}

The document is organized by problem: \BC is studied in Section~\ref{sec:two-point}, \linebreak\BT in Section~\ref{sec:BT}, and \MT in Section~\ref{sec:multi-point}. Due to space restrictions, some of the content of these sections is deferred to the appendices.

\section{Branching cost}
\label{sec:two-point}

In this section, we focus on the \textsc{Branching Cost} problem, illustrated in Figure~\ref{fig:traj_example} in two dimensions.
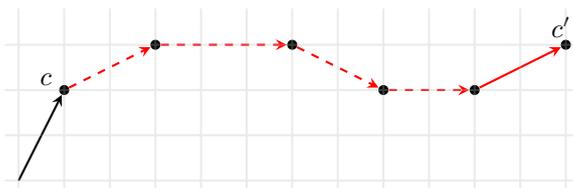
\begin{figure}[h]
  \centering
  \begin{tikzpicture}[thick,scale=0.6, every node/.style={scale=0.4, circle, fill,  
      minimum size=1pt}]
    \path (0,0) coordinate (a) {};
    \path (1,2) node (b) {};
    \path (3,3) node (c) {};
    \path (6,3) node (d) {};
    \path (8,2) node (e) {};
    \path (10,2) node (f) {};
    \path (12,3) node (g) {};
    
    \tikzstyle{every node}=[]
    \draw (a) edge[black,-stealth] node[black,pos=1,above left]{$c$} (b);
    \draw (b) edge[red,dashed,-stealth] (c);
    \draw (c) edge[red,dashed,-stealth] (d);
    \draw (d) edge[red,dashed,-stealth] (e);
    \draw (e) edge[red,dashed,-stealth] (f);
    \draw (f) edge[red,-stealth] node[pos=1,above]{\color{black}$c'$} (g);

    \path (-.3,-.3) coordinate (bidon);

    \draw[shadecolor,step=1cm,opacity=.3]   (current bounding box.south west) grid (current bounding box.north east);
  \end{tikzpicture}
  \caption{\label{fig:traj_example}Branching configurations $c=((1,2),(1,2))$ to configuration $c'=((12,3),(2,1))$. The solid vectors are imposed by the input. The dashed vectors are to be determined. The red vectors are counted in the length (here $5$).}
\end{figure}
We start by presenting a general strategy for reducing a $d$-dimensional instance to several 1-dimensional instances. This strategy consists of computing at once, for each dimension, the set of \emph{all} feasible lengths of a trajectory (as opposed to the optimal length only). The corresponding intervals can then be intersected to find an optimal length that is compatible with all the dimensions. In the 1-dimensional analysis, we present further case reductions that simplifies the problem to the point that it can be solved using closed formulas, given explicitly. Overall, the resulting algorithm runs in $O(d \log d)$ time.

\subsection{From $d$ dimensions to one dimension}
\label{sec:dto1d}

Clearly, the projection of a $d$-dimensional trajectory in any dimension is also a valid $1$-dimensional trajectory. However, this does not mean that the minimum length of a $d$-dimensional trajectory corresponds to the minimum length in any of the dimensions. Consider the example in Figure~\ref{fig:traj_example_dec}.
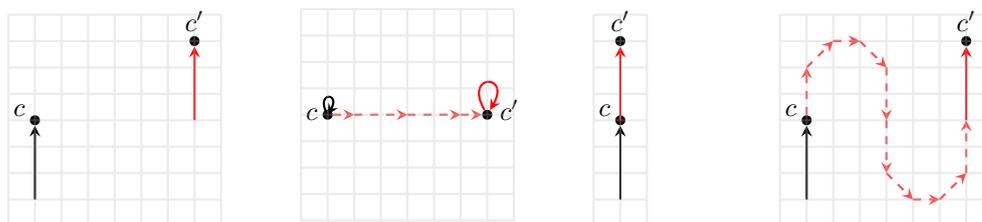
\begin{figure}[h]
  \centering
  \begin{subfigure}[!htb]{0.25\linewidth}
  \begin{tikzpicture}[thick,scale=0.35, every node/.style={scale=0.4, circle, fill,  
      minimum size=1pt}]
    \path (1,0) coordinate (a) {};
    \path (1,3) node (b) {};
    \path (1,5) coordinate (c) {};
    \path (2,6) coordinate (d) {};
    \path (3,6) coordinate (e) {};
    \path (4,5) coordinate (f) {};
    \path (4,3) coordinate (g) {};
    \path (4,1) coordinate (h) {};
    \path (5,0) coordinate (v) {};
    \path (6,0) coordinate (w) {};
    \path (7,1) coordinate (x) {};
    \path (7,3) coordinate (y) {};
    \path (7,6) node (z) {};
    
    \tikzstyle{every node}=[]
    \draw (a) edge[black,-stealth] node[black,pos=1,above left]{$c$} (b);
    \draw (y) edge[red,-stealth] node[pos=1,above=2pt]{\color{black}$c'$} (z);

    \path (-.3,-.3) coordinate (bidon);

    \draw[shadecolor,step=1cm,opacity=.3]   (0,-1) grid (8,7);
  \end{tikzpicture}
\end{subfigure}
~
  \begin{subfigure}[!htb]{0.25\linewidth}
  \tikzset{every loop/.style={min distance=10mm,in=60,out=120,looseness=20}}
  \begin{tikzpicture}[thick,scale=0.35, every node/.style={scale=0.4, circle, fill,  
      minimum size=1pt}]
    \path (1,0) coordinate (a) {};
    \path (1,0) node (b) {};
    \path (2,0) coordinate (c) {};
    \path (4,0) coordinate (d) {};
    \path (6,0) coordinate (e) {};
    \path (7,0) node (f) {};
    
    \tikzstyle{every node}=[]
    \draw (a) edge[loop,black,-stealth] node[black,pos=1,left]{$c$} (b);
    \draw (b) edge[red!70,dashed,-stealth] (c);
    \draw (c) edge[red!70,dashed,-stealth] (d);
    \draw (d) edge[red!70,dashed,-stealth] (e);
    \draw (e) edge[red!70,dashed,-stealth] (f);
    \draw (f) edge[loop,red,-stealth] node[pos=1,right]{\color{black}$c'$} (f);

    \path (-.3,-.3) coordinate (bidon);

    \draw[shadecolor,step=1cm,opacity=.3]   (0,-4) grid (8,4);
  \end{tikzpicture}
\end{subfigure}
~
  \begin{subfigure}[!htb]{0.15\linewidth}
  \begin{tikzpicture}[thick,scale=0.35, every node/.style={scale=0.4, circle, fill,  
      minimum size=1pt}]
    \path (1,0) coordinate (a) {};
    \path (1,3) node (b) {};
    \path (1,3) coordinate (y) {};
    \path (1,6) node (z) {};
    
    \tikzstyle{every node}=[]
    \draw (a) edge[black,-stealth] node[black,pos=1,above left]{$c$} (b);
    \draw (y) edge[red,-stealth] node[pos=1,above=2pt]{\color{black}$c'$} (z);

    \path (-.3,-.3) coordinate (bidon);

    \draw[shadecolor,step=1cm,opacity=.3]   (0,-1) grid (2,7);
  \end{tikzpicture}
\end{subfigure}
~
  \begin{subfigure}[!htb]{0.25\linewidth}
  \begin{tikzpicture}[thick,scale=0.35, every node/.style={scale=0.4, circle, fill,  
      minimum size=1pt}]
    \path (1,0) coordinate (a) {};
    \path (1,3) node (b) {};
    \path (1,5) coordinate (c) {};
    \path (2,6) coordinate (d) {};
    \path (3,6) coordinate (e) {};
    \path (4,5) coordinate (f) {};
    \path (4,3) coordinate (g) {};
    \path (4,1) coordinate (h) {};
    \path (5,0) coordinate (v) {};
    \path (6,0) coordinate (w) {};
    \path (7,1) coordinate (x) {};
    \path (7,3) coordinate (y) {};
    \path (7,6) node (z) {};
    
    \tikzstyle{every node}=[]
    \draw (a) edge[black,-stealth] node[black,pos=1,above left]{$c$} (b);
    \draw (b) edge[red!70,dashed,-stealth] (c);
    \draw (c) edge[red!70,dashed,-stealth] (d);
    \draw (d) edge[red!70,dashed,-stealth] (e);
    \draw (e) edge[red!70,dashed,-stealth] (f);
    \draw (f) edge[red!70,dashed,-stealth] (g);
    \draw (g) edge[red!70,dashed,-stealth] (h);
    \draw (d) edge[red!70,dashed,-stealth] (e);
    \draw (h) edge[red!70,dashed,-stealth] (v);
    \draw (v) edge[red!70,dashed,-stealth] (w);
    \draw (w) edge[red!70,dashed,-stealth] (x);
    \draw (x) edge[red!70,dashed,-stealth] (y);
    \draw (y) edge[red,-stealth] node[pos=1,above=2pt]{\color{black}$c'$} (z);

    \path (-.3,-.3) coordinate (bidon);

    \draw[shadecolor,step=1cm,opacity=.3]   (0,-1) grid (8,7);
  \end{tikzpicture}
\end{subfigure}
  \caption{From left to right: A $2D$ instance -- An optimal trajectory in the $x$-dimension (length~$5$) -- An optimal trajectory in the $y$-dimension (length $1$) -- An optimal trajectory for the original $2D$ instance (length $11$).}
  \label{fig:traj_example_dec}
\end{figure}
The minimum number of configurations in the $x$-dimension is $5$, and any length larger than $5$ is also feasible (e.g., by duplicating the last configuration). The situation in the $y$-dimension is more complex. Here, either the branching is made in a single step, or it takes at least $11$ steps for the vehicle to decelerate, go back, and reaccelerate appropriately. As a result, the smallest compatible length in both dimensions is $11$.

The approach of~\cite{bekos2018algorithms} to solve this problem in $2D$ consists of computing the minimum length in the first dimension, then examine whether this length is feasible in the second dimension. If not, compute the next feasible length in the second dimension and examine whether this length is feasible in the first dimension. If not, the next feasible length in the first dimension is the answer. 

The reason why this strategy always works is not immediate. As we will see, it can be explained by the fact that the feasible lengths in each dimension can always be described by at most two contiguous intervals. This suggests a more elegant approach for combining dimensions, where instead of determining only the minimum length, we compute at once, for each dimension, the entire feasibility interval $\I_i$. The result then follows immediately by intersecting these intervals over all dimensions and returning the minimum value of the intersection.
Beyond its simplicity, an advantage of this solution is that it generalizes straightforwardly to any number of dimensions: whatever $d$, the minimum length of a trajectory in $d$ dimensions is $\min \{\cap_{i\in[d]}\, \I_i\}$.

Using standard algorithms combining endpoint-sorting and scanning methods like~\cite{marzullo1984}, the intersection of $d$ multi-intervals, each containing $O(1)$ simple intervals, can be performed in time $O(d \log d)$---namely, $O(d \log d)$ time to sort the endpoints of the intervals, then $O(d)$ time to scan these endpoints and produce the intersection.
Provided that $1$-dimensional instances can be solved in constant time, this gives an overall $O(d \log d)$ running time.

\subsection{One dimension}
\label{sec:1d}

The strategy suggested above requires to solve a more general one-dimensional problem than the one in~\cite{bekos2018algorithms}; namely, computing all the feasible lengths of a trajectory between two configurations, instead of only the minimum length larger than some value. We describe how to solve this problem now. Observe that in one dimension, the configuration is just a pair $c=(x,s)$ of integers, where $x$ is the position and $s$ the speed (positive or negative) of the vehicle. Given $c=(x,s)$ and $c'=(x',s')$, our goal is thus to compute the feasible lengths of a trajectory from $c$ (excluded) to $c'$ (included), represented by a multi-interval $\I(c,c')$.

We use the following notations. For a certain multi-interval $\mathcal{I}$ and integer $k$, we write $\mathcal{I} + k$ for the multi-interval whose bounds are the same as $\mathcal{I}$, shifted by $k$ units. For example, if $\mathcal{I} = [2,5]\cup[8,\infty)$, then $\mathcal{I} + 2 = [4,7]\cup[10,\infty)$. We also use $\downarrow$$(s)=(|s|\cdot(|s|-1)/2)$ for the minimum braking distance from a certain speed $s$ to full stop, and $\uparrow$$(s)=(|s|\cdot(|s|+1)/2)$ for the minimum acceleration distance from zero speed to a certain speed $s$. 

\subsubsection{Case reductions}
\label{sec:case-reduction}

We will show that $\I(c,c')$ can either be computed directly in constant time, or be reduced in constant time to a special case where $x=0, x'\ge x, s \ge 0,$ and $s' \ge 0$ (handled in Section~\ref{sec:main-case}), this case being sufficiently restricted to subsequently obtain explicit formulas. To start, observe that we can trivially reduce to the case that $x=0$ and $x' \ge x$:

\begin{lemma}[$x=0$]
	\label{lem:reduce1}
	For all $(x,s)$ and $(x',s')$,  $\I((x,s),(x',s')) = \I((0,s),(x'-x,s'))$ 
\end{lemma}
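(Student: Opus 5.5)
The plan is to show that translating every position by the same amount maps trajectories to trajectories and preserves length. Then each length feasible for one pair of configurations is feasible for the other, and the two multi-intervals coincide.

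First I fix the one-dimensional notion of a trajectory of length $\ell$ from $c=(x,s)$ to $c'=(x',s')$. It is a sequence $(x_0,s_0),\dots,(x_\ell,s_\ell)$ with $(x_0,s_0)=(x,s)$ and $(x_\ell,s_\ell)=(x',s')$, subject to two constraints for all $0<i\le\ell$: $s_i=x_i-x_{i-1}$, and $s_i-s_{i-1}\in\{-1,0,1\}$. Given such a sequence, I define the translated sequence $(x_i-x,\,s_i)$ for $i=0,\dots,\ell$.

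The key observation is that both constraints involve positions only through differences $x_i-x_{i-1}$, which are unchanged by the shift, and velocities are left untouched. Hence the translated sequence is a valid trajectory of the same length $\ell$. It starts at $(0,s)$ and ends at $(x'-x,s')$. This shows $\I((x,s),(x',s'))\subseteq\I((0,s),(x'-x,s'))$.

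For the reverse inclusion I apply the same argument with the shift $+x$, which maps trajectories from $(0,s)$ to $(x'-x,s')$ back to trajectories from $(x,s)$ to $(x',s')$. Equivalently, the translation is a length-preserving bijection between the two sets of trajectories.

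There is no real obstacle. The only care needed is to note that the length convention (from $c$ excluded to $c'$ included) is untouched by the translation, and that the degenerate case $\ell=0$, which requires $c=c'$, is handled consistently on both sides.
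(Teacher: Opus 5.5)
Your proposal is correct and matches the paper's proof, which simply observes that the validity of a trajectory is invariant under translation by $-x$. You spell out why: both constraints involve positions only through differences, and velocities are untouched. You also make the reverse inclusion explicit via the shift by $+x$.
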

\begin{proof}
  The validity of a trajectory is invariant by translation (in particular, by $-x$).
\end{proof}

\begin{lemma}[$x'\ge x$]
	\label{lem:reduce2}
	For all $(x,s)$ and $(x',s')$,  $\I((x,s),(x',s')) = \I((-x,-s),(-x',-s'))$ 
\end{lemma}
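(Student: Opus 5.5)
The plan is to mirror the proof of Lemma~\ref{lem:reduce1}: instead of a translation, use the reflection $\phi(y)=-y$ of $\mathbb{Z}$. I will show that $\phi$ maps valid one-dimensional trajectories from $(x,s)$ to $(x',s')$ bijectively onto valid trajectories from $(-x,-s)$ to $(-x',-s')$, and that it preserves length. Equality of the two multi-intervals of feasible lengths then follows at once.

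Concretely, take a trajectory $(x_0,s_0)=(x,s),(x_1,s_1),\dots,(x_\ell,s_\ell)=(x',s')$ and define its image $(-x_i,-s_i)$ for $0\le i\le \ell$. I check the two defining constraints one at a time.

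\begin{itemize}
\item Velocity consistency: from $s_{i+1}=x_{i+1}-x_i$ we get $-s_{i+1}=(-x_{i+1})-(-x_i)$.
\item Acceleration constraint: $(-s_{i+1})-(-s_i)=-(s_{i+1}-s_i)$. This lies in $\{-1,0,1\}$ because that set is symmetric under negation.
\end{itemize}

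So the image is a valid trajectory. Its endpoints are $(-x,-s)$ and $(-x',-s')$, and it has the same number of configurations, hence the same length $\ell$. This gives $\I((x,s),(x',s'))\subseteq \I((-x,-s),(-x',-s'))$. Since $\phi$ is an involution, applying the same argument to the reflected instance yields the reverse inclusion.

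No step is a genuine obstacle; the only point needing care is that the allowed set of velocity changes, $\{-1,0,1\}$, is symmetric. Finally, to justify the label ``$x'\ge x$'': combined with Lemma~\ref{lem:reduce1}, if $x'-x<0$ we reflect to obtain an instance whose target position is $x-x'>0$. This is the reduction used subsequently.
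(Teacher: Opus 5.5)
Your proposal is correct and follows the same approach as the paper, whose proof is the one-line observation that the validity of a trajectory is invariant under symmetry with respect to the origin. You simply spell out the verification that reflection preserves both the velocity-consistency and the acceleration constraints (and is a length-preserving involution), which is exactly what that sentence asserts.
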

\begin{proof}
  The validity of a trajectory is invariant by symmetry with respect to the origin.
\end{proof}

\noindent
By Lemma~\ref{lem:reduce1} and~\ref{lem:reduce2}, we assume in the rest of this section that $x=0$ and $x'\ge 0$. The values of $s$ and $s'$ then fall into five possible cases:

\begin{itemize}
\item \textbf{Case 0:} $s = s' = 0$
\item \textbf{Case 1:} $s \le 0$ and $s' \le 0$ (and not case 0)
\item \textbf{Case 2:} $s<0$ and $s'>0$
\item \textbf{Case 3:} $s>0$ and $s'<0$
\item \textbf{Case 4:} $s \ge 0$ and $s' \ge 0$ (and not case 0)
\end{itemize}

\begin{lemma}
  \label{lem:reduce-all}
  Cases 0 to 3 can be solved or reduced to Case 4, both in constant time.
\end{lemma}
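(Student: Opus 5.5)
We treat the four cases separately, using time reversal together with Lemmas~\ref{lem:reduce1} and~\ref{lem:reduce2}. The basic tool is a forced prefix: whenever the vehicle must change the sign of its speed, it has to pass through zero speed. Monotone braking is the fastest way to do this, and it lets us split the interval of feasible lengths as a shift of a smaller instance. We also use reversal. If $(x_0,s_0),\dots,(x_\ell,s_\ell)$ is a trajectory, then reading it backwards with the velocity convention adjusted gives a trajectory from $(x',-s')$ to $(x,-s)$, up to a shift of position by the last velocity. This should give $\I((0,s),(x',s')) = \I((0,-s'),(x'-s'+s,-s))$ or a similar identity. We would verify the exact offset first, since the paper's convention $v_{i+1}=p_{i+1}-p_i$ makes configurations record the incoming vector.

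\textbf{Case 0} ($s=s'=0$). Here the answer can be computed directly. For $x'=0$, we show that every length $\ell\ge 0$ is feasible: $\ell=0$ is the empty trajectory, and we can stay at rest for any number of steps. For $x'>0$, we accelerate and decelerate with a plateau. The minimum length $\ell^*$ is the smallest $\ell$ such that the maximal distance reachable in $\ell$ steps from rest back to rest, about $\lfloor \ell/2\rfloor\lceil \ell/2\rceil$, is at least $x'$. Every distance between $0$ and that maximum is reachable in $\ell$ steps by lowering individual speeds by one unit, so the feasible set is $[\ell^*,\infty)$. Alternatively, Case 0 is a subcase of Case 4 with $s=s'=0$, and we could simply include it there. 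We would choose the latter route if the Case 4 formulas cover it.

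\textbf{Case 1} ($s\le 0$, $s'\le 0$). Apply Lemma~\ref{lem:reduce2} followed by Lemma~\ref{lem:reduce1}, or apply reversal. Reversal maps the speeds to $-s'\ge 0$ and $-s\ge 0$. The new target position is $x'+s-s'$ up to the offset, which might be negative. If it is negative, a second application of Lemma~\ref{lem:reduce2} would flip the speeds back, so we must pick the right combination. Reversal keeps the target positive precisely when the translated displacement is nonnegative. In the remaining subcase we use symmetry alone, which yields speeds $-s,-s'\ge 0$ but a negative displacement. That subcase is a genuine ``go backwards with positive speeds'' instance, which Case 4 must handle anyway if it allows $x'\ge 0$ with large speeds. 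We expect this sign bookkeeping to be the main obstacle, and we would settle it first by checking small examples.

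\textbf{Cases 2 and 3} (sign change). In Case 3 ($s>0$, $s'<0$), the trajectory must reach speed $0$ at some point. We claim that an optimal strategy, and more generally the whole feasible set, is obtained by first braking maximally from $s$ to $0$ in $|s|$ steps over distance $\downarrow(s)$ plus the last zero-speed step. After that, the rest is a Case 1 instance from $(\downarrow(s),0)$. This gives the candidate
\[ \I = \I\bigl((0,0),(x'-\downarrow(s)-\cdots,s')\bigr)+s . \]
Proving that the feasible set is exactly this shift, and not merely contained in it, needs an exchange argument. Any trajectory passes through a zero-speed configuration at some time $t\ge |s|$, and its prefix up to $t$ can be replaced by maximal braking followed by waiting or adjusting. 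This works only if the distance covered is compatible, which is the delicate point: the prefix of an arbitrary trajectory covers at least $\downarrow(s)$, so the position at the first zero crossing is at least $\downarrow(s)$. We would therefore argue instead via reachability. The set of configurations reachable from $(0,s)$ at time $t$ with speed $\le 0$ equals the set reachable from $(\downarrow(s),0)$ at time $t-s$, by a direct dominance argument. Case 2 is symmetric, either by reversal into Case 3 or by braking from $s<0$ to $0$ while moving away from $x'$ and then continuing as Case 4 with a larger displacement $x'+\downarrow(s)$. Each reduction uses $O(1)$ arithmetic, which gives the constant-time claim.
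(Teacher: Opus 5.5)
Your proposal has a genuine gap in Cases 2 and 3, and Case 1 is not actually reduced to Case 4.

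\textbf{Cases 2 and 3.} Your treatment rests on a false claim: braking maximally first does not preserve the feasible set, nor even the optimum, when the turning point required by $c'$ lies beyond the natural stopping point.

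Take $c=(0,2)$, $c'=(3,-1)$ (Case 3). Your recipe stops at $(\downarrow\!(2),0)=(1,0)$ after $2$ steps. The vehicle must then still reach position $4$ at rest and come back, for a total of $2+\lceil 2\sqrt{3}\rceil+1=7$. But the speed sequence $2,1,1,0,-1$ (positions $2,3,4,4,3$) reaches $c'$ in $5$ steps.

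The same example refutes your reachability ``equality''. The configuration $(4,0)$ is reachable from $(0,2)$ in $4$ steps. From $(1,0)$, however, no configuration with nonpositive speed beyond position $2$ is reachable in $2$ steps. Only one inclusion holds.

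Case 2 fails in the same way. For $c=(0,-1)$, $c'=(0,3)$, stopping first and then solving $(0,0)\to(0,3)$ as Case 4 gives $1+8=9$. The speeds $-2,-2,-1,-1,0,1,2,3$ reach $c'$ in $8$ steps.

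The missing idea is the paper's comparison of two reference points. In Case 3 these are $p_1=\downarrow\!(s)$ and $p_2=x'+\uparrow\!(s')$.
\begin{itemize}
\item If $p_1\ge p_2$, braking first is correct. The remainder $(p_1,0)\to(x',s')$ becomes Case 4 after symmetry and translation.
\item If $p_2>p_1$, the vehicle must not stop at $p_1$. The right decomposition is $\mathcal{I}(c,c')=\mathcal{I}((0,s),(p_2,0))+|s'|$. The prefix, which may accelerate before braking, is itself a Case 4 instance.
\end{itemize}
Case 2 is analogous, with $p_1=-\downarrow\!(s)$ and $p_2=x'-\uparrow\!(s')$.

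\textbf{Case 1.} Computed correctly, reversal gives
\[
\mathcal{I}((0,s),(x',s'))=\mathcal{I}((0,-s'),(s'-s-x',-s)).
\]
This is a Case 4 instance only when $x'\le |s|-|s'|$. In the generic situation (e.g.\ $s=s'=-1$, $x'=5$) you are left with nonnegative speeds and a negative displacement.

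That is not Case 4. Case 4, Algorithm~\ref{algo:minimum-t}, and Types A--D are set up for $\delta=x'\ge 0$. The fact that some Case 4 trajectories move backwards does not mean Case 4 covers negative displacements. So ``Case 4 must handle it anyway'' is an unproved extension. It can be repaired by checking that such instances always fall into Type D, but you have not given that argument.

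The paper avoids this by applying your own forced-prefix tool at both ends. In Case 1 the vehicle must turn twice. Since $p_1=-\downarrow\!(s)\le 0\le x'\le p_2=x'+\uparrow\!(s')$, braking as early as possible and starting the final leftward acceleration as late as possible do not conflict. This yields
\[
\mathcal{I}(c,c')=|s|+\mathcal{I}((0,0),(p_2-p_1,0))+|s'|,
\]
a Case 0 instance.

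Your Case 0 argument and the constant-time bookkeeping are fine.
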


\begin{proof}
  In this proof, we slightly abuse the terms left and right to denote the two orientations of the position axis (where infinite left is $-\infty$ and infinite right is $+\infty$).
\begin{itemize}
\item \textbf{Case 0.} Let $c=(0,0)$ and $c'=(x',0)$, with $x'\ge 0$. By Lemma~4 in~\cite{vectortsp}, covering a distance $d$ starting and ending at zero-speed takes exactly $\lceil 2\sqrt{d} \rceil$ steps. Moreover, as zero-speed configurations can be repeated arbitrarily many times, $\mathcal{I}(c,c')=[\lceil 2\sqrt{x'} \rceil,\infty)$.
\item \textbf{Case 1.} Let $c=(0,s)$ and $c'=(x',s')$ with $x'\ge 0, s\le 0,$ and $s'\le 0$. In this case, the vehicle must decelerate, make a first turn at some point $p_1<0$, then move right, then make a second turn at some point $p_2 > x'$, and finally move left towards $x'$. Between $p_1$ and $p_2$, the situation is like in Case 0, where the cost grows monotonically with the distance to be covered. Thus, the overall cost is minimized if $p_1=-\downarrow$$(s)$ is the first position where the vehicle can stop from $c$ and $p_2=x'+\uparrow$$(s')$ is the closest position from $x'$ where the vehicle can start at zero speed and reach $x'$ at speed $s'$. It follows that $\mathcal{I}(c,c')= s + \mathcal{I}((p_1,0),(p_2,0)) + s'$, where (by Lemma~\ref{lem:reduce1}) $\mathcal{I}((p_1,0),(p_2,0))=\mathcal{I}((0,0),(p_2-p_1,0))$, which is in Case 0.
\item \textbf{Case 2.} Let $c=(0,s)$ and $c'=(x',s')$ with $x'\ge 0, s<0,$ and $s'>0$. In this case, the vehicle must make a turn on the left side, and come back to $x'$. 
  Let $p_1= -\downarrow$$(s)$ be the first position where the vehicle can stop from $c$. Let $p_2=x'-\uparrow$$(s')$ be the closest position from $x'$ where the vehicle can start at zero speed and reach $x'$ at speed $s'$. If $p_1 \le p_2$, then the vehicle has sufficient time from $p_1$ to reach $x'$ at speed $s'$, thus the best trajectory is to turn at $p_1$. Otherwise, the vehicle must turn at $p_2 < p_1$. In the first case, this gives us $\mathcal{I}(c,c')=s + \mathcal{I}((p_1,0),(x',s'))$, where (by Lemma~\ref{lem:reduce1}) $\mathcal{I}((p_1,0),(x',s'))=\mathcal{I}((0,0),(x'-p_1,s'))$, which is in Case 4. In the second case, we have $\mathcal{I}(c,c')=\mathcal{I}((0,s),(p_2,0)) + s'$, where (by Lemma~\ref{lem:reduce2}) $\mathcal{I}((0,s),(p_2,0)) = \mathcal{I}((0,-s),(-p_2,0))$, which is in Case 4.
\item \textbf{Case 3.} Let $c=(0,s)$ and $c'=(x',s')$ with $x'\ge 0, s>0,$ and $s'<0$. The arguments are similar to the ones of Case 2, except that the two reference points are on the right side of~$x'$, namely $p_1=\downarrow$$(s)$ and $p_2=x'+\uparrow$$(s')$. Here, if $p_1 \ge p_2$, we have $\mathcal{I}(c,c')=s + \mathcal{I}((p_1,0),(x',s'))$, where (by Lemma~\ref{lem:reduce2} and Lemma~\ref{lem:reduce1}, in this order) $\mathcal{I}((p_1,0),(x',s')) = \mathcal{I}((-p_1,0),(-x',-s'))= \mathcal{I}((0,0),(-x'+p_1,-s'))$, which is in Case 4. If $p_2 > p_1$, we have $\mathcal{I}(c,c')=\mathcal{I}((0,s),(p_2,0)) + s'$, where (by Lemma~\ref{lem:reduce1}) $\mathcal{I}((0,s),(p_2,0))=\mathcal{I}((0,s),(p_2-x,0))$, which is again in Case 4.
\end{itemize}
\medskip

\noindent
All the operations involved in the above cases can be performed in constant time. The only operations for which this is not obviously true are interval-shifting operations of the form $\mathcal{I}+k$, where the running time depends on the number of intervals in $\mathcal{I}$. However, it will become clear from Algorithm~\ref{algo:minimum-t} in~\cref{sec:main-case} that Case~4 produces multi-intervals that contain at most $3=O(1)$ characteristic values.
\end{proof}

\subsubsection{The remaining case}
\label{sec:main-case}

Let $c=(x,s)$ and $c'=(x',s')$ be two configurations such that $x = 0$, $x' \ge 0$, $s \ge 0$, and $s' \ge 0$. 
Let $\delta=x'$ be the distance to be traveled.
The goal is to determine the feasible lengths of a trajectory $c_1,\dots,c_\ell$ such that $c_1$ is a valid successor of $c$ and $c_\ell=c'$.

The difficulty of this case is that a turn may or may not be needed, and the resulting feasible lengths may thus be described by one or two intervals. To build some intuition, consider a simple example where $c=(0,6)$ and $c'=(24,5)$. Due to $c$, the speed of $c_1$ can be either $5, 6,$ or $7$, the speed of $c_\ell=c'$ must be exactly $5$, and the distance to be covered is $\delta=24$. The shortest trajectory has length $4$, corresponding to the sequence of speeds $7+6+6+5$ ($=24$). There is also a solution of length $5$, corresponding to speeds $5+4+5+5+5$, or length $6$ using $5+4+3+3+4+5$. However, a trajectory of length $7$ is impossible, as any sum of seven terms respecting the model would cover a distance larger than $24$. In fact, the next feasible number of steps is $16$ steps, which corresponds to the vehicle decelerating, stopping, moving backward at negative speed, and coming back to $c'$ at speed $s'$, using the sequence $5+4+3+2+1+0-1-2-2-1+0+1+2+3+4+5$ ($=24$). As this trajectory contains a zero-speed configuration, this configuration can be repeated, so all lengths above $16$ are also feasible, thus the feasible lengths are $\mathcal{I}(c,c')=[4,6]\cup [16,\infty)$. 

\begin{figure}[h]
	\centering
\begin{tikzpicture}[xscale=.5,yscale=.14]
	\tikzstyle{every node}=[draw,circle,inner sep=.8pt]
	\path (1,5) node (min1){};
	\path (2,10) node (min2){};
	\path (3,14) node (min3){};
	\path (4,18) node (min4){};
	\path (5,21) node (min5){};
	\path (6,24) node (min6){};
	\path (7,26) node (min7){};
	\path (8,28) node (min8){};
	\path (9,29) node (min9){};
	\path (10,30) node (min10){};
	\path (11,30) node (min11){};
	\path (12,30) node (min12){};
	
	\path (13,29) node (min13){};
	\path (14,28) node (min14){};
	\path (15,26) node (min15){};
	\path (16,24) node (min16){};
	\path (22,2) coordinate (mininf){};

	\path (1,5) node (max1){};
	\path (2,11) node (max2){};
	\path (3,18) node (max3){};
	\path (4,25) node (max4){};
	\path (5,33) node (max5){};
	\path (5.4,37) coordinate (maxinf){};

	\tikzstyle{every node}=[draw,circle,fill=darkgray,inner sep=.8pt]
	\path (4,24) node (ok4){};
	\path (5,24) node (ok5){};
	\path (6,24) node (ok6){};
	\path (16,24) node (ok16){};
	
	\tikzstyle{every node}=[]
	\draw[->] (0,0) -- (18,0) node[right] {{\large Length $t$}};
	\draw[->] (0,0) -- (0,37) node[above] {Distance $\delta$};
	\draw[dotted] (0,24) -- (20,24);
	\path (0,24) node[left] {$24$};
	\path (1,-.5) node[below] (lab1){$t_{min}=1$};
	\path (4,-.5) node[below] (lab4){$4$};
	\path (6,-.5) node[below] (lab6){$6$};
	\path (16,-.5) node[below] (lab16){$16$};

	\draw[thin,gray] (1,5) -- (lab1);
	\draw[thin,gray] (ok4) -- (lab4);
	\draw[thin,gray] (ok6) -- (lab6);
	\draw[thin,gray] (ok16) -- (lab16);

	\draw[blue] (min1)--(min2)--(min3)--(min4)--(min5)--(min6)--(min7)--(min8)--(min9)--(min10)--(min11)--(min12)--(min13)--(min14)--(min15)--(min16);
	\draw[blue,dashed] (mininf) edge[bend right=3] (min16) node[right]{$\delta_{min}^{6\to 5}$};
	\draw[red] (max1)--(max2)--(max3)--(max4)--(max5);
	\draw[red,dashed] (max5)--(maxinf) node[above]{$\delta_{max}^{6\to 5}$};

	\tikzstyle{every node}=[darkgray,below right=-2pt,font=\tiny]
	\path (2,10) node{$5$+$5$};
	\path (3,14) node{$5$+$4$+$5$};
	\path (4,18) node{$5$+$4$+$4$+$5$};
	\path (5,21) node{$5$+$4$+$3$+$4$+$5$};
	\path (6,24) node[black]{$5$+$4$+$3$+$3$+$4$+$5$};
	\draw (min11) -- (11.2,31.5) node[pos=1,above right=-2pt]{$5$+$4$+$3$+$2$+$1$+$0$+$1$+$2$+$3$+$4$+$5$};
	\path (16,24) node[black,above right=-1pt]{$5$+$4$+$3$+$2$+$1$+$0$-$1$-$2$-$2$-$1$+$0$+$1$+$2$+$3$+$4$+$5$};

	\tikzstyle{every node}=[darkgray,left,font=\tiny]
	\path (1,5) node{$5$};
	\path (2,11) node{$6$+$5$};
	\path (3,18) node{$7$+$6$+$5$};
	\path (4,25) node[above left=-2pt]{$7$+$7$+$6$+$5$};
	\path (5,33) node{$7$+$8$+$7$+$6$+$5$};
	\draw (ok4)+(-1,-1) -- (ok4) node[pos=0,black,below left=-2pt]{$7$+$6$+$6$+$5$};
	\draw (ok5)+(.2,2) -- (ok5) node[xshift=8pt,pos=0,black,above]{$5$+$4$+$5$+$5$+$5$};
	
\end{tikzpicture}
\caption{\label{fig:feasible} Feasible numbers of steps from $(0,6)$ to $(24,5)$.}
\end{figure}
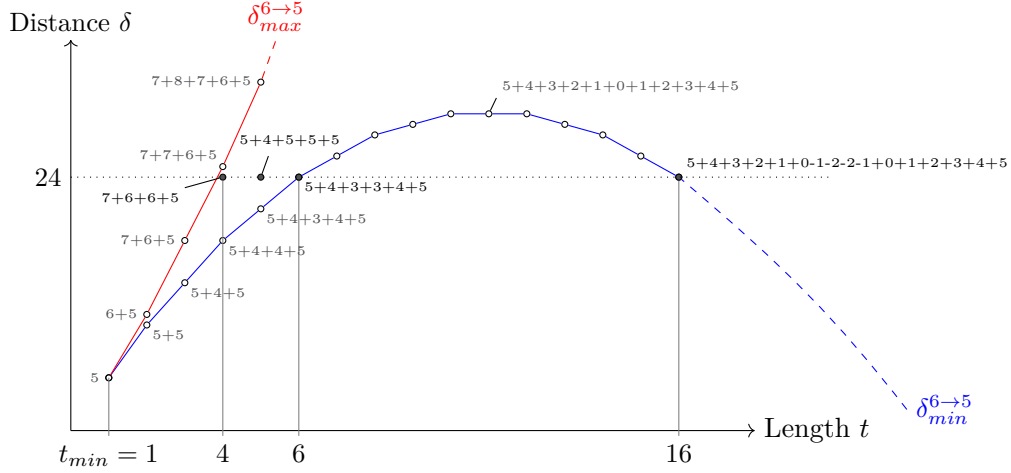

Let us take a more systematic look at what happens.
First observe that the vehicle must change its speed from $s$ to $s'$, so $t_{min}:=| s' - s |$ is an obvious lower bound on any feasible length.
For all $t \ge t_{min}$, one can define two functions $\delta_{max}^{s\to s'}(t)$ and $\delta_{min}^{s\to s'}(t)$ that give respectively the maximum and minimum distances the vehicle could cover in $t$ steps while starting at speed $s$ and ending at speed~$s'$. These two functions are depicted in red and blue in Figure~\ref{fig:feasible}. The fact that $x' > x$ and both $s$ and $s'$ are positive (due to the case reduction) imply that the functions $\dmin(t)$ and $\dmax(t)$ are always of the depicted form. Namely, $\dmax$ is always increasing, while $\dmin$ first increases and then decreases. In addition, $\dmin$ is below $\dmax$ for all $t\ge t_{min}$, and both coincide at $t=t_{min}$.
More precisely, both functions are quadratic and can be expressed through the following explicit equations (the proof is calculation-based and deferred to Appendix~\ref{sec:appendix_computation}).

    \begin{restatable}{lemma}{Formula}
      \label{lem:formula}
      $\dmin$ and $\dmax$ are characterized by the following equations:~\smallskip\\
	\indent
	$\dmin(t) = \bigl\lceil \dfrac{1}{4} \left(3s-t+s'\right) \left(\tsesi\right) + \alpha\bigr\rceil$ and\\
        \indent
	$\dmax(t) = \bigl\lfloor\dfrac{1}{4} \left(3s'+t+s\right) \left(\tsesi\right) + \alpha\bigr\rfloor$, \hfill where $\alpha = \dfrac{1}{2}(s'+s+1)\cdot \left(s'-s\right)$.
\end{restatable}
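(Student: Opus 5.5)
We model a trajectory of length $t$ from speed $s$ to speed $s'$ as a sequence of speeds $v_1,\dots,v_t$ with $v_t=s'$, $|v_1-s|\le 1$ and $|v_{i+1}-v_i|\le 1$. The distance covered is $\sum_i v_i$. Setting $v_0=s$, the constraint is simply that $v_0,\dots,v_t$ is a $1$-Lipschitz integer sequence with fixed endpoints $v_0=s$ and $v_t=s'$, and the distance is $\sum_{i=1}^t v_i$. The plan is to find pointwise extremal sequences. Every admissible sequence satisfies
\[ v_i \le \min(s+i,\; s'+(t-i)) \quad\text{and}\quad v_i \ge \max(s-i,\; s'-(t-i)). \]
Both envelopes are themselves $1$-Lipschitz sequences with the right endpoints, provided $t\ge t_{min}=|s'-s|$. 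Hence $\dmax(t)$ is the sum of the upper envelope (a "tent") and $\dmin(t)$ is the sum of the lower envelope (a "valley"). No optimization argument is needed beyond this pointwise bound. We allow negative values in the valley, which is consistent with the example $5+\dots+0-1-2-2-1+\dots$.

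Next we compute the tent sum in closed form. The two lines $s+i$ and $s'+t-i$ meet at $i^*=(t+s'-s)/2$. If $t+s'-s$ is even, the peak value is $h=(t+s+s')/2$, and the sum equals $\sum_{i=1}^{i^*}(s+i)+\sum_{i=i^*+1}^{t}(s'+t-i)$. Both pieces are arithmetic series; evaluating and simplifying gives
\[ \tfrac14(3s'+t+s)(t-s'+s)+\tfrac12(s'+s+1)(s'-s). \]
A cleaner route to this formula is to write the sum as $\frac{h(h+1)}2-\frac{s(s+1)}2+\frac{h(h-1)}2-\frac{s'(s'-1)}2$, that is, accelerate to $h$ and decelerate back to $s'$, with a correction for the endpoint convention in which $v_0$ is excluded and $v_t$ is included. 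One then checks that this expands to the claimed polynomial. This identification with $\alpha$ is the one delicate bookkeeping point: the asymmetry $3s'$ versus $s$ comes precisely from excluding $s$ and including $s'$. I would verify it on the example $s=6$, $s'=5$: at $t=1$ the value is $5$, and at $t=3$ it is $7+6+5=18$.

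If $t+s'-s$ is odd, the peak is a plateau of two equal values $\lfloor h\rfloor$ at indices $\lfloor i^*\rfloor$ and $\lceil i^*\rceil$. The exact sum then differs from the continuous quadratic by $1/4$. Since the quadratic takes values in $\frac14\mathbb{Z}$, the true integer sum is its floor. Checking that the deviation is exactly $-\frac14$, and not some other fraction that would break the floor, is the main obstacle. I would handle it by computing $q(t)$ modulo $1$ in both parity classes, using $(t-s'+s)\equiv (t+s'-s)\pmod 2$.

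The valley is the mirror image: replace $v_i$ by $-v_i$ and $(s,s')$ by $(-s,-s')$. Then $\dmin$ is minus the tent sum for the negated endpoints. Substituting into the $\dmax$ formula turns $(3s'+t+s)$ into $(3s-t+s')$ with a sign flip, while the factor $(t-s'+s)$ and the term $\alpha$ behave accordingly. Negation also turns the floor into a ceiling, which yields the stated expression for $\dmin$ with the same $\alpha$. As a final check, I would confirm that $\dmin(6)=24$ and $\dmin(16)=24$ in the running example.
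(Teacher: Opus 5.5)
Your argument is correct, and it is organized differently from the paper's. The appendix splits into the cases $s\le s'$ and $s\ge s'$. In each case it writes down the speed profile it takes to be extremal: a monotone ramp between $s$ and $s'$, combined with a symmetric dip for $\dmin$ or a symmetric bump for $\dmax$. It sums the arithmetic series separately for each parity of $t-s'+s$, and then checks that the odd-parity sum differs from the even-parity polynomial by exactly $\frac14$. $\dmin$ and $\dmax$ are computed independently, and the extremality of the profiles is asserted rather than proved.

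Your pointwise bounds $\max(s-i,s'-(t-i))\le v_i\le\min(s+i,s'+(t-i))$, together with the admissibility of both envelopes once $t\ge|s'-s|$, supply that optimality argument in one line. They also produce the same profiles with no case split on the order of $s$ and $s'$. Because the tent computation never uses signs, it gives the $\dmax$ formula for all integer endpoints, which is exactly what your reduction $\delta_{\text{min}}^{s\to s'}(t)=-\delta_{\text{max}}^{-s\to -s'}(t)$ requires. The paper's route is a hands-on computation repeated case by case; yours is shorter, uniform, and closes the optimality gap.

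There are two bookkeeping points to fix.

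\emph{The sign of the $\frac14$.} Computing $q(t)$ modulo $1$ only shows that $S-q(t)\in-\frac14+\mathbb{Z}$ in the odd case. It does not exclude $S=q(t)+\frac34$, which would turn the floor into a ceiling. This sign really is the delicate point: the appendix's prose swaps the two rounding directions, although the lemma statement has them right. Your own $h$-formula settles it. In the odd case the plateau height is $H=h-\frac12$, the sum is $H(H+1)-\frac{s(s+1)}2-\frac{s'(s'-1)}2$, and $H(H+1)=h^2-\frac14$, so $S=q(t)-\frac14$ directly.

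\emph{The negation step.} Negation alone does not map the factors term by term as you describe. It gives $\bigl\lceil\frac14(3s'+s-t)(t+s'-s)-\frac12(s+s'-1)(s'-s)\bigr\rceil$. This equals the stated expression only after expanding: both are $-\frac{t^2-(s'-s)^2}4+\frac{t(s+s')}2+\frac{s'-s}2$ before rounding. If you compose negation with time reversal, $\delta_{\text{min}}^{s\to s'}(t)=\delta_{\text{min}}^{s'\to s}(t)+s'-s$, you obtain the stated factored form, with the same $\alpha$, directly.
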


It is also not difficult to see that $\dmin(s+s')$ is a maximum value for $\dmin$, a fact used in the algorithm. (This is true before rounding, and the rounding function is monotonic.)

The above point of view suggests a natural geometric interpretation of the feasible lengths of a trajectory. Namely, the solutions are the integer values $t$ such that $\dmin(t) \leq \delta \leq \dmax(t)$, visually corresponding to the integer $x$-coordinate of points on the line $y=\delta$ that are above $\dmin$ and below $\dmax$ (bold dots in Figure~\ref{fig:feasible}).

Our algorithm for computing the intervals is given by~\cref{algo:minimum-t} and illustrated in~\cref{fig:four-types}. Mostly, this computation depends on how the line $y=\delta$ interacts with $\dmin$ and $\dmax$. If $\delta$ is always above or equal to $\dmin$ (Type A), then the solutions are the values of $t$ in the interval $[t_1,\infty)$, where $t_1$ is the first integer such that $\delta_{max}^{s\to s'}(t_1)\ge \delta$ (Lines~\ref{line:A} to~\ref{line:AA} in the algorithm). As $\dmax$ is a quadratic function, the equation $\dmax(t) = \delta$ has two solutions, of which we select the second.
\begin{figure}[h]
  \centering
  \quad\quad
\begin{tikzpicture}[scale=.6]
	\draw (0.4,0) -- (13.5,0);
	\draw (0.4,0) -- (0.4,7);
	\draw[red] (1,1.5) edge[bend right=12] (3.5,7);
	\draw[blue] (1,1.5) edge[bend left=18] (5,4.5);
	\draw[blue] (5,4.5) edge[bend left=20] (9.5,3.8);
	\draw[blue] (9.5,3.8) edge[bend left=15] (13,0);

	\draw[fill=gray!20,draw=none] (1,1.5) to[bend left=18] (5,4.5) to[bend left=20] (9.5,3.8) to[bend left=15] (13,0) -- (13.5,0) -- (13.5,7) -- (3.51,7) to[bend left=12] (1.01,1.5);
	\draw[blue] (1,1.5) edge[bend left=18] (5,4.5);
	\draw[blue] (5,4.5) edge[bend left=20] (9.5,3.8);
	\draw[blue] (9.5,3.8) edge[bend left=15] (13,0);

	\draw[dashed] (0.4,1) -- (14,1) node[right]{Type D: $[t_3,\infty)$};
	\draw[dashed] (0.4,2.7) -- (14,2.7) node[right]{Type C: $[t_3,\infty)$};
	\draw[dashed] (0.4,4) -- (14,4) node[right]{Type B: $[t_1,t_2]\cup [t_3,\infty)$};
	\draw[dashed] (0.4,5.5) -- (14,5.5) node[right]{Type A: $[t_1,\infty)$};

	\tikzstyle{every node}=[draw,circle,fill=darkgray,inner sep=.8pt]
	\path (12,1) node[fill=white] {};
	\path (12.6,1) node[label=80:$t_3$] {};
 
	\path (1.6,2.7) node[fill=white] {};
	\path (2.2,2.7) node[fill=white] {};
	\path (10.8,2.7) node[fill=white] {};
	\path (11.4,2.7) node[label=80:$t_3$] {};
	
	\path (2.2,4) node[fill=white] {};
	\path (2.8,4) node[pin={[pin distance=5pt]60:$t_1$}] {};
	\path (3.4,4) node[pin={[pin distance=5pt]60:$t_2$}] {};
	\path (4,4) node[fill=white] {};
	\path (8.8,4) node[fill=white] {};
	\path (9.4,4) node[label=80:$t_3$] {};

	\path (2.8,5.5) node[fill=white] {};
	\path (3.4,5.5) node[label=80:$t_1$] {};
      \end{tikzpicture}\medskip
      
      \caption{\label{fig:four-types}Four types of interactions between $\delta$, $\dmin(t)$, and $\dmax(t)$.}
    \end{figure}
    If $\delta$ is small enough that $\delta < \dmin(t_{min})$ (Type~D), then the solutions are the values of $t$ in the interval $[t_3,\infty)$, where $t_3$ is the first integer such that $\delta_{min}^{s\to s'}(t_3)\le \delta$. If none of these two cases apply, then the line $y=\delta$ must have two disjoint segments in the region between $\dmin$ and $\dmax$.
    \begin{algorithm}[!htb]
	\caption{Feasible times for branching two configurations in one dimension.}
	\label{algo:minimum-t}
	\begin{algorithmic} 
			\State {\bf Input}: two configurations $c = (0,s)$ and $c'=(x',s')$
			\State {\bf Output} feasible lengths $\mathcal{I}(c,c')$ of a branching trajectory from $c$ to $c'$.
          \State $\delta \gets x'$
          \State $t_1 \gets \lceil\max\{t \mid \dmax(t) = \delta\}\rceil$
          \If{$\delta \ge \dmin(s + s')$} \Comment{Type A}\label{line:A}
          \State \Return $[t_1, \infty)$\label{line:AA}
          \Else
          \State $t_3 \gets \lceil\max\{ t \mid \dmin(t) = \delta\}\rceil$
          \State $t_{min} \gets |s'-s|$ 
          \If{$\delta < \dmin(t_{min})$} \Comment{Type D}\label{line:D}
          \State \Return $[t_3,\infty)$\label{line:DD}
          \Else
          \State $t_2 \gets \lfloor\min\{ t \mid \dmin(t) = \delta\}\rfloor$
          \If{$t_2 \ge t_1$} \Comment{Type B}\label{line:B}
          \State \Return $[t_1,t_2]\cup[t_3,\infty)$\label{line:BB}
          \Else \Comment{Type C}\label{line:C}
          \State \Return $[t_3,\infty)$\label{line:CC}
          \EndIf\
          \EndIf\
          \EndIf\
	\end{algorithmic}
\end{algorithm}
However, the first intersection may or may not contain integer points. This can be detected using the value $t_2$, defined as the last integer $x$-coordinate before the line crosses $\dmin$ for the first time. If $t_2$ is larger or equal to $t_1$, then there is at least one integer value in this segment, so the feasible lengths are $[t_1,t_2]\cup [t_3,\infty]$ (possibly, with $t_2=t_1$). Otherwise, the first intersection can be ignored, and the solutions are the same as in type D, namely $[t_3, \infty)$.

\begin{lemma}
  \label{lem:case4-constant}
  Algorithm~\ref{algo:minimum-t} runs in constant time.
\end{lemma}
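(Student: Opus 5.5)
The argument is an operation count. Algorithm~\ref{algo:minimum-t} runs straight-line code: no loops, at most three nested conditionals. So it suffices to show that each assignment and each test takes constant time under our arithmetic model ($-,+,\times,/,\sqrt{\ }$ in $O(1)$ on integers and reals). Evaluating $\dmin$ or $\dmax$ at a given point, for instance $\dmin(s+s')$ in the Type~A test or $\dmin(t_{min})$ in the Type~D test, is immediate from the closed forms of Lemma~\ref{lem:formula}. Each is a constant number of additions, multiplications, a division by $4$ or $2$, and one floor or ceiling. The comparisons $t_2\ge t_1$, the computation of $t_{min}=|s'-s|$, and returning at most two intervals described by three values $t_1,t_2,t_3$ are clearly $O(1)$.

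The substantive step is computing $t_1$, $t_2$, $t_3$, which the algorithm defines as the largest or smallest solution of $\dmax(t)=\delta$ or $\dmin(t)=\delta$. I would drop the rounding and work with the real quadratics underneath Lemma~\ref{lem:formula}:
\[
q_{\max}(t)=\tfrac14(3s'+t+s)(t-s'+s)+\alpha, \qquad q_{\min}(t)=\tfrac14(3s-t+s')(t-s'+s)+\alpha .
\]
Each equation $q(t)=\delta$ is quadratic in $t$. Its roots come from the quadratic formula with a single square root, in $O(1)$ operations, and we then apply the prescribed floor or ceiling. The larger root of $q_{\max}=\delta$ gives $t_1$. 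The larger root of $q_{\min}=\delta$ gives $t_3$, and the smaller root gives $t_2$. Existence of real roots is guaranteed in the branches where they are used. $q_{\max}$ is an upward parabola and $\delta\ge 0$. $q_{\min}$ is a downward parabola whose maximum, attained at $s+s'$, exceeds $\delta$ in every branch other than Type~A, and $t_3$ and $t_2$ are only computed there.

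The main obstacle is justifying that the rounded root really is the first or last integer satisfying the rounded inequality. For example, $t_1$ should be the least integer $t\ge t_{min}$ with $\dmax(t)\ge\delta$. Since $\delta$ is an integer, $\lfloor q_{\max}(t)\rfloor\ge\delta$ is equivalent to $q_{\max}(t)\ge\delta$. Likewise, $\lceil q_{\min}(t)\rceil\le\delta$ is equivalent to $q_{\min}(t)\le\delta$. So the integer thresholds are exactly the ceilings or floors of the real roots, using monotonicity of $q_{\max}$ on $[t_{min},\infty)$ and of $q_{\min}$ on each side of $s+s'$. A possible boundary issue is a root below $t_{min}$, which would be handled by taking a $\max$ with $t_{min}$, itself an $O(1)$ fix.

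Combining these, the number of operations is bounded by an absolute constant, which proves the lemma. It also confirms the remark in Lemma~\ref{lem:reduce-all} that the output has at most three characteristic values, so the shifts $\I+k$ there are $O(1)$ as well.
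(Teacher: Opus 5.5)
Your proposal is correct and follows the paper's own argument: the algorithm is straight-line code, and $t_1,t_2,t_3$ are the rounded roots of the quadratic equations $q_{\max}(t)=\delta$ and $q_{\min}(t)=\delta$, each obtained in $O(1)$ operations via the quadratic formula. Your extra checks go beyond what the paper states and are sound: the existence of real roots in the branches where they are used, and the equivalence between rounded and unrounded inequalities for integer $\delta$. The $t_{min}$ boundary fix you mention is harmless but never needed. In every branch where $t_1$ or $t_2$ is used, $\delta \ge \dmin(t_{min}) = \dmax(t_{min})$ holds, and this already places the relevant roots at or above $t_{min}$.
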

\begin{proof}
As already discussed, both $\dmin$ and $\dmax$ are quadratic functions. Similarly, equations of the form $\dmin - \delta = 0$ or $\dmax - \delta = 0$ are quadratic, thus $t_1, t_2,$ and $t_3$ can be determined in constant time by extracting the first and second roots of such polynomials.
\end{proof}

To conclude, here are the explicit formulas for calculating the points $t_1, t_2,$ and $t_3$ as a function of $\delta, s$, and $s'$. 
	These formulas follow directly from the determinants of the quadratic equations $\dmin$ and $\dmax$, respectively $\mathrm{\Delta}_\text{min} = 2(s')^2 + 2s^2 + 2s' - 2s - 4\delta$ and $\mathrm{\Delta}_\text{max} = 2(s')^2 + 2s^2 - 2s' + 2s + 4\delta$, namely:
        
	$t_1= \lceil \max\{ -s'-s-\sqrt{\mathrm{\Delta}_\text{max}}, -s'-s+\sqrt{\mathrm{\Delta}_\text{max}} \} \rceil $
        
	$t_2= \lfloor \min\{ s'+s-\sqrt{\mathrm{\Delta}_\text{min}}, s'+s+\sqrt{\mathrm{\Delta}_\text{min}} \} \rfloor $
        
  $t_3= \lceil \max\{ s'+s-\sqrt{\mathrm{\Delta}_\text{min}}, s'+s+\sqrt{\mathrm{\Delta}_\text{min}} \} \rceil $

\section{Branching trajectory}
\label{sec:BT}

As already explained, \BT can be solved using a shortest path algorithm in the configuration graph, although this is not efficient. In this section, we show that this problem can be solved in constant time for any fixed $d$, despite the fact that the trajectory has a non-constant length. Namely, one can first use \BC to compute the optimal length of a trajectory. Then, we show that if the length $\ell$ is known, then there exists, for each dimension, a trajectory of length $\ell$ that can be fully characterized using at most $O(1)$ intermediate configurations, see $(\ref{eq:compact})$, and these configurations are computable in constant time. The resulting $d$-dimensional trajectory is then the direct combination of $d$ trajectories in 1D, which means it can be described using $O(d)$ one-dimensional configurations overall.

As the problem is independent in each dimension once the target length $\ell$ is known, we restrict the exposition in this section to the following problem: given two 1D configurations $c=(x,s)$, $c'=(x',s')$, and a feasible length $\ell$, the goal is to compute a compact description of a trajectory of length $\ell$ from $c$ to $c'$. Importantly, we apply the same case reduction procedure as the one exposed in Section~\ref{sec:case-reduction}. The key observation here is that these reductions involve only translations (Lemma~\ref{lem:reduce1}), symmetries w.r.t. the origin (Lemma~\ref{lem:reduce2}), and monotone segments of deceleration or acceleration (see the reduced forms of Cases 0 to 3 in Lemma~\ref{lem:reduce-all}). All of these operations can be interpreted without ambiguity at the level of the trajectories computed in this section. Thus, we assume $x=0, x'\ge 0, s\ge 0,$ and $s'\ge 0$.

\subsection{New distance functions and construction}
Let $\delta=x'$ be the distance to be covered and $\ell$ the desired length of a trajectory from $c$ to $c'$. Since $\ell$ is feasible, we have 
$\dmin(\ell) \leq \delta \leq \dmax(\ell)$.
In addition to $\dmin$ and $\dmax$, we define two other functions $\drefmin$ and $\drefmax$ satisfying $\dmin(\ell)\le \drefmin(\ell)\le \drefmax(\ell)\le \dmax(\ell)$ (see Figure~\ref{fig:distancesregions}). Precisely: 
\begin{itemize}
\item When $s < s'$, the sequence of configurations for $\drefmin$ maintains the same constant speed $s$ for $\ell-|s'-s|$ times, 
and then constantly increase its speed by 1 from $s$ to $s'$ for $|s'-s|$ times; instead for $\drefmax$, the sequence of 
configurations is composed by a constant increases of the speed from $s$ to $s'$ for $|s'-s|$ times, followed by keeping the 
same speed $s'$ for $\ell-|s'-s|$ times. Denoting with $\alpha_1 = (s'-s)(s+s'+1)/2$, clearly, both distances admit a 
closed form
	$\drefmin = s(\ell-|s'-s|) + \alpha_1$ and $\drefmax = s'(\ell-|s'-s|) + \alpha_1$.
Notice that the trajectories of length $\ell$ and distance $\drefmin(\ell)$ and $\drefmax(\ell)$ can be compactly ($O(1)$) represented by $(0,\ell-s'+s)(+,s'-s)$ and $(+,s'-s)(0,\ell-s'+s)$,
respectively, with $(0,s)/(+,s)/(-,s)$ means keeping the same speed/accelerating/decelerating for $s$ successive configurations. We show that such representation is possible for all distances.
 
\item When $s \geq s'$, the sequence of configurations for $\drefmin$ constantly decrease its speed from $s$ to $s'$ for $|s'-s|$ times, 
and then keep the same speed $s'$ for $\ell-|s'-s|$ times; instead for $\drefmax$, the sequence of configurations maintain the 
same speed $s$ for $\ell-|s'-s|$ times, and then constantly decrease its speed by 1 from $s$ to $s'$ for $|s'-s|$ times. Denoting 
with $\alpha_2 = (s-s')(s+s'-1)/2$, clearly, both distances admit a closed form	$\drefmin = s'(\ell-|s'-s|) + \alpha_2$ and $\drefmax = s(\ell-|s'-s|) + \alpha_2$. Similarly trajectories of length $\ell$ and 
distance $\drefmin(\ell)$ and $\drefmax(\ell)$ can be compactly ($O(1)$) represented by $(-,s-s')(0,\ell-s+s')$ and $(0,\ell-s+s')(-,s-s')$.
\end{itemize}

\begin{figure}[!htb]
	\centering
	\begin{subfigure}[t]{0.49\linewidth}
		\begin{tikzpicture}[thick, scale=0.7, every node/.style={scale=1}]
			\coordinate (s) at (0,1);
			\coordinate (e) at (5,2);
			\coordinate (dmin) at (2,-1);
			\coordinate (dmax) at (3,4); 

			\draw (-0.15,2) node[left]{$s'$}--(0.15,2);
			\draw (-0.15,1) node[left]{$s$}--(0.15,1);
			\draw (5,-0.15) node[below]{$\ell$}--(5,0.15);

			\draw[-stealth](0,-1) -- (0,4) node[thick, above] {speed};
			\draw[-stealth](0,0) -- (6.5,0) node[thick, above] {length};

			\draw (s) -- (dmax) -- (e);
			\draw (s) -- (dmin) -- (e);
			
			\draw [ultra thick, draw=none, fill=blue, opacity=0.1] (s) -- (4,1) -- (e) -- (dmin) -- cycle;

			\draw [ultra thick, draw=none, fill=brown, opacity=0.1] (s) -- (1,2) -- (e) -- (4,1) -- cycle;

			\draw [ultra thick, draw=none, fill=red, opacity=0.1] (1,2) -- (dmax) -- (e) -- cycle;

			\node [blue] at (2,0.5) {\bf A};
			\node [brown] at (2.5,1.5) {\bf B};
			\node [red] at (3,2.5) {\bf C};

			
			\draw [red, line width=0.4mm] (1,2) -- (dmax) -- (e);
			\draw [brown, line width=0.4mm] (1,2) -- (e);
			\draw [darkspringgreen, line width=0.4mm] (s) -- (4,1);
			\draw [blue, line width=0.4mm] (s) -- (dmin) -- (4,1);

			\draw [brown, line width=0.4mm, dashed,dash pattern=on 2pt off 2pt,dash phase=2pt] (0,1) -- (1,2);
			\draw [red, line width=0.4mm, dashed,dash pattern=on 2pt off 2pt] (0,1) -- (1,2);

			\draw [darkspringgreen, line width=0.4mm, dashed,dash pattern=on 2pt off 2pt,dash phase=2pt] (4,1) -- (e);
			\draw [blue, line width=0.4mm, dashed, dash pattern=on 2pt off 2pt] (4,1) -- (e);

			\tikzstyle{every node}=[draw,circle,fill=white,inner sep=.8pt]
			\path (s) node{};
			\path (0.5,1.5) node{};
			\path (1,2) node{};
			
			\path (1.5,2.5) node{};
			\path (2,3) node{};
			\path (2.5,3.5) node{};
			\path (3,4) node{};
			\path (3.5,3.5) node{};
			\path (4,3) node{};
			\path (4.5,2.5) node{};

			\path (1.5,2) node{};
			\path (2,2) node{};
			\path (2.5,2) node{};
			\path (3,2) node{};
			\path (3.5,2) node{};
			\path (4,2) node{};
			\path (4.5,2) node{};

			\path (0.5,1) node{};
			\path (1,1) node{};
			\path (1.5,1) node{};
			\path (2,1) node{};
			\path (2.5,1) node{};
			\path (3,1) node{};
			\path (3.5,1) node{};
			\path (4,1) node{};

			\path (0.5,0.5) node{};
			\path (1,0) node{};
			\path (1.5,-0.5) node{};
			\path (dmin) node{};
			\path (2.5,-0.5) node{};
			\path (3,0) node{};
			\path (3.5,0.5) node{};
			\path (4.5,1.5) node{};

			\path (5,2) node{};

		\end{tikzpicture}
	\end{subfigure}
	\begin{subfigure}[t]{0.49\linewidth}
		\begin{tikzpicture}[thick, scale=0.7, every node/.style={scale=1}]
			\coordinate (s) at (0,2);
			\coordinate (e) at (5,1);
			\coordinate (dmin) at (3,-1);
			\coordinate (dmax) at (2,4); 

			\draw (-0.15,2) node[left]{$s$}--(0.15,2);
			\draw (-0.15,1) node[left]{$s'$}--(0.15,1);
			\draw (5,-0.15) node[below]{$\ell$}--(5,0.15);

			\draw[-stealth](0,-1) -- (0,4) node[thick, above] {speed};
			\draw[-stealth](0,0) -- (6.5,0) node[thick, above] {length};

			\draw (s) -- (dmax) -- (e);
			\draw (s) -- (dmin) -- (e);
			
			\draw [ultra thick, draw=none, fill=blue, opacity=0.1] (s) -- (1,1) -- (e) -- (dmin) -- cycle;

			\draw [ultra thick, draw=none, fill=brown, opacity=0.1] (s) -- (4,2) -- (e) -- (1,1) -- cycle;

			\draw [ultra thick, draw=none, fill=red, opacity=0.1] (s) -- (dmax) -- (4,2) -- cycle;

			\node [blue] at (3,0.5) {\bf A};
			\node [brown] at (2.5,1.5) {\bf B};
			\node [red] at (2,2.5) {\bf C};
			
			\draw [red, line width=0.4mm] (s) -- (dmax) -- (4,2);
			\draw [brown, line width=0.4mm] (s) -- (4,2);
			\draw [darkspringgreen, line width=0.4mm] (1,1) -- (e);
			\draw [blue, line width=0.4mm] (1,1) -- (dmin) -- (e);

			\draw [brown, line width=0.4mm, dashed,dash pattern=on 2pt off 2pt,dash phase=2pt] (4,2) -- (e);
			\draw [red, line width=0.4mm, dashed,dash pattern=on 2pt off 2pt] (4,2) -- (e);

			\draw [darkspringgreen, line width=0.4mm, dashed,dash pattern=on 2pt off 2pt,dash phase=2pt] (s) -- (1,1);
			\draw [blue, line width=0.4mm, dashed, dash pattern=on 2pt off 2pt] (s) -- (1,1);

			\tikzstyle{every node}=[draw,circle,fill=white,inner sep=.8pt]
			\path (0,2) node{};
			\path (0.5,1.5) node{};
			\path (1,1) node{};
			
			\path (0.5,2.5) node{};
			\path (1,3) node{};
			\path (1.5,3.5) node{};
			\path (2,4) node{};
			\path (2.5,3.5) node{};
			\path (3,3) node{};
			\path (3.5,2.5) node{};
			\path (4.5,1.5) node{};

			\path (0.5,2) node{};
			\path (1,2) node{};
			\path (1.5,2) node{};
			\path (2,2) node{};
			\path (2.5,2) node{};
			\path (3,2) node{};
			\path (3.5,2) node{};
			\path (4,2) node{};

			\path (1.5,1) node{};
			\path (2,1) node{};
			\path (2.5,1) node{};
			\path (3,1) node{};
			\path (3.5,1) node{};
			\path (4,1) node{};
			\path (4.5,1) node{};

			\path (1.5,0.5) node{};
			\path (2,0) node{};
			\path (2.5,-0.5) node{};
			\path (dmin) node{};
			\path (3.5,-0.5) node{};
			\path (4,0) node{};
			\path (4.5,0.5) node{};

			\path (5,1) node{};

		\end{tikzpicture}
	\end{subfigure}
	\caption{The three distance regions {\bf A}, {\bf B}, {\bf C} defined by the four distance functions $\dmin(\ell)$ (blue solid line), $\drefmin(\ell)$ (dark green solid line), $\drefmax(\ell)$ (brown solid line), $\dmax(\ell)$ (red solid line) for two given configuration from $(0,s)$ to $(\ell,s')$. 
	The distance traveled from $(0,s)$ to $(\ell,s')$ corresponds to the sum of the integer value (white circled points) traversed by the corresponding path (or equivalently, the area under the corresponding curve).}
	\label{fig:distancesregions}
\end{figure}

Using the same graphical representation as \cite{bekos2018algorithms}, but for a different purpose, we observe how, combining such distances ($\drefmin$, $\drefmax$) with the previous $\dmin$ and $\dmax$, those induce a natural partition of the \emph{space of distances} 
(Figure \ref{fig:distancesregions}). Therefore, we can distinguish immediately whether $\delta$ belongs to a specific region, namely \textbf{A}, \textbf{B}, or \textbf{C}, depending on whether $\delta$ is in the range $\mathcal{D}_\text{A} = \left[\dmin(\ell),\drefmin(\ell)\right]$, $\mathcal{D}_\text{B} = \left[\drefmin(\ell),\drefmax(\ell)\right]$, or $\mathcal{D}_\text{C} = \left[\drefmax(\ell),\dmax(\ell)\right]$. 

The particular form of the \BT of distance $\delta$ depends on the region where the distance belongs. This yields three cases to discuss if $s\ge s'$ and three more if $s<s'$.
We will see for each region (\textbf{A}, \textbf{B}, \textbf{C}) how the trajectory is computed. 
Let us start by an example of how a constant size representation in constant time is obtained by assuming $\delta \in $ {\bf A}, i.e. $\delta \in \left[\dmin(\ell),\drefmin(\ell)\right]$. Let us denote the quantity 
$D = \drefmin(\ell)-\delta \in \mathbb{Z}_{\geq0}$. One can observe that, starting from $\drefmin$, we can remove a portion of the area until we match $D$, see Figure~\ref{fig:removalarea}.

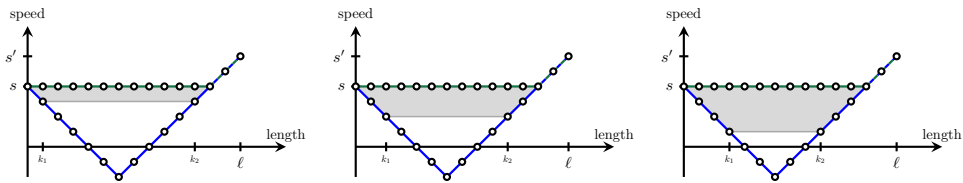
\begin{figure}[!htb]
	\centering
	\begin{subfigure}[!t]{0.3\textwidth}
		\centering
		\begin{tikzpicture}[thick, scale=0.4, every node/.style={scale=0.6}]
			\coordinate (s) at (0,2);
			\coordinate (e) at (7,3);
			\coordinate (dmin) at (3,-1);

			\draw (-0.15,3) node[left]{$s'$}--(0.15,3);
			\draw (-0.15,2) node[left]{$s$}--(0.15,2);
			\draw (7,-0.15) node[below]{$\ell$}--(7,0.15);

			\draw[-stealth](0,-1) -- (0,4) node[thick, scale=0.9,above] {speed};
			\draw[-stealth](0,0) -- (8.5,0) node[thick, scale=0.9,above] {length};

			 \draw (0.5,0.15)--(0.5,-0.15) node[below, scale=0.6]{$k_1$};
			 \draw (5.5,0.15)--(5.5,-0.15) node[below, scale=0.6]{$k_2$};

			\draw [draw=none, fill=gray, opacity=0.3] (s) -- (0.5,1.5) -- (5.5,1.5) -- (6,2) -- cycle;

			\draw [gray, line width=0.2mm, opacity=0.6] (0.5,1.5) -- (5.5,1.5);

			\draw [blue, line width=0.3mm] (s)--(dmin)--(6,2);
			\draw [darkspringgreen, line width=0.3mm] (s) -- (6,2);

			\draw [blue, dashed, line width=0.3mm, dash pattern=on 2pt off 2pt,dash phase=2pt] (6,2) -- (e);
			\draw [darkspringgreen, dashed, line width=0.3mm,dash pattern=on 2pt off 2pt] (6,2) -- (e); 

			\tikzstyle{every node}=[draw,circle,fill=white,inner sep=.8pt]

			\path (0,2) node{};
			\path (0.5,2) node{};
			\path (1,2) node{};
			\path (1.5,2) node{};
			\path (2,2) node{};
			\path (2.5,2) node{};
			\path (3,2) node{};
			\path (3.5,2) node{};
			\path (4,2) node{};
			\path (4.5,2) node{};
			\path (5,2) node{};
			\path (5.5,2) node{};
			\path (6,2) node{};
			
			\path (0.5,1.5) node{};			
			\path (1,1) node{};
			\path (1.5,0.5) node{};
			\path (2,0) node{};
			\path (2.5,-0.5) node{};
			\path (dmin) node{};
			\path (3.5,-0.5) node{};
			\path (4,0) node{};
			\path (4.5,0.5) node{};
			\path (5,1) node{};
			\path (5.5,1.5) node{};

			\path (6.5,2.5) node{};
			\path (e) node{};

		\end{tikzpicture}
	\end{subfigure}
	\begin{subfigure}[!t]{0.3\textwidth}
		\centering
		\begin{tikzpicture}[thick, scale=0.4, every node/.style={scale=0.6}]
			\coordinate (s) at (0,2);
			\coordinate (e) at (7,3);
			\coordinate (dmin) at (3,-1);

			\draw (-0.15,3) node[left]{$s'$}--(0.15,3);
			\draw (-0.15,2) node[left]{$s$}--(0.15,2);
			\draw (7,-0.15) node[below]{$\ell$}--(7,0.15);

			\draw[-stealth](0,-1) -- (0,4) node[thick, scale=0.9,above] {speed};
			\draw[-stealth](0,0) -- (8.5,0) node[thick, scale=0.9,above] {length};

			 \draw (1,0.15)--(1,-0.15) node[below, scale=0.6]{$k_1$};
			 \draw (5,0.15)--(5,-0.15) node[below, scale=0.6]{$k_2$};

			\draw [draw=none, fill=gray, opacity=0.3] (s) -- (1,1) -- (5,1) -- (6,2) -- cycle;

			\draw [gray, line width=0.2mm, opacity=0.6] (1,1) -- (5,1);

			\draw [blue, line width=0.3mm] (s)--(dmin)--(6,2);
			\draw [darkspringgreen, line width=0.3mm] (s) -- (6,2);

			\draw [blue, dashed, line width=0.3mm, dash pattern=on 2pt off 2pt,dash phase=2pt] (6,2) -- (e);
			\draw [darkspringgreen, dashed, line width=0.3mm,dash pattern=on 2pt off 2pt] (6,2) -- (e); 
			\tikzstyle{every node}=[draw,circle,fill=white,inner sep=.8pt]

			\path (0,2) node{};
			\path (0.5,2) node{};
			\path (1,2) node{};
			\path (1.5,2) node{};
			\path (2,2) node{};
			\path (2.5,2) node{};
			\path (3,2) node{};
			\path (3.5,2) node{};
			\path (4,2) node{};
			\path (4.5,2) node{};
			\path (5,2) node{};
			\path (5.5,2) node{};
			\path (6,2) node{};
			
			\path (0.5,1.5) node{};			
			\path (1,1) node{};
			\path (1.5,0.5) node{};
			\path (2,0) node{};
			\path (2.5,-0.5) node{};
			\path (dmin) node{};
			\path (3.5,-0.5) node{};
			\path (4,0) node{};
			\path (4.5,0.5) node{};
			\path (5,1) node{};
			\path (5.5,1.5) node{};

			\path (6.5,2.5) node{};
			\path (e) node{};
		\end{tikzpicture}
	\end{subfigure}
	\begin{subfigure}[!t]{0.3\textwidth}
		\centering
		\begin{tikzpicture}[thick, scale=0.4, every node/.style={scale=0.6}]
			\coordinate (s) at (0,2);
			\coordinate (e) at (7,3);
			\coordinate (dmin) at (3,-1);

			\draw (-0.15,3) node[left]{$s'$}--(0.15,3);
			\draw (-0.15,2) node[left]{$s$}--(0.15,2);
			\draw (7,-0.15) node[below]{$\ell$}--(7,0.15);

			\draw[-stealth](0,-1) -- (0,4) node[thick, scale=0.9,above] {speed};
			\draw[-stealth](0,0) -- (8.5,0) node[thick, scale=0.9,above] {length};

			 \draw (1.5,0.15)--(1.5,-0.15) node[below, scale=0.6]{$k_1$};
			 \draw (4.5,0.15)--(4.5,-0.15) node[below, scale=0.6]{$k_2$};

			\draw [draw=none, fill=gray, opacity=0.3] (s) -- (1.5,0.5) -- (4.5,0.5) -- (6,2) -- cycle;

			\draw [gray, line width=0.2mm, opacity=0.6] (1.5,0.5) -- (4.5,0.5);

			\draw [blue, line width=0.3mm] (s)--(dmin)--(6,2);
			\draw [darkspringgreen, line width=0.3mm] (s) -- (6,2);

			\draw [blue, dashed, line width=0.3mm, dash pattern=on 2pt off 2pt,dash phase=2pt] (6,2) -- (e);
			\draw [darkspringgreen, dashed, line width=0.3mm,dash pattern=on 2pt off 2pt] (6,2) -- (e); 
			\tikzstyle{every node}=[draw,circle,fill=white,inner sep=.8pt]

			\path (0,2) node{};
			\path (0.5,2) node{};
			\path (1,2) node{};
			\path (1.5,2) node{};
			\path (2,2) node{};
			\path (2.5,2) node{};
			\path (3,2) node{};
			\path (3.5,2) node{};
			\path (4,2) node{};
			\path (4.5,2) node{};
			\path (5,2) node{};
			\path (5.5,2) node{};
			\path (6,2) node{};
			
			\path (0.5,1.5) node{};			
			\path (1,1) node{};
			\path (1.5,0.5) node{};
			\path (2,0) node{};
			\path (2.5,-0.5) node{};
			\path (dmin) node{};
			\path (3.5,-0.5) node{};
			\path (4,0) node{};
			\path (4.5,0.5) node{};
			\path (5,1) node{};
			\path (5.5,1.5) node{};

			\path (6.5,2.5) node{};
			\path (e) node{};
		\end{tikzpicture}
	\end{subfigure}
	\caption{The lines represent the sequence of speeds realizing respectively distance $\drefmin$ (green) and $\dmin$ (blue). The removed portion area (gray) resulting from $\drefmin(\ell)$ 
	will lead us to compute $k_1$ and $k_2$. On the left, we remove one \emph{layer} of distance, then two and three in the middle and right.}
	\label{fig:removalarea}
\end{figure}

Observe that we can express the amount of the removed gray area using a compact formula: this is given by the following quadratic equation $-k^2+(\ell-|s'-s|)k$, whose derivation can be found in Appendix~\ref{sec:appendix_existence}.

Now, by removing $k$ layers we could check whether this matches exactly $D$: if that's the case, the removed portion matches $D$, and the corresponding sequence of speeds realizing a distance $\delta$ 
coincides exactly with the orange trajectory, see Figure~\ref{fig:checkpoints} (left). 
One can also notice that removing $k$ layers might not match exactly $D$ and exceed it: this happens whenever removing $k-1$ layers is not enough (still not match $D$), while removing $k$ is too much (we 
exceed $D$); hence, we need to remove a slice of the area that we added with the $k$-th layer, but not all of it. In this case, the sequence of speeds realizing distance $\delta$ coincides with the orange 
trajectory, see Figure~\ref{fig:checkpoints} (right). 

In both cases, the trajectory admits a compact representation using only a \emph{constant number} of intermediate configurations, namely:
\begin{equation}\label{eq:compact}
\begin{aligned}
	\hspace*{-.5cm}& \text{Figure~\ref{fig:checkpoints} (left): }  (-,k_1)(0,k_2-k_1)(+,\ell-k_2) \\
	\hspace*{-.5cm}& \text{Figure~\ref{fig:checkpoints} (right): }  (-,k_1)(0,k_2-k_1)(+,1)(0,\ell-k_2-k_1-s'+s)(+,s'-s+k_1-1) 
\end{aligned}
\end{equation}
where $(0,s)/(+,s)/(-,s)$ means keeping the same speed/accelerating/decelerating for $s$ successive configurations.

Such a compact representation can be computed in constant time for $\delta$ in $\mathcal{D}_\text{A}\cup\mathcal{D}_\text{B}\cup\mathcal{D}_\text{C}$ as shown in~\cref{lem:existence_k1k2} using similar ideas (see the full proof in~\cref{sec:appendix_existence}).

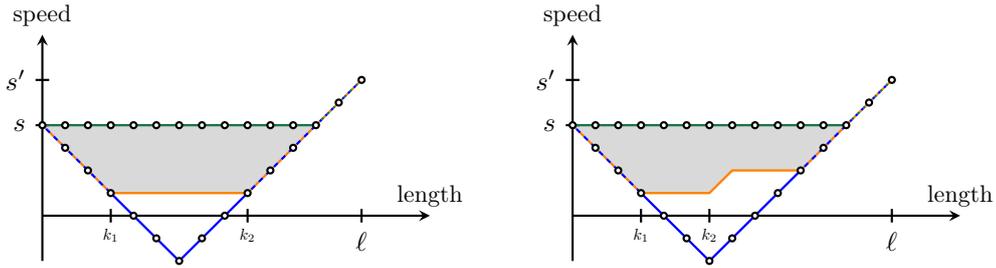
\begin{figure}[!htb]
	\centering
	\begin{subfigure}[!t]{0.49\textwidth}
		\centering
		\begin{tikzpicture}[thick, scale=0.6, every node/.style={scale=1}]
			\coordinate (s) at (0,2);
			\coordinate (e) at (7,3);
			\coordinate (dmin) at (3,-1);

			\draw (-0.15,3) node[left]{$s'$}--(0.15,3);
			\draw (-0.15,2) node[left]{$s$}--(0.15,2);
			\draw (7,-0.15) node[below]{$\ell$}--(7,0.15);

			\draw[-stealth](0,-1) -- (0,4) node[thick, scale=0.9,above] {speed};
			\draw[-stealth](0,0) -- (8.5,0) node[thick, scale=0.9,above] {length};

			\draw (1.5,0.15)--(1.5,-0.15) node[below, scale=0.6]{$k_1$};
			\draw (4.5,0.15)--(4.5,-0.15) node[below, scale=0.6]{$k_2$};

			\draw [draw=none, fill=gray, opacity=0.3] (s) -- (1.5,0.5) -- (4.5,0.5) -- (6,2) -- cycle;


			\draw [blue, line width=0.3mm] (1.5,0.5)--(dmin)--(4.5,0.5);
			\draw [darkspringgreen, line width=0.3mm] (s) -- (6,2);

			\draw [orange, dashed, line width=0.3mm, dash pattern=on 2pt off 2pt] (s) -- (1.5,0.5);
			\draw [blue, dashed, line width=0.3mm, dash pattern=on 2pt off 2pt,dash phase=2pt] (s) -- (1.5,0.5);

			\draw [orange, dashed, line width=0.3mm, dash pattern=on 2pt off 2pt] (4.5,0.5) -- (6,2);
			\draw [blue, dashed, line width=0.3mm, dash pattern=on 2pt off 2pt,dash phase=2pt] (4.5,0.5) -- (6,2);

			\draw [orange, line width=0.3mm] (1.5,0.5) -- (4.5,0.5);

			\draw [orange, dashed, line width=0.3mm, dash pattern=on 1pt off 2pt,dash phase=2pt] (6,2) -- (e);

			\draw [blue, dashed, line width=0.3mm, dash pattern=on 1pt off 2pt,dash phase=1pt] (6,2) -- (e);
			\draw [darkspringgreen, dashed, line width=0.3mm,dash pattern=on 1pt off 2pt] (6,2) -- (e);  

			\tikzstyle{every node}=[draw,circle,fill=white,inner sep=.8pt]

			\path (0,2) node{};
			\path (0.5,2) node{};
			\path (1,2) node{};
			\path (1.5,2) node{};
			\path (2,2) node{};
			\path (2.5,2) node{};
			\path (3,2) node{};
			\path (3.5,2) node{};
			\path (4,2) node{};
			\path (4.5,2) node{};
			\path (5,2) node{};
			\path (5.5,2) node{};
			\path (6,2) node{};
			
			\path (0.5,1.5) node{};			
			\path (1,1) node{};
			\path (1.5,0.5) node{};
			\path (2,0) node{};
			\path (2.5,-0.5) node{};
			\path (dmin) node{};
			\path (3.5,-0.5) node{};
			\path (4,0) node{};
			\path (4.5,0.5) node{};
			\path (5,1) node{};
			\path (5.5,1.5) node{};

			\path (6.5,2.5) node{};
			\path (e) node{};

		\end{tikzpicture}
	\end{subfigure}
	\begin{subfigure}[!t]{0.49\textwidth}
		\centering
		\begin{tikzpicture}[thick, scale=0.6, every node/.style={scale=1}]
			\coordinate (s) at (0,2);
			\coordinate (e) at (7,3);
			\coordinate (dmin) at (3,-1);

			\draw (-0.15,3) node[left]{$s'$}--(0.15,3);
			\draw (-0.15,2) node[left]{$s$}--(0.15,2);
			\draw (7,-0.15) node[below]{$\ell$}--(7,0.15);

			\draw[-stealth](0,-1) -- (0,4) node[thick, scale=0.9,above] {speed};
			\draw[-stealth](0,0) -- (8.5,0) node[thick, scale=0.9,above] {length};

			\draw (1.5,0.15)--(1.5,-0.15) node[below, scale=0.6]{$k_1$};
			\draw (3,0.15)--(3,-0.15) node[below, scale=0.6]{$k_2$};

			\draw [draw=none, fill=gray, opacity=0.3] (s) -- (1.5,0.5) -- (3,0.5) -- (3.5,1) -- (5,1) -- (6,2) -- cycle;


			\draw [blue, line width=0.3mm] (1.5,0.5)--(dmin)--(5,1);
			\draw [darkspringgreen, line width=0.3mm] (s) -- (6,2);

			\draw [orange, dashed, line width=0.3mm, dash pattern=on 2pt off 2pt] (s) -- (1.5,0.5);
			\draw [blue, dashed, line width=0.3mm, dash pattern=on 2pt off 2pt,dash phase=2pt] (s) -- (1.5,0.5);

			\draw [orange, dashed, line width=0.3mm, dash pattern=on 2pt off 2pt] (5,1) -- (6,2);
			\draw [blue, dashed, line width=0.3mm, dash pattern=on 2pt off 2pt,dash phase=2pt] (5,1) -- (6,2);

			\draw [orange, line width=0.3mm] (1.5,0.5) --(3,0.5) -- (3.5,1) -- (5,1);

			\draw [orange, dashed, line width=0.3mm, dash pattern=on 1pt off 2pt,dash phase=2pt] (6,2) -- (e);

			\draw [blue, dashed, line width=0.3mm, dash pattern=on 1pt off 2pt,dash phase=1pt] (6,2) -- (e);
			\draw [darkspringgreen, dashed, line width=0.3mm,dash pattern=on 1pt off 2pt] (6,2) -- (e);  

			\tikzstyle{every node}=[draw,circle,fill=white,inner sep=.8pt]

			\path (0,2) node{};
			\path (0.5,2) node{};
			\path (1,2) node{};
			\path (1.5,2) node{};
			\path (2,2) node{};
			\path (2.5,2) node{};
			\path (3,2) node{};
			\path (3.5,2) node{};
			\path (4,2) node{};
			\path (4.5,2) node{};
			\path (5,2) node{};
			\path (5.5,2) node{};
			\path (6,2) node{};
			
			\path (0.5,1.5) node{};			
			\path (1,1) node{};
			\path (1.5,0.5) node{};
			\path (2,0) node{};
			\path (2.5,-0.5) node{};
			\path (dmin) node{};
			\path (3.5,-0.5) node{};
			\path (4,0) node{};
			\path (4.5,0.5) node{};
			\path (5,1) node{};
			\path (5.5,1.5) node{};

			\path (6.5,2.5) node{};
			\path (e) node{};
		\end{tikzpicture}
	\end{subfigure}
	\caption{The gray area removed from $\drefmin(\ell)$ that realizes the desired distance $\delta$ (orange).}
	\label{fig:checkpoints}
\end{figure}

\begin{lemma}\label{lem:existence_k1k2}
For any $c,c'$, and a feasible $\ell$, there exists integers $k_1,k_2 \in \mathbb{Z}_{\geq 0}$ such that the canonical sequence of configurations of constant size
defined by $(c, k_1, k_2, c')$ realizes exactly distance $\delta$.
\end{lemma}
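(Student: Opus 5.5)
We argue by cases on the region of $\delta$, after the reductions of Section~\ref{sec:case-reduction}, so that $x=0$, $x'=\delta\ge 0$, $s,s'\ge 0$, and $\dmin(\ell)\le\delta\le\dmax(\ell)$. We treat $s<s'$ in detail; the case $s\ge s'$ follows by time reversal. Reversing the sequence of speeds of a trajectory from $(0,s)$ to $(\delta,s')$ yields one from $(0,s')$ to $(\delta,s)$ of the same length and distance, and this exchanges the roles of the two pictures in Figure~\ref{fig:distancesregions}. Since $\mathcal{D}_\text{A}\cup\mathcal{D}_\text{B}\cup\mathcal{D}_\text{C}=[\dmin(\ell),\dmax(\ell)]$, it suffices to give a canonical family in each region whose distances cover that region's interval without gaps.

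\textbf{Region B.} Write $m=\ell-(s'-s)$. Start from the $\drefmin$ profile $(0,m)(+,s'-s)$ and move the acceleration block earlier one step at a time. Each shift turns one step at speed $s$ into one at speed $s'$ while keeping the rest, and so raises the distance by exactly $s'-s$. This coarse move cannot hit every integer, so we use the finer move of a single ``staircase'' step. With profile $(0,k_1)(+,j)(0,k_2)(+,s'-s-j)$, where $k_1+k_2=m$, moving one unit of plateau from the lower level to the upper one changes the distance by $+1$ per unit of height. Equivalently, shifting one $+$ step earlier by one position increases the distance by exactly $1$. Hence every integer in $[\drefmin(\ell),\drefmax(\ell)]$ is reached. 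We write $D=\delta-\drefmin(\ell)=q\cdot m+r$ with $0\le r<m$, take $q$ full shifts of the ramp, and shift the next $+$ step by $r$. This gives a profile of the form $(+,q)(0,r)(+,1)(0,m-r)(+,s'-s-q-1)$ (or a degenerate version of it), determined by $k_1=q$, $k_2=r$.

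\textbf{Region A.} Set $D=\drefmin(\ell)-\delta\ge 0$ and use the layer-removal picture of Figure~\ref{fig:removalarea}. Removing $k$ full layers subtracts $f(k)=-k^2+(\ell-|s'-s|)k$, which we will verify in the appendix by summing triangle areas. The function $f$ increases on $0\le k\le \lfloor m/2\rfloor$. Its maximum, reached at the deepest feasible valley, equals $\drefmin(\ell)-\dmin(\ell)$, which we check against Lemma~\ref{lem:formula}. Let $k_1$ be the largest $k$ with $f(k)\le D$. If $f(k_1)=D$, we output $(-,k_1)(0,\ell-2k_1-(s'-s))(+,\dots)$. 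Otherwise the residual $D-f(k_1)$ is less than $f(k_1+1)-f(k_1)=m-2k_1-1$. We absorb it by lowering only part of the $(k_1{+}1)$-th layer: a plateau of length $k_2-k_1$ at depth $k_1+1$, then one $+$ step, then the rest at depth $k_1$. This is the second shape in (\ref{eq:compact}), and each unit of plateau length removes exactly one unit of distance. Computing $k_1$ means solving a quadratic inequality, and $k_2$ then follows linearly, so both take $O(1)$ time.

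\textbf{Region C.} This region is symmetric to A. Starting from $\drefmax(\ell)$, we add layers forming a peak above speed $s'$: $k$ full layers add $k^2+(\ell-|s'-s|)k$ minus a boundary correction. A partial layer fills the remainder, and the maximum equals $\dmax(\ell)$.

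The main obstacle is the gap-freeness and endpoint matching in regions A and C. We must show that the partial layer yields every integer strictly between $f(k_1)$ and $f(k_1+1)$, and that the last full layer meets $\dmin(\ell)$ (respectively $\dmax(\ell)$) exactly. Here the parity of $m$ and the ceiling/floor in Lemma~\ref{lem:formula} interact. We plan to handle this by explicitly comparing $f(\lfloor m/2\rfloor)$, including a half-layer when $m$ is odd, against the closed form of $\dmin$.
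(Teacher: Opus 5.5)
Your proposal follows essentially the same argument as the paper: the same split into regions \textbf{A}, \textbf{B}, \textbf{C}, the same layer function $f(k)=mk-k^2$ completed by a partial layer made of one extra $\pm1$ step placed according to the residual, and the same ramp-shifting with $D=qm+r$ in \textbf{B}. The one structural difference is that you handle $s\ge s'$ by time reversal instead of separately, and this works except that reversal does not preserve distance: a trajectory from $(0,s)$ to $(\delta,s')$ reverses to one from $(0,s')$ to $(\delta+s-s',s)$, because the initial speed is excluded from the distance and the final one included, which is harmless since all four bounds $\dmin,\drefmin,\drefmax,\dmax$ shift by that same constant $s-s'$. Two further fixes: in \textbf{B} your plateau lengths are swapped, and the profile realizing $D=qm+r$ is $(+,q)(0,m-r)(+,1)(0,r)(+,s'-s-q-1)$; the endpoint check you defer follows at once from Lemma~\ref{lem:formula}, which gives $\drefmin(\ell)-\dmin(\ell)=\dmax(\ell)-\drefmax(\ell)=\lfloor m^2/4\rfloor=f(\lfloor m/2\rfloor)$, so no half-layer is needed for odd $m$, and in \textbf{C} the $k$ full layers also add exactly $f(k)$.
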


\section{Multipoint trajectory}
\label{sec:multi-point}

In this section, we turn our attention to \MT in $d$ dimensions (to avoid any ambiguity, there is no reduction to 1D subproblems here). We start by 
presenting a simple DP algorithm that solves the problem. Then, using this algorithm and some observations, we start exploring a few properties of the problem. In particular, we ask to which extent one could, \emph{apriori}, bound the maximum speed required for an optimal trajectory and/or bound the space where an optimal trajectory lives. Both questions turn out to be more intruiguing than expected. We provide a set of preliminary results in this direction, reporting also some failed attempts at obtaining good bounds. We also formulate a conjecture on the maximum speed, under which the DP algorithm outperforms other algorithms.

The DP algorithm relies on two main components. The first component is to initialize, for each city $i$, a set $\C_i$ of candidate configurations to be considered for visiting this city, with the special cases $\mathcal{C}_1 = \{c_1 = (x_1, \overset{\to}{0}) \}$ and $\mathcal{C}_n = \{c_n=(x_n, \overset{\to}{0}) \}$. The space being open, there is no obvious ways to restrict these sets apriori.
The second component is the algorithm itself. Concretely, this algorithm determines an optimal sequence of chosen configurations $c_1, c_2, \dots, c_n$, one for each city, such that city $i$ is visited by configuration $c_i$. This sequence is not yet a full trajectory, as the intermediate configurations are missing. However, it can easily be turned into a trajectory using for instance the algorithm in~Section~\ref{sec:BT}.

The algorithm proceeds as follows. For any city $i$ and any configuration $c \in \C_i$, we compute the value \texttt{cost}$(i,c)$, which represents the total length of a shortest trajectory that starts in the initial configuration $c_1$ and visits all the cities from $1$ to $i$, ending in configuration~$c$.
The dynamic program is then defined by the following relation, for $i>1$:

$$ \texttt{cost}(i,c) = \min_{ c' \in \mathcal{C}_{i-1}} \{ \texttt{cost}(i-1,c') + \BC(c', c) \},$$
with \texttt{cost}$(1,c_1)=0$. It is also useful to remember the configuration $c'$ that provided the best cost, namely $\texttt{pred}(i,c) = \argmin_{ c' \in \mathcal{C}_{i-1}} \{ \texttt{cost}(i-1,c')$ $+\BC(c', c) \}$.
The resulting algorithm can be implemented in a bottom-up fashion as shown in Algorithm~\ref{algo:dynprog} (in~\cref{sec:algos}), where the actual sequence of visiting configurations $c_1,c_2,\dots,c_n$ follows from unfolding recursively the value of \texttt{pred}$(n,c_n)$.

Clearly, the time and space complexities of this algorithm depend dramatically on the number of candidate configurations initially in each $\C_i$. Essentially, the algorithm runs in $O(n|\mathcal{C}|^2)$ time and $O(n|\mathcal{C}|)$ space, where $|\mathcal{C}|$ is the maximum number of candidate configurations per city. This motivates the content of the next section.

\subsection{Restricting the sets of configurations and/or bounding the speed}

At first, it is not even clear that the sets of candidate configurations that visit the cities must be finite. Fortunately, they are. An easy pre-processing step is to pick a small fictitious maximum speed $smax$, e.g. between $5$ and $10$, and populate the sets $\C_i$ with all the configurations that visit city $i$ with speed at most $smax$ (in each dimension). With such candidate configurations, the algorithm runs almost instantly. Let $S$ be the length of the trajectory returned by the algorithm. Clearly, $S$ is an upper bound on the actual maximum speed. (By contradiction, if an optimal trajectory needs a larger speed, then it takes more than $S$ time steps to reach this speed, which is not optimal since a trajectory of length $S$ exists). A similar argument can be used to derive a stronger bound:

\begin{lemma}
  \label{lem:speed}
  Let $S$ be the length of any trajectory visiting all the points. Then an optimal trajectory cannot exceed a speed of $S/2$ (in each dimension).
\end{lemma}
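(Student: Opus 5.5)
The argument refines the one sketched just before the statement. There, reaching speed $v$ was charged only for the acceleration phase. Here we also charge for the deceleration phase, which is forced because the trajectory must end at zero speed. Fix an optimal trajectory $c_0,\dots,c_\ell$ with $v_0=v_\ell=\overset{\to}{0}$. Its length satisfies $\ell\le S$, since a trajectory of length $S$ visiting all points in order exists. Suppose that in some dimension $j$ the speed reaches $|v_t^{(j)}|=v$ at some index $t$. The goal is to show $v\le S/2$.

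The key step is the one-dimensional observation that the $j$-th coordinate of the velocity changes by at most $1$ per step, since $v_{i+1}-v_i\in\{-1,0,1\}^d$. Going from $v_0^{(j)}=0$ to $|v_t^{(j)}|=v$ therefore requires $t\ge v$ steps. Going from $|v_t^{(j)}|=v$ back to $v_\ell^{(j)}=0$ requires $\ell-t\ge v$ steps. Adding these gives $\ell\ge 2v$, so $v\le \ell/2\le S/2$, which is the claim, applied in each dimension independently.

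The only subtle points concern conventions. The first is that the initial and final configurations $c_0$ and $c_\ell$ indeed have zero velocity; this is part of the definition of \MT. The second is that ``length'' counts transitions, so that $t$ counts the velocity changes between $c_0$ and $c_t$. Both follow from the definitions in Section~\ref{sec:details}. Neither the visiting constraints nor the positions play any role: the bound holds for any trajectory starting and ending at rest whose length is at most $S$.

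I expect no real obstacle. If anything, care is needed to state the bound as holding for \emph{every} optimal trajectory, not just one. The argument above never uses optimality beyond $\ell\le S$, so it covers all of them.
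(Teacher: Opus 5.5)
Your proof is correct and is the same argument as the paper's. In each coordinate the vehicle needs at least $v$ steps to go from rest to speed $v$ and at least $v$ more to return to rest, so $2v \le \ell \le S$; you simply spell out the two halves that the paper's one-line proof compresses.
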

\begin{proof}
  The vehicle must start and terminate at zero speed. If it reaches a higher speed than $S/2$, then it will not have enough time to slow down to zero speed before $S$ time steps.
\end{proof}

In the particular case that the first and last cities are the same (i.e., if the trajectory is a tour), then similar arguments establish a bound of $S/4$ (essentially, one needs to accelerate and decelerate twice for returning to the origin). Unfortunately, these bounds remain prohibitively large in some cases. Based on numerous experiments, we conjecture the following:

\begin{conjecture}
  \label{conj:speed}
  Let $L$ be the maximum distance between two cities in any of the dimensions. Then, there exists an optimal trajectory that never exceeds speed $\sqrt{L}$ in any of the dimensions.
\end{conjecture}

Intuitively, Conjecture~\ref{conj:speed} reflects the fact that we do not expect an optimal trajectory to require significant excursions out of the convex hull of the points. However, we show further down that such excursions are sometimes required, at least to a moderate extent.

Before reporting this negative results (and others), let us present a simple heuristic that \emph{does} reduce the number of candidate configurations in a safe way. Namely, one can compute, for every city $i$ and configuration $c\in \C_i$, the length of a shortest trajectory from $c_1$ to $c$ and from $c$ to $c_n$. If the sum of both exceeds the same $S$ as above, then $c$ can safely be discarded from the set $\C_i$. Indeed, if this configuration is too costly to be used directly from $c_1$ and to $c_n$, then it will not be less expensive when considering further intermediate cities. This filtering heuristic is given in Algorithm~\ref{algo:filter} (in~\cref{sec:algos}).
Experimentally, we observe that this algorithm consistently removes about a third of the configurations in 2D scenarios, and this ratio seems to be steady over a large range of values of our parameters (namely, from $5$ to $50$ cities, in areas of size ranging from $20\times 20$ to $200\times 200$). One may be tempted to generalize this approach, by filtering more complex combinations of cities. Unfortunately, this would represent essentially as much work as running the algorithm itself, making this pre-processing step irrelevant.

\subsubsection{More properties (or lack thereof)}

We tested many different ways to bound the speed or to restrict the number of configurations. For example, we tested an iterative approach where the algorithm is executed with maximum speed $1$, then $2$, then $3$, until the resulting trajectory is no longer improved. However, it turns out that for some instances, there could be plateaus where the trajectory length remains the same for several consecutive values of the maximum speed, then ultimately decreases again.

Another natural approach is to suppose that the optimal trajectory should remain within some bounding box around the convex hull of the points. This has to be true to \emph{some} extent, due to Lemma~\ref{lem:speed} (as arbitrarily large excursions would benefit from arbitrarily high speed). However, we can show that such a bounding box may require at least a significant margin outside of the convex hull. 
Consider the family of instances depicted in Figure~\ref{fig:0stexample}, where the cities are located at positions $(i\delta,-i)$, $i=0,\ldots,n$ for some $\delta \ge 7$. 
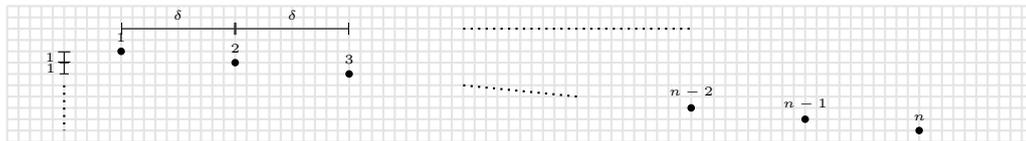
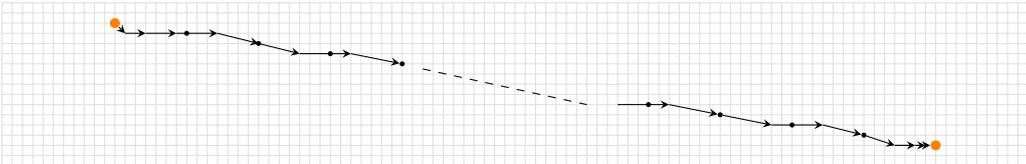
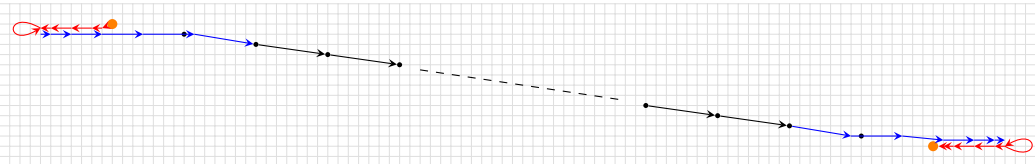
\begin{figure}[!htb]
  \begin{subfigure}[t]{.98\linewidth}
    \centering
    \begin{tikzpicture}[thick,scale=1.5, every node/.style={scale=0.3, circle, fill,  
      minimum size=1pt}]

    \draw[gray!50,step=0.1cm,opacity=.4]   (0,.2) grid (9, 1.4);
    \path (1,1) node (a) {};
    \path (2,0.9) node (b) {};
    \path (3,0.8) node (c) {};

    \path (6,0.5) node (d) {};
    \path (7,0.4) node (e) {};
    \path (8,0.3) node (f) {};

    
    \draw[black, dotted] (4,0.7) -- (5,0.6);

    \tikzstyle{every node}=[font=\tiny]
    \path (a) node[above] (aa){$1$};
    \path (b) node[above] (bb){$2$};
    \path (c) node[above] (cc){$3$};
    \path (d) node[above] (dd){$n-2$};
    \path (e) node[above] (ee){$n-1$};
    \path (f) node[above] (ff){$n$};

    \draw[thin, >=latex, |-|, black] (1,1.2) -- node[above,midway]{$\delta$} (2,1.2); 
    \draw[thin, >=latex, |-|, black] (2,1.2)-- node[above,midway]{$\delta$} (3,1.2);
    \draw[black, dotted] (4,1.2) -- (6,1.2);

    \draw[thin, |-|, black] (0.5,1)-- node[left,midway]{1} (0.5,0.9);
    \draw[ultra thin, |-|, black] (0.5,0.9)-- node[left,midway]{1} (0.5,0.8);
    \draw[black, dotted] (0.5,0.7) -- (0.5,0.3);

    \end{tikzpicture}
    \subcaption{The instance, where cities are located at positions $(i\delta,-i)$, $i=0,\ldots,n$}\label{fig:0stexample}
  \end{subfigure}
  \begin{subfigure}[t]{.98\linewidth}
    \centering
    \begin{tikzpicture}[thick, scale=2.7, every node/.style={scale=0.2, circle, fill,  
      minimum size=1pt}, ]

    \draw[lightgray!50,ultra thin, step=0.05cm]   (0.15,-0.1) grid (5.2, -0.9);
    \path (0.7,-0.2) node[scale=2,orange] (a) {};
    \path (1.05,-0.25) node (b) {};
    \path (1.4,-0.3) node (c) {};
    \path (1.75,-0.35) node (d) {};
    \path (2.1,-0.4) node (e) {};

    \path (3.3,-0.6) node (s) {};
    \path (3.65,-0.65) node (t) {};
    \path (4,-0.7) node (u) {};
    \path (4.35,-0.75) node (v) {};
    \path (4.7,-0.8) node[scale=2,orange] (z) {};

    \draw [thin] (a) edge[-stealth, thin] (0.75,-0.25);
    \draw [thin] (0.75,-0.25) edge[-stealth] (0.85,-0.25);
    \draw [thin] (0.85,-0.25) edge[-stealth] (1,-0.25);
    \draw [thin] (1,-0.25) edge[-stealth] (1.2,-0.25);
    \draw [thin] (1.2,-0.25) edge[-stealth] (1.4,-0.3);
    \draw [thin] (1.4,-0.3) edge[-stealth] (1.6,-0.35);
    \draw [thin] (1.6,-0.35) edge[-stealth] (1.85,-0.35);
    \draw [thin] (1.85,-0.35) edge[-stealth] (e);

    \draw [thin, dashed] (2.2,-0.425) -- (3,-0.6);

    \draw [thin] (3.15,-0.6) edge[-stealth] (3.4,-0.6);
    \draw [thin] (3.4,-0.6) edge[-stealth] (t);
    \draw [thin] (t) edge[-stealth] (3.9,-0.7);
    \draw [thin] (3.9,-0.7) edge[-stealth] (4.15,-0.7);
    \draw [thin] (4.15,-0.7) edge[-stealth] (v);
    \draw [thin] (v) edge[-stealth] (4.5,-0.8);
    \draw [thin] (4.5,-0.8) edge[-stealth] (4.6,-0.8);
    \draw [thin] (4.6,-0.8) edge[-stealth] (4.65,-0.8);
    \draw [thin] (4.65,-0.8) edge[-stealth] (z);

    \end{tikzpicture}
    \subcaption{A non turning trajectory}\label{fig:1stexample}
  \end{subfigure}
  \begin{subfigure}[t]{.98\linewidth}
    \centering
    \begin{tikzpicture}[thick, scale=2.7, every node/.style={scale=0.2, circle, fill,  
      minimum size=1pt}]

    \draw[lightgray!50,ultra thin, step=0.05cm]   (0.15,-0.1) grid (5.2,-0.9);
    \path (0.7,-0.2) node[scale=2,orange] (a) {};
    \path (1.05,-0.25) node (b) {};
    \path (1.4,-0.3) node (c) {};
    \path (1.75,-0.35) node (d) {};
    \path (2.1,-0.4) node (e) {};

    \path (3.3,-0.6) node (s) {};
    \path (3.65,-0.65) node (t) {};
    \path (4,-0.7) node (u) {};
    \path (4.35,-0.75) node (v) {};
    \path (4.7,-0.8) node[scale=2,orange] (z) {};

    \draw [thin, red] (a) edge[-stealth] (0.65,-0.22);
    \draw [thin, red] (0.65,-0.22) edge[-stealth] (0.6,-0.22);
    \draw [thin, red] (0.6,-0.22) edge[-stealth] (0.5,-0.22);
    \draw [thin, red] (0.5,-0.22) edge[-stealth] (0.4,-0.22);
    \draw [thin, red] (0.4,-0.22) edge[-stealth] (0.35,-0.22);
    \draw [thin, red] (0.35,-0.22) edge[loop right,-stealth,min distance=2mm,in=-150,out=150,looseness=10] (0.35,-0.22);

    \draw [thin, blue] (0.35,-0.25) edge[-stealth] (0.4,-0.25);
    \draw [thin, blue] (0.4,-0.25) edge[-stealth] (0.5,-0.25);
    \draw [thin, blue] (0.5,-0.25) edge[-stealth] (0.65,-0.25);
    \draw [thin, blue] (0.65,-0.25) edge[-stealth] (0.85,-0.25);
    \draw [thin, blue] (0.85,-0.25) edge[-stealth] (1.1,-0.25);
    \draw [thin, blue] (1.1,-0.25) edge[-stealth] (c);

    \draw [thin] (c) edge[-stealth] (d);
    \draw [thin] (d) edge[-stealth] (e);

    \draw [thin, dashed] (2.2,-0.425) -- (3.2,-0.575);

    \draw [thin] (s) edge[-stealth] (t);
    \draw [thin] (t) edge[-stealth] (u);

    \draw [thin, blue] (u) edge[-stealth] (4.3,-0.75);
    \draw [thin, blue] (4.3,-0.75) edge[-stealth] (4.55,-0.75);
    \draw [thin, blue] (4.55,-0.75) edge[-stealth] (4.75,-0.77);
    \draw [thin, blue] (4.75,-0.77) edge[-stealth] (4.9,-0.77);
    \draw [thin, blue] (4.9,-0.77) edge[-stealth] (5,-0.77);
    \draw [thin, blue] (5,-0.77) edge[-stealth] (5.05,-0.77);

    \draw [thin, red] (5.05,-0.8) edge[loop right,-stealth,min distance=2mm,in=30,out=-30,looseness=10] (5.05,-0.8);
    \draw [thin, red] (5.05,-0.8) edge[-stealth] (5,-0.8);
    \draw [thin, red] (5,-0.8) edge[-stealth] (4.9,-0.8);
    \draw [thin, red] (4.9,-0.8) edge[-stealth] (4.8,-0.8);
    \draw [thin, red] (4.8,-0.8) edge[-stealth] (4.75,-0.8);
    \draw [thin, red] (4.75,-0.8) edge[-stealth] (z);

    \end{tikzpicture}
    \subcaption{Turning trajectory exiting the convex hull.}\label{fig:2ndexample}
  \end{subfigure}
  \caption{A pathological instance for acceleration.}
\end{figure}
For a moderate number of cities $n$, e.g. $n=20$, experiments show that the optimal trajectory progresses monotonically from city $1$ to city $n$ (as one might expect in a continuous setting). However, we observe that the vehicle never reaches a speed of delta in the $x$ dimension (see Figure~\ref{fig:1stexample})---a phenomena that we analyse more closely later on. When $n$ becomes larger, e.g. for $n=60$, an interesting phenomena arises: the vehicle now starts by moving to the left and goes back with sufficient momentum to visit the cities at speed $\delta$ (see Figure~\ref{fig:2ndexample}). It turns out that if we \emph{forbid} the vehicle to exit the convex hull in this case, the optimal trajectory we obtain is strictly worse than if we allow the vehicle to exit. This phenomena can easily be accentuated: for $n=100$, we observe the vehicle moving further left at the beginning, then coming back at speed $2 \delta$.

These experiments prove that at least some margin around the convex hull is required, and the size of this margin seems to be instance-specific (thus, non-constant)---intuitively, in the above scenario, when $n\to \infty$, it is reasonable to expect that an optimal trajectory would require moving sufficiently far on the left to come back at speed roughly $\sqrt{n}\delta$. These experiment also show that the optimal speed for visiting a city is non-local in a strong sense, as it depends on the existence of cities that are arbitrarily far in the visit order.

Analytically, such results are more difficult to establish, in particular the fact that a significant excursion out of the convex hull is needed. One difficulty is to rule out explicitly all sophisticated strategies within the convex hull that may replace the excursion. In the remaining of the section, we show a few facts in this direction.

A key point here is that the trajectory \emph{cannot} reach speed $\delta$ without changing its direction at least once, due to the \emph{slope} between consecutive points that forbids some intermediate speeds. In the context of the above family of instances, we define several types of trajectories.

\begin{definition}[NTU trajectory]
A trajectory is NTU (Non-TUrning) if it does not contain configurations with negative speed in the $x$ dimension.
\end{definition}

\begin{definition}[TU trajectory]
A trajectory $T$ is TU (TUrning) if it allows for configurations with negative speed in the $x$ dimension.
\end{definition}
Furthermore, we denote as TU($\delta$) a family of TU trajectories where the maximum $x$-speed of a configuration is bounded by $\delta$. The following two results are proven in Appendix \ref{sec:unbounded}:
\begin{lemma}
For all $\delta\ge 7$, a NTU trajectory cannot reach a $x$-speed of $\delta$.
\label{lem:impossibility}
\end{lemma}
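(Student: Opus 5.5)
The plan is to exploit the rigid geometry of the instance in Figure~\ref{fig:0stexample}. Consecutive cities differ by the primitive vector $(\delta,-1)$, and the $y$-speed can change by at most one unit per step. Together these should force the vehicle to stay at height $-i$ while it is in the $x$-strip $[i\delta,(i+1)\delta]$, except for a single descending step. From that description I would then show by an arithmetic argument that the $x$-speed can never climb to $\delta$.

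\textbf{Step 1: lattice-point observation.} A move $(v_x,v_y)$ contains integer points other than its endpoints only if $\gcd(v_x,v_y)>1$. In particular, a move with $v_y=\pm1$ meets lattice points only at its endpoints, so a city lies on such a move only if it is an endpoint. A move with $v_y=0$ can visit only cities at its own height.

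\textbf{Step 2: structural claim.} In an NTU trajectory the $x$-coordinate is non-decreasing. I would argue that, in an optimal (or even feasible-to-reach-speed-$\delta$) NTU trajectory, the vehicle descends from height $-i$ to $-(i+1)$ in exactly one move of the form $(v,-1)$. That move starts at $x\ge i\delta$ (city $i$ is already visited) and ends at $x\le(i+1)\delta$ (city $i+1$ is not yet passed). Any $y$-excursion with $|v_y|\ge 2$, or a descent below the next city's level, must be undone, and the bounded $y$-acceleration combined with monotone $x$ makes such excursions unable to help. This is where the ordering condition on visits is used.

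From Step 2, every descending move has $x$-length at most $\delta$, with equality only for the move going exactly from city $i$ to city $i+1$. Consequently, if the $x$-speed ever equals $\delta$, then at some stage the vehicle must make the move $(\delta,-1)$ starting exactly at city $i$. Alternatively, it makes a horizontal move of length $\ge\delta$ inside a strip of width $\delta$, which then must also start at the city and end at the next strip boundary. Either way, the vehicle arrives exactly at the point $(i\delta,-i)$ with $x$-speed in $\{\delta-1,\delta,\delta+1\}$.

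\textbf{Step 3: backward arithmetic argument (main obstacle).} Tracing the trajectory backwards from that exact arrival, the $x$-speeds are $\delta-1,\delta-2,\dots$ or nearly so. Within each strip, the previous descent had to fit between the city and the next boundary with $x$-speed just below $\delta$. I expect this to force the partial sums of the $x$-speeds over the preceding steps to land on multiples of $\delta$ at the right moments, which is incompatible with speeds changing by at most one.

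Concretely, I would enumerate the last three or four moves before the arrival and write down the constraints they must satisfy:
\begin{itemize}
\item each strip of width $\delta$ contains exactly one descent;
\item the horizontal moves stay inside their strip;
\item consecutive $x$-speeds differ by at most one.
\end{itemize}
I would then show that no sequence with speeds close to $\delta$ satisfies them when $\delta\ge7$. For small $\delta$ there may be slack, which is presumably why the hypothesis $\delta\ge 7$ appears. If this direct case analysis becomes unwieldy, the first thing I would try is an invariant: the $x$-offset of the vehicle within its current strip at each descent, and how it evolves with the speed. The aim is to show that this offset cannot reach $0$ with speed $\ge\delta-1$.
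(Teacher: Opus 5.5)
Your overall plan (show that the first move of $x$-speed $\delta$ is forced between consecutive cities, then trace back a few moves) is the paper's route: Lemma~\ref{lem:vunique} followed by the predecessor analysis. However, the two steps that carry the proof are not established, and parts of what you assert are false.

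\textbf{Step~2.} It rests on an optimality heuristic (excursions are ``unable to help''). That says nothing about an impossibility claim over \emph{every} NTU trajectory. Restricting to optimal trajectories changes the statement, and restricting to trajectories that reach speed $\delta$ is circular. As a global claim it is false: for $\delta=21$, starting at city $0$ at rest, the velocities $(1,-1),(2,-2),(3,-1),(4,0),(5,1),(6,2)$ end exactly on city $1$ after dipping three units below it.

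What is needed is a \emph{local} argument on the moves around the first move of $x$-speed $\delta$. These moves have positive $x$-speed, so any city whose abscissa lies strictly inside a move's $x$-range must be a lattice point of that move. Its offset from the move's endpoint then fixes the slope, and together with $|w_s-w_{s-1}|\le 1$ this pins down the $y$-speeds.

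Your constraint ``horizontal moves stay inside their strip'' is also wrong, because a horizontal move at height $-(i+1)$ may cross $x=(i+1)\delta$ through city $i+1$. For $\delta=6$, from $(0,0)$ at rest, the velocities $(1,0),(2,0),(3,-1),(3,-1),(4,0),(5,-1),(6,-1)$ visit cities $1,2,3,4$ in order and reach $x$-speed $6$. City $2$ lies inside the move from $(9,-2)$ to $(13,-2)$. Your constraints exclude this genuine NTU trajectory, so they cannot be where $\delta\ge 7$ enters.

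\textbf{Step~3.} This is where the hypothesis is used, and it is left as an expectation. The mechanism you anticipate (partial sums of $x$-speeds hitting multiples of $\delta$) is not what produces the contradiction; the slope constraint is. Suppose the first speed-$\delta$ move is $(\delta,-1)$ from city $j$ to city $j+1$.
\begin{itemize}
\item Its predecessor must be $(\delta-1,-1)$ starting at $((j-1)\delta+1,-(j-1))$. A $y$-speed of $-2$ or $0$ would force the move before to be parallel to $(1,1)$ or $(1,-1)$.
\item The move before is then horizontal, $(u,0)$ with $u\in\{\delta-2,\delta-1\}$, passing through city $j-1$.
\item The move before that has $x$-speed $u'\ge u-1=\delta-m$, where $m=\delta+1-u\in\{2,3\}$. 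It must contain city $j-2$ at offset $(-m,1)$ from its head. If $u'>m$, it is parallel to $(m,-1)$, so $|w|=u'/m>1$, which is impossible right after a horizontal move.
\end{itemize}
This closes both branches exactly when $\delta\ge 7$; the $\delta=6$ example above is the branch $u'=m=3$.

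\textbf{Your enumeration of the first speed-$\delta$ move is off.} A $(\delta,0)$ move from a city to the next strip boundary cannot occur. The genuine alternative is missed: $(\delta,0)$ with city $j+1$ one unit after its tail, preceded by $(\delta-1,-1)$ starting at city $j$. It survives a short backward trace: for $\delta=7$, the moves $(5,-1),(4,0),(5,-1),(6,-1),(7,0)$ starting at a city are consistent with every city they cross. It must be excluded by looking forward. City $j+2$ then sits at offset $(1,-1)$ from the head, so the next move, of $x$-speed at least $\delta-1$, would have to be parallel to $(1,-1)$. If instead $j+1=n$, the vehicle has passed $x=n\delta$ and cannot come back.
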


\begin{theorem}
For any fixed $\delta\ge7$ and constant $k\in \mathbb{Z}$, there exists a $n_0 = O(k^2 \delta^2)$ such that $\forall n \ge n_0$, there exists an TU($k \delta$) with a better cost than an NTU.
\label{thm:ntvsitu}
\end{theorem}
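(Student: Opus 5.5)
The idea is to compare per-city costs. Far from the endpoints, any NTU trajectory pays at least a fixed amount per city that exceeds what a fast TU trajectory pays. That excess grows linearly in $n$, while the TU trajectory only pays a one-off overhead of order $k^2\delta^2$ or less for its excursion. I would organise the argument around the quantity ``time spent per unit of $x$-distance'' in the middle of the instance.

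\emph{Step 1 (lower bound for NTU).} By Lemma~\ref{lem:impossibility}, every NTU trajectory has $x$-speed at most $\delta-1$ throughout. The $x$-distance from city $1$ to city $n$ is $(n-1)\delta$. So any NTU trajectory uses at least $\lceil (n-1)\delta/(\delta-1)\rceil \ge (n-1)\bigl(1+\tfrac{1}{\delta-1}\bigr)$ steps. It then gains an extra $\Omega(\sqrt{\delta})$ for the acceleration and braking from and to zero speed (Case~0 of Lemma~\ref{lem:reduce-all}). For the comparison I only need the linear term $(n-1)\frac{\delta}{\delta-1}$.

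\emph{Step 2 (upper bound for TU($k\delta$)).} I construct an explicit TU trajectory in three phases.
\begin{itemize}
\item Start at city $1$ and decelerate leftwards to speed $0$ at a point $p=-\uparrow(k\delta)$. This takes $O(\sqrt{\uparrow(k\delta)})=O(k\delta)$ steps by Case~1, up to a zero-speed turn.
\item Accelerate rightwards so as to pass city $1$ with $x$-speed exactly $k\delta$. Cruise at $x$-speed $k\delta$. In $y$, use speed $-1/k$ on average: alternate $y$-speeds $0$ and $-1$ suitably (or use $-1$ every $k$ steps), which is allowed since $|\Delta v_y|\le 1$. This way every $k$-th configuration visits the city it should. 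Because consecutive cities differ by $(\delta,-1)$ and $k\delta$ is a multiple of $\delta$, each step of length $k\delta$ in $x$ traverses cities exactly when the $y$-coordinate matches.
\item Near city $n$, overshoot and brake to speed $0$, then come back to city $n$. This mirrors the start and costs $O(k\delta)$ extra steps, again by Case~1 of Lemma~\ref{lem:reduce-all}.
\end{itemize}
The total cost is at most $(n-1)/k + O(k\delta)$.

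I expect the main obstacle to be the cruising phase of Step 2, namely making the ``visits'' condition hold exactly. A configuration moving by $(k\delta,-1)$ only traverses the segment through intermediate cities if these lie on the segment. With a single vertical unit per $k\delta$ horizontal units, the lattice points $(i\delta,-i)$ are not collinear with such a segment unless $k=1$. The fix I would try first is a per-city argument: it suffices that each individual city lies on some traversed segment, in order. I would zigzag in $y$ so that each move of $x$-length $k\delta$ contains at least one city, and charge the extra steps to the $y$-corrections. Alternatively, I would cruise at $x$-speed $\delta$ with $y$-speed $-1$. This also yields cost $n-1+O(\delta)$ and already beats NTU, which is enough for $k\ge1$. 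If the slope issue is irreparable for larger $k$, I would fall back to this speed-$\delta$ version, which still fits the TU($k\delta$) bound since $\delta\le k\delta$.

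\emph{Step 3 (comparison).} I need $(n-1)/k + Ck\delta < (n-1)\delta/(\delta-1)$; in the fallback version, $n-1+C\delta < (n-1)\delta/(\delta-1)$. Using the speed-$\delta$ version, this reads $(n-1)/(\delta-1) > C\delta$, so $n_0=O(\delta^2)\subseteq O(k^2\delta^2)$ suffices. With the overhead $O(k^2\delta^2)$ coming from accelerating to speed $k\delta$ (braking distance $\uparrow(k\delta)$, split over cities), the threshold is $n_0=O(k^2\delta^2)$ in general.
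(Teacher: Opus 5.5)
Your proof is correct, but only through the fallback, and the fallback takes a different route from the paper.

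\textbf{Comparison with the paper.} The paper builds, for each $k$, a trajectory that first moves left, comes back through the first city at $x$-speed $k\delta$, and then cruises. Each cruising move visits $k$ cities, so the cost is $n/k+O(k\delta)$, compared against the NTU bound of roughly $n\delta/(\delta-1)$. You instead use the $k=1$ trajectory, cruising at $(\delta,-1)$, for every $k\ge 1$, since a TU trajectory of maximum $x$-speed $\delta$ already lies in TU($k\delta$). You pair this with a simpler NTU bound: net $x$-displacement divided by the maximal speed $\delta-1$ given by Lemma~\ref{lem:impossibility}. This proves the statement as written, and with a $k$-independent threshold $n_0=O(\delta^2)$.

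The price is content. Your version only shows that turning at all beats NTU. The paper's cost $n/k+O(k\delta)$ shows the gain growing with $k$ once $n$ is large, which is the point of that section (the experiments reach speed $2\delta$ at $n=100$). One detail to state explicitly in your write-up: arriving at the first city with $y$-speed $-1$ needs a small detour in $y$. That detour is feasible for every length $\ge 3$, so it combines with the $\Theta(\delta)$-step manoeuvre in $x$.

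\textbf{Problem with your primary Step 2.} As written it fails, because it misreads the instance. The cities $(i\delta,-i)$ all lie on the line $y=-x/\delta$. So the cruising velocity at $x$-speed $k\delta$ must be $(k\delta,-k)$. It cannot be $(k\delta,-1)$, and it cannot have $y$-speed $-1/k$ on average: either choice leaves the line of cities after the first move and misses them. With $(k\delta,-k)$, one move from a city passes through the next $k$ cities in order of increasing distance from its tail. That is exactly what the visiting rule allows, so there is no slope obstacle and no zigzag is needed.

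Also, the excursion overhead is $O(k\delta)$ steps, not $O(k^2\delta^2)$. Covering the distance $\uparrow(k\delta)$ from rest to rest takes $O(\sqrt{\uparrow(k\delta)})$ steps. The quadratic quantity is a distance, not a step count, so your closing remark about ``splitting'' it over cities is not where the threshold comes from.
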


\subsection{Comparison with another algorithm}

To conclude this section, we provide a few experimental results on the performance of our DP algorithms. In particular, we study the impact of Conjecture~\ref{conj:speed} on the time complexity. For completeness, the performance of the algorithm is also compared to the only existing algorithm solving \MT, used as a subroutine by of the \textsc{Vector TSP} algorithm from~\cite{vectortsp}. This algorithm consists of a fast estimation function for lower bounding the length of a multi-point trajectory (using one-dimensional projections), this function being then used as a guiding heuristic for A*. 

The main goal of these experiments is to study how Conjecture~\ref{conj:speed} impacts the efficiency of both algorithms. In particular, how the time complexity scales with (1) the number of cities $n$, and (2) the width of the area $L$ containing cities.
The experiments were made in 2D. 
For both questions, we ran both algorithms with or without assuming that Conjecture~\ref{conj:speed} is true. The instances were generated by picking positions of the cities uniformly at random in a $L\times L$ square. The results were averaged over 20 instances for each combination of parameters.

\begin{figure}[!htb]
	\centering
	\begin{subfigure}[!t]{0.49\textwidth}
          \centering
          \vspace{5pt}
                \includegraphics[scale=0.28]{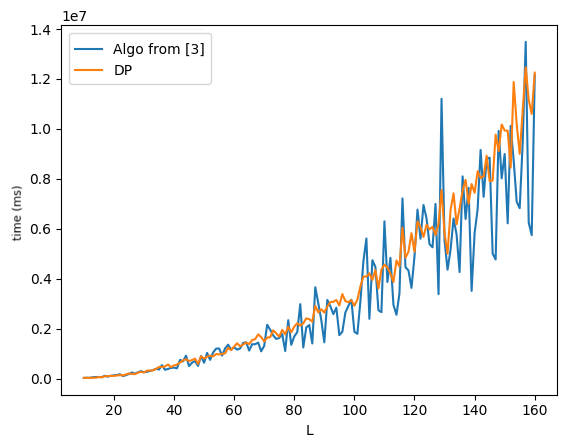}
	\end{subfigure}
	\begin{subfigure}[!t]{0.49\textwidth}
		\centering
    \includegraphics[scale=0.4]{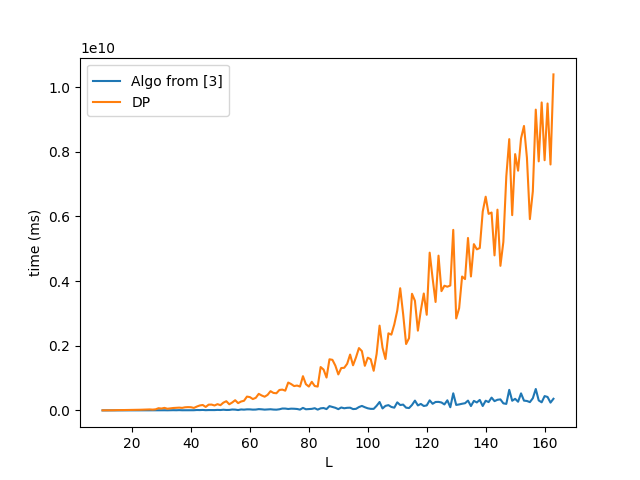}
  \end{subfigure}
	\caption{Running time when varying the size of the area $L\times L$, with a fixed number of cities $n=10$, assuming Conjecture~\ref{conj:speed} (left) or not assuming it (right).}
	\label{fig:experiments1}
\end{figure}

\begin{figure}[!htb]
	\centering
	\begin{subfigure}[!t]{0.49\textwidth}
		\centering
    \includegraphics[scale=0.4]{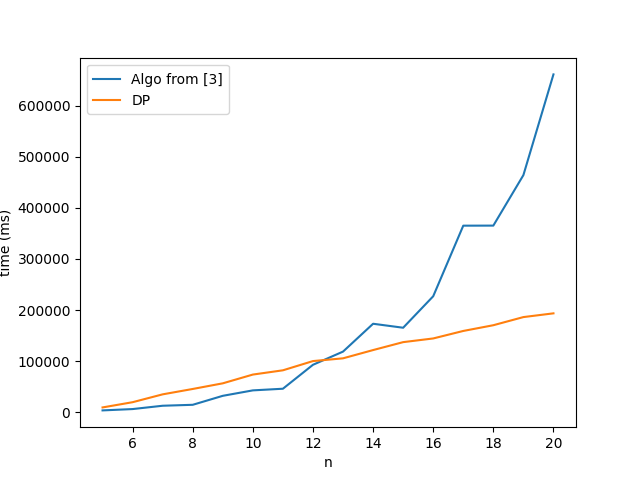}
	\end{subfigure}
	\begin{subfigure}[!t]{0.49\textwidth}
		\centering
    \includegraphics[scale=0.4]{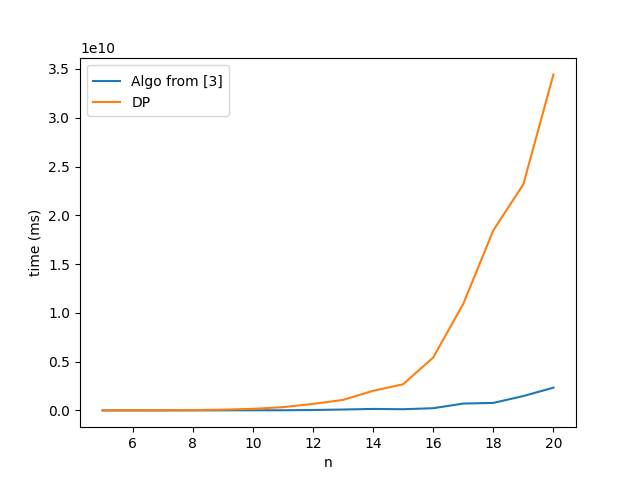}
  \end{subfigure}
	\caption{Running time when varying the number of cities $n$, with a fixed size of the area $100\times 100$, assuming Conjecture~\ref{conj:speed} (left) or not assuming it (right).}
	\label{fig:experiments2}
\end{figure}
The results are shown in Figure~\ref{fig:experiments1} and~\ref{fig:experiments2}. Clearly, these experiments confirm the high sensitivity of the DP algorithm to the speed bound, which is not surprising, as its running time has a quadratic dependency on the number of configuration in the $\C_i$ sets, this number being itself essentially quadratic with the maximum speed. More precisely, without Conjecture~\ref{conj:speed}, the DP algorithm does not appear to be relevant. In particular, it scales poorly compared to~\cite{vectortsp}, both in terms of the size and the number of cities. The situation is very different under Conjecture~\ref{conj:speed}. In this case, both algorithms scale similarly with the size of the area, and the DP algorithm scales much better with $n$ (apparently, linearly). 

We interpret these results as a strong incentive to design better ways to bound the speed, or more generally. More generally, it seems quite clear that further theoretical results restricting further the number of candidate configurations for each city would make the DP algorithm a great option for \MT.

\newpage
\bibliography{paper} 

\newpage
\appendix

\section{Proof of Lemma~\ref{lem:formula}}
\label{sec:appendix_computation}

The goal of this section is to prove the following lemma from Section~\ref{sec:main-case}:

\Formula*

The proof follows two steps. First we compute expressions for $\dmin(t)$ and $\dmax(t)$ in the various cases where $s\le s'$ or $s>s'$, and where $t$ is even or odd. The second step is routine calculation to show that these expressions reduce to the ones given in the statement. It is a general comment that to pass from speed $s$ to $s'$ requires at least $\mid s- s'\mid$ time steps, called $t_{min}$ in~Section~\ref{sec:main-case}. Thus, our expressions of $\dmin(t)$ and $\dmax(t)$ will be interpreted only in the case that $t\ge t_{min}$.

\subsection{Calculation of $\dmin(t)$}
\noindent{\bf Case 1: $s\le s'$}.

In general, for $t\ge s'-s$ the trajectory leading to $\dmin(t)$ is made of two parts. First, $t-s'+s$ steps are used to decrease the speed to a minimal value and then increasing the speed to get back to speed $s$. In the second part, $s'-s$ time steps are used to pass from speed $s$ to $s'$.

If $t-s'+s$ is even the speed goes from $s$ to $s-\frac{t-s'+s}{2}=-\frac{t-s'-s}{2}$ and back to $s$. The trajectory spans a total distance of 
\begin{equation}\label{evencase}
\frac{t-s'+s}{2}\biggl(\frac{3s+s'-t}{2}\biggr).
\end{equation}
 If $t-s'+s$ is odd the speed goes from $s$ to $s-\lfloor\frac{t-s'+s}{2}\rfloor=s-\frac{t-1-s'+s}{2}$ and  back to $s$. The trajectory spans a distance of
\begin{equation}\label{oddcase} s+\frac{t-1-s'+s}{2}\biggl(\frac{3s-t+s'-1}{2}\biggr).
\end{equation}

Equation $(\ref{evencase})$ comes from the summation
\begin{equation*}
\underbrace{(s-1)+\ldots+(s-\frac{t-s'+s}{2})}_{\frac{t-s'+s}{2}\text{ terms}}+\underbrace{(s-\frac{t-s'+s}{2}+1)+\ldots+s}_{\frac{t-s'+s}{2}\text{ terms}},
\end{equation*}
and Equation $(\ref{oddcase})$  from
\begin{equation*}
\underbrace{(s-1)+\ldots+(s-\frac{t-1-s'+s}{2})}_{\frac{t-1-s'+s}{2}\text{ terms}}+(s-\frac{t-1-s'+s}{2})+\underbrace{(s-\frac{t-1-s'+s}{2}+1)+\ldots+s}_{\frac{t-1-s'+s}{2}\text{ terms}}.
\end{equation*}

The last $s'-s$ time units are used to pass from speed $s$ to $s'$ and impact a deplacement of $\dmin(s'-s)= (s+1)\cdot\ldots\cdot(s+(s'-s))=(s'-s)(s'+s+1)/2$. Adding this to Equations $(\ref{evencase})$ and $(\ref{oddcase})$ leads to $\dmin(t)$.

$$\dmin(t)=\begin{cases}0 & t<s'-s\\ \dmin(s'-s)+\frac{t-s'+s}{2}\biggl(\frac{3s+s'-t}{2}\biggr) & t-s'+s\text{ is even} \\ \dmin(s'-s)+s+\frac{t-1-s'+s}{2}\biggl(\frac{3s-t+s'-1}{2}\biggr)& t-s'+s\text{ is odd}   \end{cases}$$

We notice that $\dmin(t)=\alpha$ with $\alpha$ given in the statement of the lemma. Moreover, direct calculations lead to
$$ \frac{t-s'+s}{2}\biggl(\frac{3s+s'-t}{2}\biggr)= s+\frac{t-1-s'+s}{2}\biggl(\frac{3s-t+s'-1}{2}\biggr) - \frac{1}{4},$$
which shows that the expression for $\dmin(t)$ given for $t-s'+s$ is even can be used to compute $\dmin(t)$ when $t-s'+s$ is odd provided we floor the result. This shows that the lemma holds when $s'\ge s$. \medskip

\noindent{\bf Case 2: $s\ge s'$}

In this case, the minimum number of time units to pass from speed $s$ to $s'$ is $s-s'$ and this needs a distance of 
$$\dmin(s-s')=(s-1)+\ldots+s'=(s-s')(s+s'-1)/2.$$
We proceed similarly than for the case 1 and get the formula
$$\dmin(t)=\begin{cases}0 & t<s-s'\\ \dmin(s-s')+\frac{t-s+s'}{2}\biggl(\frac{3s'+s-t}{2}\biggr) & t-s+s'\text{ is even} \\ \dmin(s-s')'+s'+\frac{t-1-s+s'}{2}\biggl(\frac{3s'-t+s-1}{2}\biggr)& t-s+s'\text{ is odd}   \end{cases}$$
which is the same (although in a different form) as the one we get for $s'\ge s$ and the statement of the lemma is then vefified in the case $s\ge s'$.

\subsection{Calculation of $\dmax(t)$}
\noindent{\bf Case 1  $s\le s':$ } To pass from speed $s$ to $s'$ needs $s'-s$ time units and distance $\dmax(s'-s)=(s+1)+\ldots+s'=(s'-s)(s+s'+1)/2$. After reaching speed $s'$, the car accelerate to a maximal speed and back to speed $s'$ if $t-s+s'>0$. 

If $t-s'+s$ is even the distance traveled from speed $s'$ to a maximal speed and then back to speed $s$ is
\begin{equation*}
\underbrace{(s'+1)+\ldots+(s'+\frac{t-s'+s}{2})}_{\frac{t-s'+s}{2}\text{ terms}}+\underbrace{(s'+\frac{t-s'+s}{2}-1)+\ldots+s'}_{\frac{t-s'+s}{2}\text{ terms}},
\end{equation*}
which evaluates to
\begin{equation}\label{maxeven}
\frac{t-s'+s}{2}\biggl(\frac{t+3s'+s}{2}\biggr).
\end{equation} 
If $t-s'+s$ is odd the distance traveled from speed $s'$ to a maximal speed and then back to speed $s$ is
\begin{equation*}
\underbrace{(s'+1)+\ldots+(s'+\frac{t-1-s'+s}{2})}_{\frac{t-1-s'+s}{2}\text{ terms}}+(s'+\frac{t-1-s'+s}{2})+\underbrace{(s'+\frac{t-1-s'+s}{2}-1)+\ldots+s'}_{\frac{t-1-s'+s}{2}\text{ terms}},
\end{equation*}
which leads to the following closed form
\begin{equation}\label{maxodd}
\frac{t+1-s'+s}{2}\biggl(\frac{t-1+3s'+s}{2}\biggr)-s'
\end{equation}
To summarize:
$$\dmax(t)=\begin{cases}0 & t<s'-s\\ \dmax(s'-s)+\frac{t-s'+s}{2}\biggl(\frac{t+3s'+s}{2}\biggr) & t-s+s'\text{ is even} \\ \dmax(s'-s)+\frac{t+1-s'+s}{2}\biggl(\frac{t-1+3s'+s}{2}\biggr)-s'& t-s+s'\text{ is odd}   \end{cases}$$
We see that the expression for $\dmax(t)$ when $t-s+s'$ is even is the one given in the lemma statement. To handle the case when $t-s+s'$ is odd, we compute
$$\frac{t-s'+s}{2}\biggl(\frac{t+3s'+s}{2}\biggr)-\frac{t+1-s'+s}{2}\biggl(\frac{t-1+3s'+s}{2}\biggr)+s'=\frac{1}{4}$$
which shows that if the expression for $t-s+s'$ is used to compute $\dmax(t)$ when $t-s+s'$ is odd then ceiling leads to the right result as stated in the statment of the lemma.
\medskip

\noindent{\bf Case 2  $s\ge s':$ } In this case we proceed similarly and we do not reproduce the detailed calculation here. The final result is
$$\dmax(t)=\begin{cases}0 &  t<s-s'\\ \dmax(s-s')+\frac{t-s+s'}{2}\biggl(\frac{t+3s+s'}{2}\biggr) & t-s+s'\text{ is even} \\ \dmax(s-s')+\frac{t+1-s+s'}{2}\biggl(\frac{t-1+3s+s'}{2}\biggr)-s& t-s+s'\text{ is odd}   \end{cases}$$
with $\dmax(s-s')=(s-s')(s+s'+1)$, which leads to the statement of the lemma.

\section{Proof of Lemma~\ref{lem:existence_k1k2}}
\label{sec:appendix_existence}

Let us start by explicating the computation of the gray area. Fix an integer $k\geq0$ and denote with $A(k)$ the amount of area in {\bf A} where we remove \emph{layers} from $\drefmin$

\begin{figure}[!htb]
	\centering
	\begin{tikzpicture}[thick, scale=0.7, every node/.style={scale=1}]
		\coordinate (s) at (0,2);
		\coordinate (e) at (7,3);
		\coordinate (dmin) at (3,-1);

		\draw (-0.15,3) node[left]{$s'$}--(0.15,3);
		\draw (-0.15,2) node[left]{$s$}--(0.15,2);
		\draw (7,-0.15) node[below]{$\ell$}--(7,0.15);

		\draw[-stealth](0,-1) -- (0,4) node[thick, scale=0.9,above] {speed};
		\draw[-stealth](0,0) -- (8.5,0) node[thick, scale=0.9,above] {length};

		\draw (1.5,0.15)--(1.5,-0.15) node[below, scale=0.6]{$k$};

		\draw [draw=none, fill=gray, opacity=0.3] (s) -- (1.5,0.5) -- (4.5,0.5) -- (6,2) -- cycle;

		\node [color=gray] at (3,1) {$A(k)$};

		\draw [blue, line width=0.3mm] (1.5,0.5)--(dmin)--(4.5,0.5);
		\draw [darkspringgreen, line width=0.3mm] (s) -- (6,2);

		\draw [orange, dashed, line width=0.3mm, dash pattern=on 2pt off 2pt] (s) -- (1.5,0.5);
		\draw [blue, dashed, line width=0.3mm, dash pattern=on 2pt off 2pt,dash phase=2pt] (s) -- (1.5,0.5);

		\draw [orange, dashed, line width=0.3mm, dash pattern=on 2pt off 2pt] (4.5,0.5) -- (6,2);
		\draw [blue, dashed, line width=0.3mm, dash pattern=on 2pt off 2pt,dash phase=2pt] (4.5,0.5) -- (6,2);

		\draw [orange, line width=0.3mm] (1.5,0.5) -- (4.5,0.5);

		\draw [orange, dashed, line width=0.3mm, dash pattern=on 1pt off 2pt,dash phase=2pt] (6,2) -- (e);

		\draw [blue, dashed, line width=0.3mm, dash pattern=on 1pt off 2pt,dash phase=1pt] (6,2) -- (e);
		\draw [darkspringgreen, dashed, line width=0.3mm,dash pattern=on 1pt off 2pt] (6,2) -- (e);  

	\end{tikzpicture}
	\caption{The symmetric amount of area $A(k)$ that we can remove from $\drefmin$}
	\label{fig:akfigure}
\end{figure}

First of all we can express the amount $A(k)$ (Figure \ref{fig:akfigure}).
\begin{align*}
A(k) &= 1 + 2 + \ldots + (k) + \left( \ell-|s'+s| - 2k \right)k + (k-1) + \ldots + 2 + 1 \\  
 &= \frac{k(k+1)}{2} + \left( \ell-|s'-s| - 2k \right)k + \frac{k(k-1)}{2} \\
 &= -k^2 + k(\ell-|s'-s|)
\end{align*}
$A(k)$ satisfies different properties:
\begin{itemize}
\item $A(k)$ is a concave quadratic equation
\item $A(k)$ attains its maximum at $k_m=(\ell-|s'-s|)/2 \geq 0$ with value $A(k_m) = \frac{(\ell-|s'-s|)^2}{4}$
\item $A(0) = 0$ and $A(k)$ is increasing on $[0, k_m]$
\end{itemize}

\noindent{\bf Case 1 $\delta \in$ {\bf A}}: we can express the \emph{true} integer distance $D = \drefmin(\ell) - \delta$. Notice that $D$ is the maximum whenever $\delta=\dmin(\ell)$, i.e. $D = \drefmin(\ell)-\dmin(\ell) = (\ell-|s'-s|)^2/4-(\ell-|s'-s|)/2$. 
It follows that $A(k_m) = (\ell-|s'-s|)^2/4 \geq D$ whatever the value of $D$ is.
\begin{figure}[!htb]
	\centering
	\begin{subfigure}[!t]{0.49\textwidth}
		\centering
		\begin{tikzpicture}[thick, scale=0.6, every node/.style={scale=1}]
			\coordinate (s) at (0,2);
			\coordinate (e) at (7,3);
			\coordinate (dmin) at (3,-1);

			\draw (-0.15,3) node[left]{$s'$}--(0.15,3);
			\draw (-0.15,2) node[left]{$s$}--(0.15,2);
			\draw (7,-0.15) node[below]{$\ell$}--(7,0.15);

			\draw[-stealth](0,-1) -- (0,4) node[thick, scale=0.9,above] {speed};
			\draw[-stealth](0,0) -- (8.5,0) node[thick, scale=0.9,above] {length};

			\draw (1.5,0.15)--(1.5,-0.15) node[below, scale=0.6]{$k_1$};
			\draw (4.5,0.15)--(4.5,-0.15) node[below, scale=0.6]{$k_2$};

			\draw [draw=none, fill=gray, opacity=0.3] (s) -- (1.5,0.5) -- (4.5,0.5) -- (6,2) -- cycle;


			\draw [blue, line width=0.3mm] (1.5,0.5)--(dmin)--(4.5,0.5);
			\draw [darkspringgreen, line width=0.3mm] (s) -- (6,2);

			\draw [orange, dashed, line width=0.3mm, dash pattern=on 2pt off 2pt] (s) -- (1.5,0.5);
			\draw [blue, dashed, line width=0.3mm, dash pattern=on 2pt off 2pt,dash phase=2pt] (s) -- (1.5,0.5);

			\draw [orange, dashed, line width=0.3mm, dash pattern=on 2pt off 2pt] (4.5,0.5) -- (6,2);
			\draw [blue, dashed, line width=0.3mm, dash pattern=on 2pt off 2pt,dash phase=2pt] (4.5,0.5) -- (6,2);

			\draw [orange, line width=0.3mm] (1.5,0.5) -- (4.5,0.5);

			\draw [orange, dashed, line width=0.3mm, dash pattern=on 1pt off 2pt,dash phase=2pt] (6,2) -- (e);

			\draw [blue, dashed, line width=0.3mm, dash pattern=on 1pt off 2pt,dash phase=1pt] (6,2) -- (e);
			\draw [darkspringgreen, dashed, line width=0.3mm,dash pattern=on 1pt off 2pt] (6,2) -- (e);  
		\end{tikzpicture}
	\end{subfigure}
	\begin{subfigure}[!t]{0.49\textwidth}
		\centering
		\begin{tikzpicture}[thick, scale=0.6, every node/.style={scale=1}]
			\coordinate (s) at (0,2);
			\coordinate (e) at (7,3);
			\coordinate (dmin) at (3,-1);

			\draw (-0.15,3) node[left]{$s'$}--(0.15,3);
			\draw (-0.15,2) node[left]{$s$}--(0.15,2);
			\draw (7,-0.15) node[below]{$\ell$}--(7,0.15);

			\draw[-stealth](0,-1) -- (0,4) node[thick, scale=0.9,above] {speed};
			\draw[-stealth](0,0) -- (8.5,0) node[thick, scale=0.9,above] {length};

			\draw (1.5,0.15)--(1.5,-0.15) node[below, scale=0.6]{$k_1$};
			\draw (3,0.15)--(3,-0.15) node[below, scale=0.6]{$k_2$};

			\draw [draw=none, fill=gray, opacity=0.3] (s) -- (1.5,0.5) -- (3,0.5) -- (3.5,1) -- (5,1) -- (6,2) -- cycle;


			\draw [blue, line width=0.3mm] (1.5,0.5)--(dmin)--(5,1);
			\draw [darkspringgreen, line width=0.3mm] (s) -- (6,2);

			\draw [orange, dashed, line width=0.3mm, dash pattern=on 2pt off 2pt] (s) -- (1.5,0.5);
			\draw [blue, dashed, line width=0.3mm, dash pattern=on 2pt off 2pt,dash phase=2pt] (s) -- (1.5,0.5);

			\draw [orange, dashed, line width=0.3mm, dash pattern=on 2pt off 2pt] (5,1) -- (6,2);
			\draw [blue, dashed, line width=0.3mm, dash pattern=on 2pt off 2pt,dash phase=2pt] (5,1) -- (6,2);

			\draw [orange, line width=0.3mm] (1.5,0.5) --(3,0.5) -- (3.5,1) -- (5,1);

			\draw [orange, dashed, line width=0.3mm, dash pattern=on 1pt off 2pt,dash phase=2pt] (6,2) -- (e);

			\draw [blue, dashed, line width=0.3mm, dash pattern=on 1pt off 2pt,dash phase=1pt] (6,2) -- (e);
			\draw [darkspringgreen, dashed, line width=0.3mm,dash pattern=on 1pt off 2pt] (6,2) -- (e);  

		\end{tikzpicture}
	\end{subfigure}
	\caption{Sequences of speeds realizing respectively distance $\drefmin$ (green), $\delta$ (orange) and $\dmin$ (blue). The gray portion is the amount of area removed from $\drefmin$ in order to match 
	$D = \drefmin(\ell)-\delta$. Such construction always yields two intermediate configurations $k_1$ and $k_2$.}
	\label{fig:checkpointsA}
\end{figure}
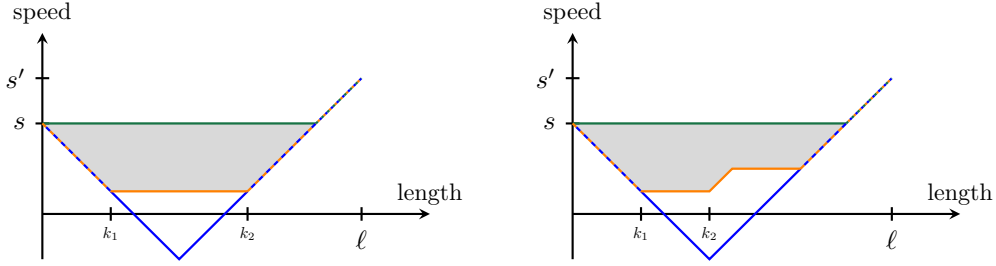

We proceed by solving the equation $A(k)=D$. Notice that this equation always has a positive root, since the discriminant is given by $(\ell-|s'-s|)^2-4D$, and by the properties of $A(k)$, the discriminant quantity is always positive. By taking the smaller 
root of this equation $k_1$, we distinguish between two cases:
\begin{itemize}
\item $k_1 \in \mathbb{Z}_{\ge 0}$: if $k_1$ is integer, then $A(k)$ matches exactly $D$; therefore, the equation $A(k)$ above coincides exactly with the amount of area to remove from $\drefmin$ in order to 
match $\delta$ (left scenario of Figure \ref{fig:checkpointsA}).\\
The second checkpoint is then given by the unique integer value that {\itshape aligns} with the $\dmin$ trajectory. This can be straightforwardly computed by $k_2 = \ell - |s' - s| - k_1$;
\item $k_1 \in \mathbb{R}_{> 0}$: if $k_1$ is real, then $A(k)$ does not match exactly $D$, and we take $\lceil k_1 \rceil$. This implies that $A(\lceil k_1 \rceil) - D > 0$. Furthermore, by construction, $A(\lceil k_1 \rceil-1) < D < A(\lceil k_1 \rceil)$,
hence there is an excess area $A(\lceil k_1\rceil)-D$. Therefore, there exists another integer value in the range $\left[ \lceil k_1 \rceil, t-|s' - s|- \lceil k1 \rceil \right]$ where we need 
to \emph{decrease} the area of $A(k)$ in order to match exactly $D$ (right scenario of Figure \ref{fig:checkpointsA}). This is given exactly by $k_2 = A(\lceil k_1 \rceil)-D$, which is a valid integer value since both 
$A(\lceil k_1 \rceil)$ and $D$ are integers. 
\end{itemize}

The above holds independently of whether $s\leq s'$ or $s > s'$. We need to distinguish such case whenever we describe the trajecory itself: indeed, such a trajectory can be described in a compact and deterministic way by a sequence of 
pairs $(a,b)$, where $a \in \{-1,0,+1\}$ (which stands for decelerate, stay at the same speed, accelerate) and $b\in \mathbb{Z}_{\geq 0}$, telling us for how many steps we should repeat action $a$.
\begin{itemize} 
	\item When $s\leq s'$:
	\begin{enumerate}
		\item In the first case ($k_1$ integer), we have\\
		$(-1,k_1), (0,t-s'+s-2k_1), (+1, s'-s+k_1)$
		\item In the second case, we have\\
		$(-1,k_1), (0,t-s'+s-2k_1-k_2), (+1,1), (0, k_2), (+1, s'-s+k_1-1)$
	\end{enumerate}
	\item When $s>s'$:
	\begin{enumerate}
		\item In the first case ($k_1$ integer), we have\\
		$(-1,s-s'+k), (0,t-s'+s-2k_1), (+1, k_1)$
		\item In the second case, we have\\
		$(-1,s-s'+k), (0,t-s'+s-2k_1-k_2), (+1,1), (0,k_2), (+1, k_1-1)$
	\end{enumerate}
\end{itemize} 

\noindent{\bf Case 2 $\delta \in$ {\bf B} and $s \leq s'$}:

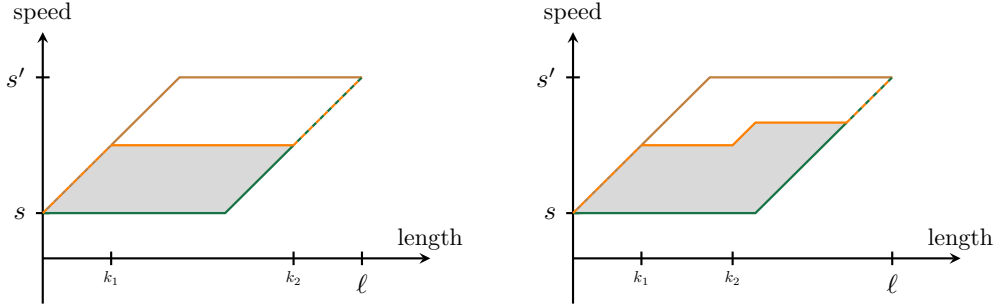
\begin{figure}[!htb]
	\centering
	\begin{subfigure}[!t]{0.49\textwidth}
		\centering
		\begin{tikzpicture}[thick, scale=0.6, every node/.style={scale=1}]
			\coordinate (s) at (0,1);
			\coordinate (e) at (7,4);

			\draw (-0.15,4) node[left]{$s'$}--(0.15,4);
			\draw (-0.15,1) node[left]{$s$}--(0.15,1);
			\draw (7,-0.15) node[below]{$\ell$}--(7,0.15);

			\draw[-stealth](0,-1) -- (0,5) node[thick, scale=0.9,above] {speed};
			\draw[-stealth](0,0) -- (8.5,0) node[thick, scale=0.9,above] {length};

			\draw (1.5,0.15)--(1.5,-0.15) node[below, scale=0.6]{$k_1$};
			\draw (5.5,0.15)--(5.5,-0.15) node[below, scale=0.6]{$k_2$};

			\draw [draw=none, fill=gray, opacity=0.3] (s) -- (1.5,2.5) -- (5.5,2.5) -- (4,1) -- cycle;


			\draw [color=brown, line width=0.3mm] (1.5,2.5) -- (3,4) -- (e);
			\draw [color=darkspringgreen, line width=0.3mm] (s) -- (4,1) -- (5.5,2.5);
			\draw [color=orange, line width=0.3mm] (1.5,2.5) -- (5.5,2.5);
			
			\draw [color=brown, dashed, line width=0.3mm, dash pattern=on 2pt off 2pt,dash phase=2pt] (s) -- (1.5,2.5);
			\draw [color=orange, dashed,line width=0.3mm, dash pattern=on 2pt off 2pt] (s) -- (1.5,2.5);
			
			\draw [color=darkspringgreen, dashed, line width=0.3mm, dash pattern=on 2pt off 2pt,dash phase=2pt] (5.5,2.5) -- (e);
			\draw [color=orange, dashed,line width=0.3mm, dash pattern=on 2pt off 2pt] (5.5,2.5) -- (e);

		\end{tikzpicture}
	\end{subfigure}
	\begin{subfigure}[!t]{0.49\textwidth}
		\centering
		\begin{tikzpicture}[thick, scale=0.6, every node/.style={scale=1}]
			\coordinate (s) at (0,1);
			\coordinate (e) at (7,4);

			\draw (-0.15,4) node[left]{$s'$}--(0.15,4);
			\draw (-0.15,1) node[left]{$s$}--(0.15,1);
			\draw (7,-0.15) node[below]{$\ell$}--(7,0.15);

			\draw[-stealth](0,-1) -- (0,5) node[thick, scale=0.9,above] {speed};
			\draw[-stealth](0,0) -- (8.5,0) node[thick, scale=0.9,above] {length};

			\draw (1.5,0.15)--(1.5,-0.15) node[below, scale=0.6]{$k_1$};
			\draw (3.5,0.15)--(3.5,-0.15) node[below, scale=0.6]{$k_2$};

			\draw [draw=none, fill=gray, opacity=0.3] (s) -- (1.5,2.5) -- (3.5,2.5) -- (4,3) -- (6,3) -- (4,1) -- cycle;


			\draw [color=brown, line width=0.3mm] (1.5,2.5) -- (3,4) -- (e);
			\draw [color=darkspringgreen, line width=0.3mm] (s) -- (4,1) -- (6,3);
			\draw [color=orange, line width=0.3mm] (1.5,2.5) -- (3.5,2.5) -- (4,3) -- (6,3);
			
			\draw [color=brown, dashed, line width=0.3mm, dash pattern=on 2pt off 2pt,dash phase=2pt] (s) -- (1.5,2.5);
			\draw [color=orange, dashed,line width=0.3mm, dash pattern=on 2pt off 2pt] (s) -- (1.5,2.5);
			
			\draw [color=darkspringgreen, dashed, line width=0.3mm, dash pattern=on 2pt off 2pt,dash phase=2pt] (6,3) -- (e);
			\draw [color=orange, dashed,line width=0.3mm, dash pattern=on 2pt off 2pt] (6,3) -- (e);
		\end{tikzpicture}
	\end{subfigure}
	\caption{Two possible trajectories of distance $\delta$ in the {\bf B} region in the case $s \leq s'$. The left scenario happens whenever $D$ matches exactly $A(k)$, while the right scenario whenever $A(k)$ is an \emph{underestimation} of $D$.}
	\label{fig:checkpointsBone}
\end{figure}

We can express the \emph{true} integer distance as $D = \delta-\drefmin(\ell)$. In this case, we can express the amount $A(k)$ linearly, just by counting the internal amount done at 
$k$ for $t-s'+s$ times, i.e. $A(k) = (t-s'+s)k$. We can solve the equation $A(k)=D$, take the only root of this equation $k_1 = D/(t-s'+s)$ and distinguish between two cases:
\begin{itemize}
\item $k_1 \in \mathbb{Z}_{\geq0}$, then $A(k)$ matches exactly $D$ (left Figure \ref{fig:checkpointsBone}); therefore we accelerate for $k_1$ steps, and then keep the same velocity. The second checkpoint is then given by the unique integer value that {\itshape aligns} with the $\drefmin$ trajectory. This can be straightforwardly computed by $k_2 = \ell - s' + s + k_1$;
\item $k_1 \in \mathbb{R}_{>0}$, then we take $\lfloor k_1 \rfloor$ (right Figure \ref{fig:checkpointsBone}); similarly we have $A(\lfloor k_1 \rfloor) < D < A(\lfloor k_1 \rfloor + 1)$; this time, we need to add an extra area portion by doing a 1 step acceleration in a certain integer point between $\lfloor k_1 \rfloor$ and $t-s'+s+\lfloor k_1 \rfloor$. This is given by  $k_2 = D - (\lfloor k_1 \rfloor)(t-s'+s)$. 
\end{itemize}

Hence we can describe such a trajectory in a compact and deterministic way by a sequence of pairs $(a,b)$:
\begin{enumerate}
	\item In the first case, we have $(+1,k_1), (0,t-s'+s), (+1, s'-s+k_1)$
	\item Otherwise, we have $(+1,k_1), (0,t-s'+s-k_2), (+1,1), (0, k_2), (+1, s'-s-k_1-1)$
\end{enumerate}

\noindent{\bf Case 2 $\delta \in$ {\bf B} and $s > s'$}: 

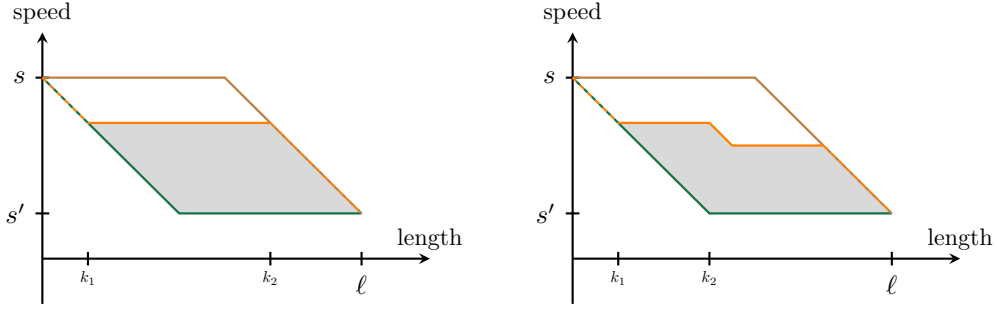
\begin{figure}[!htb]
	\centering
	\begin{subfigure}[!t]{0.49\textwidth}
		\centering
		\begin{tikzpicture}[thick, scale=0.6, every node/.style={scale=1}]
			\coordinate (s) at (0,4);
			\coordinate (e) at (7,1);

			\draw (-0.15,1) node[left]{$s'$}--(0.15,1);
			\draw (-0.15,4) node[left]{$s$}--(0.15,4);
			\draw (7,-0.15) node[below]{$\ell$}--(7,0.15);

			\draw[-stealth](0,-1) -- (0,5) node[thick, scale=0.9,above] {speed};
			\draw[-stealth](0,0) -- (8.5,0) node[thick, scale=0.9,above] {length};

			\draw (1,0.15)--(1,-0.15) node[below, scale=0.6]{$k_1$};
			\draw (5,0.15)--(5,-0.15) node[below, scale=0.6]{$k_2$};

			\draw [draw=none, fill=gray, opacity=0.3] (3,1) -- (1,3) -- (5,3) -- (e) -- cycle;


			\draw [color=brown, line width=0.3mm] (s) -- (4,4) -- (5,3);
			\draw [color=darkspringgreen, line width=0.3mm] (1,3) -- (3,1) -- (e);
			\draw [color=orange, line width=0.3mm] (1,3) -- (5,3);
			
			\draw [color=brown, dashed, line width=0.3mm, dash pattern=on 2pt off 2pt,dash phase=2pt] (5,3) -- (e);
			\draw [color=orange, dashed,line width=0.3mm, dash pattern=on 2pt off 2pt] (5,3) -- (e);
			
			\draw [color=darkspringgreen, dashed, line width=0.3mm, dash pattern=on 2pt off 2pt,dash phase=2pt] (s) -- (1,3);
			\draw [color=orange, dashed,line width=0.3mm, dash pattern=on 2pt off 2pt] (s) -- (1,3);

		\end{tikzpicture}
	\end{subfigure}
	\begin{subfigure}[!t]{0.49\textwidth}
		\centering
		\begin{tikzpicture}[thick, scale=0.6, every node/.style={scale=1}]
			\coordinate (s) at (0,4);
			\coordinate (e) at (7,1);

			\draw (-0.15,1) node[left]{$s'$}--(0.15,1);
			\draw (-0.15,4) node[left]{$s$}--(0.15,4);
			\draw (7,-0.15) node[below]{$\ell$}--(7,0.15);

			\draw[-stealth](0,-1) -- (0,5) node[thick, scale=0.9,above] {speed};
			\draw[-stealth](0,0) -- (8.5,0) node[thick, scale=0.9,above] {length};

			\draw (1,0.15)--(1,-0.15) node[below, scale=0.6]{$k_1$};
			\draw (3,0.15)--(3,-0.15) node[below, scale=0.6]{$k_2$};

			\draw [draw=none, fill=gray, opacity=0.3] (3,1) -- (1,3) -- (3,3) -- (3.5,2.5) -- (5.5,2.5) -- (e) -- cycle;


			\draw [color=brown, line width=0.3mm] (s) -- (4,4) -- (5.5,2.5);
			\draw [color=darkspringgreen, line width=0.3mm] (1,3) -- (3,1) -- (e);
			\draw [color=orange, line width=0.3mm] (1,3) -- (3,3) -- (3.5,2.5) -- (5.5,2.5);
			
			\draw [color=brown, dashed, line width=0.3mm, dash pattern=on 2pt off 2pt,dash phase=2pt] (5.5,2.5) -- (e);
			\draw [color=orange, dashed,line width=0.3mm, dash pattern=on 2pt off 2pt] (5.5,2.5) -- (e);
			
			\draw [color=darkspringgreen, dashed, line width=0.3mm, dash pattern=on 2pt off 2pt,dash phase=2pt] (s) -- (1,3);
			\draw [color=orange, dashed,line width=0.3mm, dash pattern=on 2pt off 2pt] (s) -- (1,3);

		\end{tikzpicture}
	\end{subfigure}
	\caption{Two possible trajectories of distance $\delta$ in the {\bf B} region in the case $s > s'$. The left scenario happens whenever $D$ matches exactly $A(k)$, while the right scenario whenever $A(k)$ is an \emph{overestimation} of $D$}
	\label{fig:checkpointsBtwo}
\end{figure}

We can express the \emph{true} integer distance as $D = \drefmax(\ell)-\delta$. Similarly as before, we can express the excess $A(k)$ just by counting the internal amount done at $k$ for $t-s+s'$ times, i.e. $A(k) = (t-s+s')k$. 
We can solve the equation $A(k)=D$, take the only root of this equation $k_1 = D/(t-s+s')$ and distinguish between two cases:
\begin{itemize}
\item $k_1 \in \mathbb{Z}_{\geq0}$, then $A(k)$ matches exactly $D$ (left Figure \ref{fig:checkpointsBtwo}); therefore we decelerate for $k_1$ steps, and then keep the same velocity. The second checkpoint is then given by the unique integer value that {\itshape aligns} with the $\drefmax$ trajectory. This can be straightforwardly computed by $k_2 = \ell - s + s' + k_1$;
\item $k_1 \in \mathbb{R}_{>0}$, then we take $\lfloor k_1 \rfloor$ (right Figure \ref{fig:checkpointsBtwo}); similarly we have $A(\lfloor k_1 \rfloor) < D < A(\lfloor k_1 \rfloor + 1)$; this time, we need to remove an extra area portion by doing a 1 step deceleration in a certain integer point between $\lfloor k_1 \rfloor$ and $t-s'+s+\lfloor k_1 \rfloor$. This is given by  $k_2 = D - (\lfloor k_1 \rfloor)(t-s+s')$. 
\end{itemize}

Hence we can describe such a trajectory in a compact and deterministic way by a sequence of pairs $(a,b)$:
\begin{enumerate}
	\item In the first case, we have $(-1,k_1), (0,t-s+s'), (+1, s-s'-k_1)$
	\item Otherwise, we have $(-1,k_1), (0,t-s+s'-k_2), (-1,1), (0, k_2), (-1, s-s'-k_1-1)$
\end{enumerate}

\noindent{\bf Case 3 $\delta \in$ {\bf C}}:

\begin{figure}[!htb]
	\centering
	\begin{subfigure}[!t]{0.49\textwidth}
		\centering
		\begin{tikzpicture}[thick, scale=0.6, every node/.style={scale=1}]
			\coordinate (s) at (0,1);
			\coordinate (e) at (7,2);
			\coordinate (dmax) at (4,5);

			\draw (-0.15,2) node[left]{$s'$}--(0.15,2);
			\draw (-0.15,1) node[left]{$s$}--(0.15,1);
			\draw (7,-0.15) node[below]{$\ell$}--(7,0.15);

			\draw[-stealth](0,-1) -- (0,6) node[thick, scale=0.9,above] {speed};
			\draw[-stealth](0,0) -- (7.5,0) node[thick, scale=0.9,above] {length};

			\draw (2.5,0.15)--(2.5,-0.15) node[below, scale=0.6]{$k_1$};
			\draw (5.5,0.15)--(5.5,-0.15) node[below, scale=0.6]{$k_2$};

			\draw [draw=none, fill=gray, opacity=0.3] (1,2) -- (2.5,3.5) -- (5.5,3.5) -- (e) -- cycle;


			\draw [color=red, line width=0.3mm] (2.5,3.5) -- (dmax) -- (5.5,3.5);
			\draw [color=orange, line width=0.3mm] (2.5,3.5) -- (5.5,3.5);
			\draw [color=brown, line width=0.3mm] (1,2) -- (e);
			
			\draw [color=brown, dashed, line width=0.3mm, dash pattern=on 2pt off 2pt,dash phase=2pt] (s) -- (1,2);
			\draw [color=orange, dashed,line width=0.3mm, dash pattern=on 2pt off 2pt] (s) -- (1,2);
			
			\draw [color=red, dashed, line width=0.3mm, dash pattern=on 2pt off 2pt,dash phase=2pt] (1,2) -- (2.5,3.5);
			\draw [color=orange, dashed,line width=0.3mm, dash pattern=on 2pt off 2pt] (1,2) -- (2.5,3.5);

			\draw [color=red, dashed, line width=0.3mm, dash pattern=on 2pt off 2pt,dash phase=2pt] (5.5,3.5) -- (e);
			\draw [color=orange, dashed,line width=0.3mm, dash pattern=on 2pt off 2pt] (5.5,3.5) -- (e);

		\end{tikzpicture}
	\end{subfigure}
	\begin{subfigure}[!t]{0.49\textwidth}
		\centering
		\begin{tikzpicture}[thick, scale=0.6, every node/.style={scale=1}]
			\coordinate (s) at (0,1);
			\coordinate (e) at (7,2);
			\coordinate (dmax) at (4,5);

			\draw (-0.15,2) node[left]{$s'$}--(0.15,2);
			\draw (-0.15,1) node[left]{$s$}--(0.15,1);
			\draw (7,-0.15) node[below]{$\ell$}--(7,0.15);

			\draw[-stealth](0,-1) -- (0,6) node[thick, scale=0.9,above] {speed};
			\draw[-stealth](0,0) -- (7.5,0) node[thick, scale=0.9,above] {length};

			\draw (2.5,0.15)--(2.5,-0.15) node[below, scale=0.6]{$k_1$};
			\draw (4,0.15)--(4,-0.15) node[below, scale=0.6]{$k_2$};

			\draw [draw=none, fill=gray, opacity=0.3] (1,2) -- (2.5,3.5) -- (4,3.5) -- (4.5,3) -- (6,3) -- (e) -- cycle;


			\draw [color=red, line width=0.3mm] (2.5,3.5) -- (dmax) -- (6,3);
			\draw [color=orange, line width=0.3mm] (2.5,3.5) -- (4,3.5) -- (4.5,3) -- (6,3);
			\draw [color=brown, line width=0.3mm] (1,2) -- (e);
			
			\draw [color=brown, dashed, line width=0.3mm, dash pattern=on 2pt off 2pt,dash phase=2pt] (s) -- (1,2);
			\draw [color=orange, dashed,line width=0.3mm, dash pattern=on 2pt off 2pt] (s) -- (1,2);
			
			\draw [color=red, dashed, line width=0.3mm, dash pattern=on 2pt off 2pt,dash phase=2pt] (1,2) -- (2.5,3.5);
			\draw [color=orange, dashed,line width=0.3mm, dash pattern=on 2pt off 2pt] (1,2) -- (2.5,3.5);

			\draw [color=red, dashed, line width=0.3mm, dash pattern=on 2pt off 2pt,dash phase=2pt] (6,3) -- (e);
			\draw [color=orange, dashed,line width=0.3mm, dash pattern=on 2pt off 2pt] (6,3) -- (e);

		\end{tikzpicture}
	\end{subfigure}
	\caption{Two possible trajectories of distance $\delta$ in the {\bf C} region. The left scenario happens whenever $A(k)$ matches exactly $D$, while the right scenario whenever $A(k)$ is an \emph{overestimation} of $D$}
	\label{fig:checkpointsC}
\end{figure}
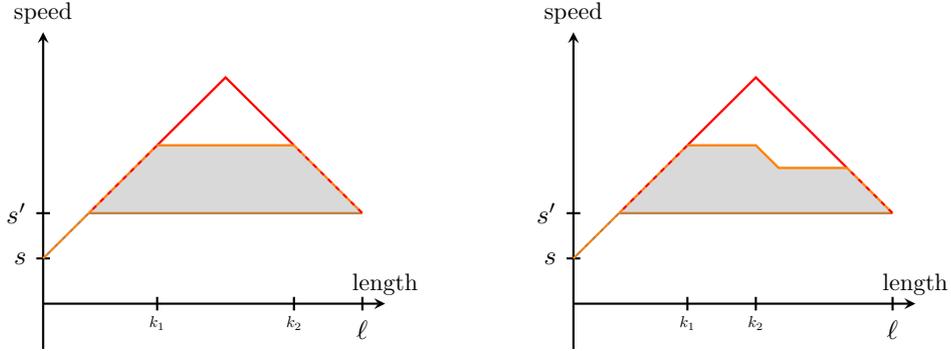

We can express the \emph{true} integer distance between $D = \delta-\drefmax(\ell)$. Similarly as Case 1, $A(k)$ satisfies the same property.

We solve the equation $A(k)=D$. By taking the smaller root of this equation $k_1$, we distinguish between two cases:
\begin{itemize}
\item $k_1 \in \mathbb{Z}_{\ge 0}$: if $k_1$ is integer, then $A(k)$ matches exactly $D$ (left Figure \ref{fig:checkpointsC}); therefore, the equation $A(k)$ above coincides exactly with the amount of area to add to $\drefmin$ in order to 
match $\delta$.\\
The second checkpoint is then given by the unique integer value that {\itshape aligns} with the $\dmax$ trajectory. This can be straightforwardly computed by $k_2 = \ell - |s'-s| - k_1$;
\item $k_1 \in \mathbb{R}_{> 0}$: if $k_1$ is real, then $A(k)$ does not match exactly $D$, and we take $\lceil k_1 \rceil$ (right Figure \ref{fig:checkpointsC}); This implies that $A(\lceil k_1 \rceil) - D > 0$. Furthermore, by construction, $A(\lceil k_1 \rceil-1) < D < A(\lceil k_1 \rceil)$,
hence the $A(k)$ construction has an excess area $A(\lceil k_1\rceil)-D$. Therefore, there exists another integer value in the range $\left[ \lceil k_1 \rceil, t-|s'-s|- \lceil k1 \rceil \right]$ where we need 
to \emph{decrease} the area of $A(k)$ in order to match exactly $D$. This is given exactly by $k_2 = A(\lceil k_1 \rceil)-D$, which is a valid integer value since both 
$A(\lceil k_1 \rceil)$ and $D$ are integers. 
\end{itemize}

We can describe such a trajectory in a compact way by a sequence of pairs $(a,b)$.
\begin{itemize} 
	\item When $s\leq s'$:
	\begin{enumerate}
		\item In the first case (if $A(k)$ is exactly $\delta$), we have $(+1,s'-s+k_1), (0,t-s'+s-2k_1), (-1, k_1)$
		\item Otherwise, we have $(+1,s'-s+k_1), (0,t-s'+s-2k_1-k_2), (-1,1), (0,k_2), (-1, k_1-1)$
	\end{enumerate}
	\item When $s>s'$:
	\begin{enumerate}
		\item In the first case (if $A(k)$ is exactly $\delta$), we have $(+1,k_1), (0,t-s+s'-2k_1), (-1, s-s'+k_1)$
		\item Otherwise, we have $(+1,k_1), (0,t-s+s'-2k_1-k_2), (-1,1), (0,k_2), (-1, s-s'+k_1-1)$
	\end{enumerate}
\end{itemize} 

\section{Detailed algorithms from Section~\ref{sec:multi-point}}
\label{sec:algos}

\begin{algorithm}[!htb]
	\caption{\MT algorithm}
	\label{algo:dynprog}
	\begin{algorithmic} 
          \State $\texttt{cost}[1,c_1] \gets 0$
          \State $\texttt{cost}[i,c] \gets \infty$ for all $i\ne 1$ and $c\in C_i$
    \For{$i \gets 2$ to $n$}
      \ForAll{$c' \in \mathcal{C}_{i-1}$}
        \ForAll{$c \in \mathcal{C}_{i}$}
          \State $\texttt{bc} \gets$ \BC($ c', c$)
          \If  {$\texttt{cost}(i, c) > \texttt{cost}(i-1,c') + \texttt{bc}$}
          \State $\texttt{cost}(i, c) \gets \texttt{cost}(i-1,c') + \texttt{bc} $
          \State $\texttt{pred}(i, c) \gets c'$
          \EndIf
        \EndFor
      \EndFor
    \EndFor
    \State \textbf{Return} $(\texttt{cost},\ \texttt{pred})$ 
	\end{algorithmic}
\end{algorithm}

\begin{algorithm}[!htb]
	\caption{Filtering the candidate configurations}
	\label{algo:filter}
	\begin{algorithmic} 
    \For{$i \gets 2$ to $n-1$}
      \ForAll{$c_i \in \mathcal{C}_i$}
        \State $\texttt{bc}_1 \gets$ \BC($c_0, c_i$)
        \State $\texttt{bc}_2 \gets$ \BC($c_i, c_n$)
        \If{$\texttt{bc}_1 + \texttt{bc}_2 > S$}
          \State Remove $c_i$ from $\mathcal{C}_i$
        \EndIf
      \EndFor
      \EndFor
      \State \Return the sets $\{\C_i\}$
	\end{algorithmic}
\end{algorithm}

\section{Bounding speed is not local}\label{sec:unbounded}

In the following appendix, we provide a family of worst-case instances where the speed are constantly unbounded.

First of all, let us fix some definition.

\begin{definition}[Optimal trajectory]
Denoting with $c(T)$ the total number of vectors used in a trajectory $T$, a trajectory is optimal if and only if 
\[c^* = \argmin_{\text{all trajectories T}} c(T)\] 
\end{definition}

In this depicted scenario, we are going to prove that, for a fixed value of $\delta$, there exists a number of points $n_0$ such that the optimal trajectory cost $c^*$ ``goes out'' of the bounding box.

Now that we have illustrated a minimal example, we can provide a further definition

The first step in our proof will be to characterize exactly a NTU trajectory. First of all, let us observe that, if we have a NTU trajectory, 
then the only speed vector that our trajectory can achieve is located exactly between two cities endpoints (Figure \ref{fig:vonly}).

\begin{figure}[!htb]
    \centering
    \vspace{0.2cm}
    \begin{tikzpicture}[thick,scale=3, every node/.style={scale=0.3, circle, fill,  
      minimum size=1pt}]

    \path (1,0.8) node (b) {};
    \path (2,0.7) node (c) {};
    \path (3,0.6) node (d) {};
    

    \tikzstyle{every node}=[font=\small]
    \path (b) node[red, above] (bb){$b$};
    \path (c) node[red, above] (cc){$c$};
    \path (d) node[red, above] (dd){$d$};

    \draw (c) edge[black,opacity=.8,-stealth] node[midway, sloped, above]{$v$} (d);

 
    \end{tikzpicture}
    \caption{The only admissible vector $v$ with an $x$ speed component of exactly $\delta$}
    \label{fig:vonly}
\end{figure}
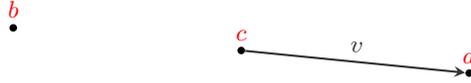

\begin{lemma} We assume the trajectory starts with speed $(v_x=0,x_y=0)$ then the vector $v$, with  $v_x=\delta$  and the two ends located at successive cities to be visited is the only vector that can be reached by a NTU trajectory (next lemma shows that this is indeed not possible).
  \label{lem:vunique}
\end{lemma}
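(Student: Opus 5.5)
We read the statement as follows: in a non-turning (NTU) trajectory on the instance of Figure~\ref{fig:0stexample}, any configuration whose $x$-speed equals $\delta$ must have velocity exactly $(\delta,-1)$, with tail at some city $(i\delta,-i)$ and head at the next city $((i+1)\delta,-(i+1))$. The plan is to pin down the $y$-motion first and then deduce the $x$-constraint.

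\textbf{Step 1 (the $y$-coordinate is monotone).} The trajectory starts at rest at $(0,0)$, and the cities have $y$-coordinates $0,-1,\dots,-n$ that are strictly decreasing by one per city. We argue that in an NTU trajectory that could ever reach $x$-speed $\delta$, the $y$-speed must stay in $\{-1,0\}$ while cities are being visited. Any excursion in $y$ costs additional steps, and we compare with the trajectory obtained by projecting onto the staircase.

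\textbf{Step 2 (at most one unit of $y$ per step at high speed).} Suppose a vector $v$ has $v_x=\delta$. By the visiting rule, the segment from $p-v$ to $p$ advances $\delta$ units in $x$. Non-turning means the $x$-progress is monotone, so this segment spans the $x$-gap between two consecutive cities.

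\textbf{Step 3 (integrality forces the endpoints onto cities).} We must not skip a city, meaning every city in that $x$-range must lie on the segment. In the range $[a,a+\delta]$ there is a city at some abscissa $i\delta$. If $a\neq i\delta$, the open interval contains one city at its interior point, at height $-i$. For the segment to pass through the integer point $(i\delta,-i)$ with slope $v_y/\delta$ and integer endpoints, we need $(i\delta-a)v_y/\delta$ to be an integer. Since $0<i\delta-a<\delta$, this forces $v_y=0$, or $v_y$ a multiple of $\delta/\gcd$. Small $|v_y|$ is incompatible with this unless $v_y=0$. But $v_y=0$ makes the vehicle's height equal to $-i$ across the whole segment. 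The next city lies at height $-(i+1)$ and only $\delta-(i\delta-a)<\delta$ further in $x$, with a lower speed-change budget. We will argue that this is then unreachable without slowing in $x$ below a threshold, or else it skips the city. Hence $a=i\delta$, and the segment joins $(i\delta,-i)$ to $((i+1)\delta,-(i+1))$, giving $v=(\delta,-1)$.

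\textbf{Main obstacle.} The obstacle is Step 3's exclusion of the off-city positions together with large $|v_y|$. We plan to use the bound $|v_y|\le$ (number of steps elapsed). This holds because $y$ starts at speed $0$, and in an NTU trajectory the $x$-speed also grows by at most one per step. Reaching $v_x=\delta$ therefore takes at least $\delta$ steps, and the $y$-displacement is constrained by the visits to $\delta$-spaced cities. Assuming $\delta\ge 7$, this leaves only the city-to-city vector, as the statement claims; the following lemma (Lemma~\ref{lem:impossibility}) then shows that even this vector cannot actually be reached.
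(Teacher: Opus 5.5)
Two of your ingredients are correct. The first is the monotonicity observation: a city whose abscissa lies strictly inside the $x$-range of $v$ can only be visited by $v$ itself, because $x$ never decreases. The second is the integrality test $m v_y\equiv 0 \pmod{\delta}$, with $m=i\delta-a\in(0,\delta)$, which disposes of $v_y=\pm1$ at once. But the two hard cases are left open.

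\textbf{The case $|v_y|\ge 2$.} Integrality does not exclude it. For $\delta=8$, the move $(8,-2)$ from $(8i-4,-i+1)$ to $(8i+4,-i-1)$ passes through the city $(8i,-i)$ at its midpoint. Your proposed remedy, $|v_y|\le$ number of elapsed steps, is vacuous, since at least $\delta$ steps have elapsed. What is missing is the overshoot argument the paper relies on. The next move has $x$-speed at least $\delta-1\ge m$, so it already reaches the abscissa $(i+1)\delta$ of the next city. Its $y$-speed differs from $v_y$ by at most one. A short computation then shows it is strictly above height $-(i+1)$ there if $v_y\ge 1$, and strictly below if $v_y\le -2$. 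A non-turning trajectory can never return to that abscissa.

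\textbf{The case $v_y=0$.} This also fails as sketched; note that the next city lies $m$, not $\delta-m$, beyond the head of $v$. Looking forward only forces the next move to be $(m,-1)$, i.e.\ $m=\delta-1$, and this really occurs. The move $(\delta,0)$ from $((i-1)\delta+1,-i)$ to $(i\delta+1,-i)$, followed by $(\delta-1,-1)$, lands exactly on $((i+1)\delta,-(i+1))$. You must also look backward, as the paper does. The preceding move has $x$-speed at least $\delta-1$, so it must pass through city $i-1$, which lies $\delta-m$ before the tail of $v$ and one unit higher. With $y$-speed in $\{-1,0,1\}$, this forces that move to be $(\delta-m,-1)$, hence $m=1$. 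Since $m=1$ and $m=\delta-1$ are incompatible for $\delta\ge 7$, the case closes; either look alone leaves one offset open.

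\textbf{The aligned case $a=i\delta$.} You should justify that the tail and head of $v$ are the cities themselves. The adjacent moves have positive $x$-speed, so these are the only trajectory points on the abscissae $i\delta$ and $(i+1)\delta$.

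\textbf{Step 1.} Drop it. Arguing that $y$-excursions ``cost additional steps'' is an optimality argument. The lemma, however, is a reachability statement about every NTU trajectory; that is how Lemma~\ref{lem:impossibility} and Theorem~\ref{thm:ntvsitu} use it. Moreover, at low $x$-speed a city can indeed be visited with $y$-speed $-2$, e.g.\ by the move $(1,-2)$ from $(\delta-1,1)$ to $(\delta,-1)$.
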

\begin{proof}
Let us assume that there exists a vector along an NTU trajectory with x-axis speed $v_x=\delta$ and denote $v$ the first one appearing in the trajectory and passing through a city.  Figure \ref{fig:vonly} shows such a particular vector $v$ where the two end points are located at cities to be visited (the tail does not  visit city $c$ by definition). We start by showing that this configuration is the only one possible (and show later that it is not reachable).

Let us assume the vector $v$ passes through the city $d$ at an interior point. We write the vector $v=(v_x,v_y)$ and decompose $v_x= v_x'+v_x''$ where $v_x'$ is the $x$-axis length of the vector before the point $d$ to be visited and $v_x''$ the distance after, and similarly $v_y=v_y'+v_y''$.
Because $v$ visits the city at an interior point it holds that $v_x',v_x''\ge 1$.

First, $|v_y| \ge 2$ may not be. Indeed, if it is the cases more than two moves are needed to eventually reach the next point $d$. Assume for instance $v_y\le -2$, the heigth (y-axis) of the head of $v$ must be at the same height of the next point $d$ or lower. Hence the next move is with a y-axis speed bounded by $-1$ leading to an height one unit lower than  the height of $d$ and then at least another move is needed to reach the height of $d$. The total $x$-axis distance travelled by the three moves (the one with $v$ and the two subsequent) is at least $v_x''+(\delta-1)+(\delta-2)\ge 2(\delta-1)$ which is larger than $\delta$ if $\delta\ge 2$. In the compementary case where $v_y\ge -2$ the head of $v$ is now at least $2$ units above the height of the next point $d$ and the same reasonning follows.

Very similarly, when $v_y=1$ the head of $v$ is at least $1$ unit above the height of $e$ and it takes at least two moves to get back to the height of $e$. Now, the second move can reach $e$ but because the slope is $-1$ the head of the speed vector must touch $e$. Hence, the $x$-axis traveled is at least $1+(\delta-1)+(\delta-2)=2(\delta-1)$ and we conclude as in the case when $|v_y| \ge 2$.

If $v_y=0$ it is the move previous that is not possible. Indeed, we call and denote $vp=(vp_x,vp_y)$ the speed vector of the previous move.  we know that $vp_x\ge\delta-1$ and because $v_x'\ge 1$ the total $x$-axis traveled distance is larger than $(\delta-1)+1=\delta$ and $vp$ must pass by  $b$ but this is not possible with $vp_y=0$ or $vp_y=-1$ as we are constrained.

Finally it must be that $v_y=-1$ and this is only possible if the two ends of $v$ are $c$ and $d$.
\end{proof}

Now we can show the proof of Lemma \ref{lem:impossibility}, i.e. that the vector $v$ can never be attained by a NTU trajectory.

\begin{proof}[Proof of Lemma \ref{lem:impossibility}]
We will show it by contradiction, by allowing the only vector $v$ of speed $\delta$ and show that there is no way to reach such vector without turning around.\\
By Lemma \ref{lem:vunique}, we can restrict ourselves to considering only the vector $v$. Let us assume that $v$ (Figure \ref{fig:vonly}) is the first vector with $x$-speed $\delta$ achieved in NTU trajectory. Now, 
in order to reach $v$, there must exists a previous vector $v_p$ with $x$-speed component $v_{p,x}=\delta-1$, that match $v$ according to the racetrack constraints. Figure \ref{fig:vpredfirst} highlights all the valid predecessors of $v$

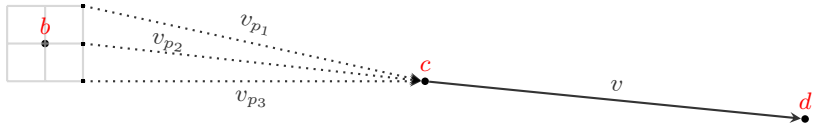
\begin{figure}[!htb]
    \centering
    \begin{tikzpicture}[thick,xscale=5,yscale=5, every node/.style={scale=0.3, circle, fill,  
      minimum size=1pt}]

    \path (1,0.8) node (b) {};
    \path (2,0.7) node (c) {};
    \path (3,0.6) node (d) {};

    \draw[gray!50,step=0.1cm,opacity=.6]   (0.9,0.7) grid (1.1, 0.9);


    \tikzstyle{every node}=[font=\small]
    \path (b) node[red, above] (bb){$b$};
    \path (c) node[red, above] (cc){$c$};
    \path (d) node[red, above] (dd){$d$};

    \draw (c) edge[black,opacity=.8,-stealth] node[midway, sloped, above]{$v$} (d);

    \draw ($ (b) + (0.1,0.1) $) edge[dotted,black,opacity=.8,-stealth] node[midway, sloped, above]{$v_{p_1}$} (c);
    \draw ($ (b) + (0.1,0) $) edge[dotted,black,opacity=.8,-stealth] node[near start, yshift=-0.1cm, sloped, above]{$v_{p_2}$} (c);
    \draw ($ (b) + (0.1,-0.1) $) edge[dotted,black,opacity=.8,-stealth] node[midway, sloped, below]{$v_{p_3}$} (c);

    \tikzstyle{every node}=[scale=0.2, fill,  
      minimum size=1pt]
    \node at ($ (b) + (0.1,0.1) $){};
    \node at ($ (b) + (0.1,0) $){};
    \node at ($ (b) + (0.1,-0.1) $){};

    \end{tikzpicture}
    \caption{Valid predecessors of $v$}
    \label{fig:vpredfirst}
\end{figure}

Let us immediately rule out the only possible choice: $v_{p_2}$. We show that $v_{p_1}$ and $v_{p_3}$ are not valid choices.
\begin{itemize}
\item Suppose that we choose $v_{p_1}$. This vector has a $x$-speed component of $\delta-1$ and a $y$-speed component of $-2$. Thus, there exists no vector that can join $v_{p_1}$ and passes through $b$ simultaneously. To see this, consider the tail coordinates of this vector as $(b_x+1,b_y+1)$. The only valid vectors $v''$ that passes through point $b$ and arrive in $(b_x+1,b_y+1)$ must have a $v''_x=v''_y>0$. Due to our model constraints, there exists no vector $v''$ that can join $v_{p_1}$ passing through $b$. Similarly as before then, the only way to use $v_{p_1}$ is by passing through $b$, and then turn around in order to reach $v_{p_1}$, which contradicts our definition of NTU trajectory.
\item Suppose that we choose $v_{p_3}$. This vector has a $x$-speed component of $\delta-1$ and a $y$-speed component of $0$. Similarly, we can show that there exists no vector that can join $v_{p_3}$ and passes through $b$ simultaneously. To see this, consider again the tail coordinates of this vector as $(b_x+1,b_y-1)$. The only valid vectors $v''$ that passes through point $b$ and arrive in $(b_x+1,b_y-1)$ must have a $v''_x=-v''_y$ with $v''_x > 0$. The only valid vector (under racetrack constraints) that satisfies both these conditions is the following vector $v_u = (1,-1)$. However, this is valid if and only if $\delta-2 = 1 \implies \delta=3$. Thus, for any $\delta>3$, there exists no vector $v''$ that can join $v_{p_3}$ passing through $b$. Again, the only way to use $v_{p_3}$ is by passing through $b$, and then turn around in order to reach $v_{p_3}$, which contradicts our NTU trajectory. 
\end{itemize}

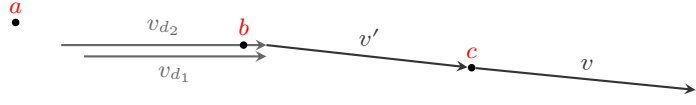
\begin{figure}[!htb]
    \centering
    \begin{tikzpicture}[thick,xscale=3,yscale=3, every node/.style={scale=0.3, circle, fill,  
      minimum size=1pt}]

    \path (0,0.9) node (a) {};
    \path (1,0.8) node (b) {};
    \path (2,0.7) node (c) {};
    \path (3,0.6) node (d) {};



    \tikzstyle{every node}=[font=\small]
    \path (a) node[red, above] (aa){$a$};
    \path (b) node[red, above] (bb){$b$};
    \path (c) node[red, above] (cc){$c$};


    \draw ($ (b) + (0.1,0) $) edge[black,opacity=.8,-stealth] node[midway, sloped, above]{$v'$} (c);
    \draw (c) edge[black,opacity=.8,-stealth] node[midway, sloped, above]{$v$} (d);

    \draw ($ (a) + (0.3,-0.15) $) edge[black,opacity=.6,-stealth] node[midway, sloped, below]{$v_{d_1}$} ($ (b) + (0.1,-0.05) $);
    \draw ($ (a) + (0.2,-0.1) $) edge[black,opacity=.6,-stealth] node[midway, sloped, above]{$v_{d_2}$} ($ (b) + (0.1,0) $);

    \end{tikzpicture}
    \caption{Valid predecessors of $v'$ (drawn on two different $y$-levels for a matter of visualization)}
    \label{fig:vpredsecond}
\end{figure}

We are left with two possible choices: $v_{d_1}$ with $x$-speed component $\delta-2$ or $v_{d_2}$ with $x$-speed component $\delta-1$ (drawn on different $y$ position for visualization matters).\\
By a similar argument as $v_{p_3}$, we can rule out both of them and obtain the desired conclusion. Consider first $v_{d_2}$. The tail coordinates of this vector are $(a_x+2, a_y-1)$. This implies that the only valid predecessors that pass through $a$ and joins $v'$ (which has a $y$-speed component of $0$) is the vector $v'''_x=2, v'''_y=-1$. This vector is valid $\iff$ $\delta-3 = 2 \implies \delta=5$, which is not our case.\\
Analogously for $v_{d_1}$, we have tail coordinates $(a_x+3, a_y-1)$. The only valid predecessor that pass through $a$ and joins $v'$ (which has a $y$-speed component of $0$) is the vector $v'''_x=3, v'''_y=-1$. This vector is valid $\iff$ $\delta-3 = 3 \implies \delta=6$.\\
Hence, we reached out our contradiction: for any $\delta>6$, there exists no predecessor sequence that will successfully reach the vector $v$ of $x$-speed $\delta$ without the need of turning around.
\end{proof}

Now we are ready to characterize $c(T)$ for a TU trajectory $T$. 

\begin{proof}[Proof of Theorem \ref{thm:ntvsitu}]
To prove the statement, we build an TU trajectory and we compare it against all possible NTU trajectories. Let us characterize these two trajectories. First of all, let us observe that, by Lemma \ref{lem:impossibility}, a 
NTU trajectory cannot reach a speed of $\delta$. It is easy to see that any NTU trajectory will have the following lower bound on the cost
\begin{align*}
c_\text{NTU}^* &= \argmin_\text{all NTU trajectories T} c(\text{T}) \\
 &\geq \underbrace{2 (\delta-1)}_\text{beginning and ending ``momentum'' up to $\delta-1$} + \underbrace{ \frac{(n\delta-\delta(\delta-1))}{\delta-1} }_\text{reamining distance with vectors of $\delta-1$ length} \\ 
 &= \frac{\delta}{\delta-1} n + \delta - 2
\end{align*}

Now let us characterize an arbitrary TU($k\delta$) trajectory $T'$ restricted to reaching a $k\delta$ speed. By Lemma \ref{lem:impossibility}, the only way to reach speed $k\delta$ is to reverse the $x$-direction such that there is 
``enough space'' to reach a speed of $k\delta-1$. This gives also the number of movements to adapt to the correct slope in the $y$-axis. 

Let us denote the point $p$ as the first coordinate s.t. the car has enough distance to accelerate later to speed $k\delta-1$. One can immediately see that the necessary distance to cover is at least $k\delta(k\delta-1)/2$ (this 
derives from the fact that, accelerating continuously up to $k\delta-1$ requires covering such a distance). Let us depict this situation in Figure~\ref{fig:scenariodetailed}.

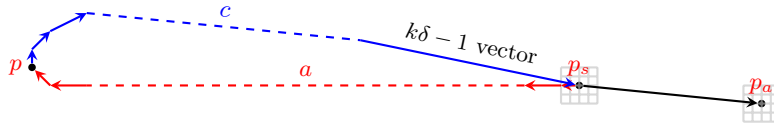
\begin{figure}[!htb]
    \centering
    \begin{tikzpicture}[thick,scale=2.4, every node/.style={scale=0.3, circle, fill,  
      minimum size=1pt}]
    \path (2,1) node (a) {};
		\path (3,0.9) node (b) {};

		\path (-1,1.1) node (p) {};
    

    \draw[gray!50,step=0.05cm,opacity=.6]   (1.9,0.9) grid (2.1, 1.1);
    \draw[gray!50,step=0.05cm,opacity=.6]   (2.9,0.8) grid (3.1, 1);

    \tikzstyle{every node}=[font=\small]

    \draw[red,-stealth] (a) -- (1.9,1);
    \draw[red,-stealth] (1.9,1) -- (1.7,1);
    \draw[red,dashed] (1.7,1) -- node[midway, sloped, above]{$a$}  (-0.7,1);
    \draw[red,-stealth] (-0.7,1) -- (-0.9,1);
    \draw[red,-stealth] (-0.9,1) -- (p);

		\draw[blue,-stealth] (p) -- (-1,1.2);
    \draw[blue,-stealth] (-1,1.2) -- (-0.9,1.3);
    \draw[blue,-stealth] (-0.9,1.3) -- (-0.7,1.4);
    \draw[blue,dashed] (-0.7,1.4) -- node[blue, midway, sloped, above]{$c$} (0.8,1.25);

    \draw[blue,-stealth] (0.8,1.25) -- node[black, above, midway, sloped]{\footnotesize$k\delta-1$ vector} (a);

    \draw[black,-stealth] (a) -- (b);
    \tikzstyle{every node}=[font=\small]
    \path (a) node[red, above] (aname){$p_s$};
    \path (p) node[red, left] (pname){$p$};
    \path (b) node[red, above] (bname){$p_a$};

    \end{tikzpicture}
    \caption{The complete first part of the TU trajectory $T'$. The red part is denoted as $a$, and the blue one as $c$. Notice that the black vector between $p_s$ and $p_a$ traverses $k$ points simultaneously.}
    \label{fig:scenariodetailed}
\end{figure}

The TU trajectory ``takes advantage'' of the fact that it can directly reverse its direction, by going directly back towards point $p$.

We can quantify the distance covered by this portion of the trajectory, up to the point $p$ and back towards point $p_s$. Let us denote the distance $\alpha(k\delta) = k\delta(k\delta-1)/2$. We can decompose the $T'$ in the following pieces:

\begin{itemize}
\item $c(T^{'}_\text{1})=\textcolor{red}{a}$, path from the initial point $p_s$ until $p'$: in this section, we need to cover a distance of $\alpha(k\delta)$ by accelerating as much as possible and then decelerate until stopping. In this case, 
we pay two times the maximum speed $\sqrt{\alpha(k\delta)} + 1$ attanaible in this portion of distance (we don't need to cross any point, therefore we just use geometrical facts).
\item $c(T^{'}_\text{2})=\textcolor{blue}{c}$, path from the point $p$ until the initial point $p_s$ again: in this section, we accelerate as much as possible to attain the vector of speed $k\delta-1$, paying exactly $k\delta-1$ vectors.
\item $c(T^{'}_\text{3})$, path from the $p_s$ until the end of the points: in this middle section we traverse all the points $k$ by $k$ using vectors of speed $k\delta$, for a total distance of $n\delta$.
\end{itemize}

We can {\itshape quantify} all these portions: the total distance up to point $p$ is indeed $\alpha(\delta)$.

The final part is symmetric to the first phase. We will end up in the last (or possibly after) city, and then we replicate first $T^{'}_\text{2}$ to decelerate, and then $T^{'}_\text{1}$ to reach the last city. Putting everything together, 
we can express the cost for such $T'$:

\begin{align*}
  c(T') &= 2 \left[ 2 \lceil \sqrt{\frac{k\delta(k\delta-1)}{2}} \rceil + 2 + k\delta -1 \right] + \frac{n \delta}{k\delta} \\
    &= 4 \lceil \sqrt{\frac{k\delta(k\delta-1)}{2}} \rceil + 2 + 2k\delta + 2 + \frac{n}{k} \\
    &= \frac{n}{k} + 2k\delta + 4 \lceil \sqrt{\frac{k\delta(k\delta-1)}{2}} \rceil + 2
\end{align*}

Comparing now this trajectory with the previous lower bound on the NTU

\begin{align*}
c(T') &< c_\text{NTU}^* \\
\frac{n}{k} + 2k\delta + 4 \lceil \sqrt{\frac{k\delta(k\delta-1)}{2}} \rceil + 2 &< \frac{\delta}{\delta-1} n + \delta - 2 \\
4 \lceil \sqrt{\frac{k\delta(k\delta-1)}{2}} \rceil + 4 + \delta(2k-1) &< n \left( \frac{\delta}{\delta-1} - \frac{1}{k} \right) \\
n &> \left( 4 \lceil \sqrt{\frac{k\delta(k\delta-1)}{2}} \rceil + 4 + \delta(2k-1) \right) \left( \frac{k\delta - \delta+1}{k(\delta-1)} \right) 
\end{align*}
We can see that the right hand side of the inequality is a quantity upper bounded by $O(k^2\delta^2)$. Indeed, we can immediately see that the term $\frac{k\delta - \delta+1}{k(\delta-1)} \leq k\delta - \delta+1 = O(k\delta)$. For the first term, we have 
similarly that $\sqrt{\frac{k\delta(k\delta-1)}{2}} \leq \sqrt{k^2\delta^2} = O(k \delta)$ plus some constant terms. Therefore, we have that $\text{RHS} \leq O(k^2\delta^2)$, hence for any fixed $\delta \ge 7$ and a fixed integer $k > 0$, $\exists \, n_0 = C k^2 \delta^2$ 
for an appropriate constant $C>0$, such that, $\forall\, n \ge n_0$, the inequality $c(T') < c_\text{NTU}^*$ always holds.
\end{proof}

\end{document}